\newcommand{\bC}{{\mathbb C}}
\newcommand{\N}{{\mathbb N}}
\newcommand{\R}{{\mathbb R}}
\newcommand{\cA}{{\mathcal A}}
\newcommand{\cE}{{\mathcal E}}
\newcommand{\cF}{{\mathcal F}}
\newcommand{\cL}{{\mathcal L}}
\newcommand{\cH}{{\mathcal H}}
\newcommand{\cM}{{\mathcal M}}
\newcommand{\cN}{{\mathcal N}}
\newcommand{\cR}{{\mathcal R}}
\newcommand{\cO}{{\mathcal O}}
\newcommand{\cV}{{\mathcal V}}
\newcommand{\cD}{{\mathcal D}}
\newcommand{\cC}{{\mathcal C}}
\newcommand{\cS}{{\mathcal S}}
\newcommand{\cW}{{\mathcal W}}
\newcommand{\cI}{{\mathcal I}}
\newcommand{\SU}{\mathrm{SU}}
\newcommand{\ISU}{\mathrm{ISU}}
\newcommand{\SL}{\mathrm{SL}}
\newcommand{\SO}{\mathrm{SO}}
\newcommand{\bU}{\mathrm{U}}
\newcommand{\be}{\begin{equation}}
\newcommand{\ee}{\end{equation}}
\newcommand{\beq}{\begin{eqnarray}}
\newcommand{\eeq}{\end{eqnarray}}
\newcommand{\bes}{\begin{eqnarray}}
\newcommand{\ees}{\end{eqnarray}}
\newcommand{\mat} [2] {\left ( \begin{array}{#1}#2\end{array} \right ) }
\renewcommand{\u}{{\mathfrak{u}}}
\newcommand{\su}{{\mathfrak{su}}}
\renewcommand{\sl}{{\mathfrak{sl}}}
\newcommand{\so}{{\mathfrak{so}}}
\newcommand{\la}{\langle}
\newcommand{\ra}{\rangle}
\newcommand{\tr}{{\mathrm{Tr}}}
\newcommand{\f}{\frac}
\def\nn{\nonumber}
\def\pp{\partial}
\def\rd{\mathrm{d}}
\newcommand{\id}{\mathbb{I}}
\def\ie{\textit{i.e.}}
\def\eg{\textit{e.g.}}
 \newcommand{\comment}[1]{}  
\def\tz{\tilde{z}}
\def\tw{\tilde{w}}
\def\txi{\tilde{\xi}}
\def\bxi{\bar{\xi}}
\def\teta{\tilde{\eta}}
\newcommand{\ba}{\begin{array}}
\newcommand{\ea}{\end{array}}
\def\kz{\left|z\right>}
\def\kzb{\left|\bar{z}\right>}
\def\bz{\left<z\right|}
\def\bzb{\left<\bar{z}\right|}
\def\zb{\bar{z}}
\def\zt{\tilde{z}}
\def\ztb{\bar{\tilde{z}}}
\def\kzd{\left|z\right]}
\def\bzd{\left[z\right|}
\def\tX{\tilde{X}}
\def\Re{\text{Re}}
\def\I{\text{Im}}
\def\z{\mathbf{z}}
\def\r{\mathbf{r}}
\def\HO{\text{HO}}
\def\tet{\text{tet}}
\def\sl{\text{sl}}
\def\cohe{\text{cohe}}
\def\phys{\text{phys}}
\def\Ad{\text{Ad}}
\def\tr{\text{Tr}}
\def\lp{\ell_{\text{p}}}
\def\G{\text{G}}
\def\B{\text{B}}
\def\e{{\bf e}}
\def\A{{\bf A}}
\newcommand{\hmm}{\textcolor{black}}
\newcommand{\Mat} [2] {\left\{ \begin{array}{#1}#2\end{array} \right\} }
\newcommand{\ri}{\MakeUppercase{\romannumeral 1}}
\newcommand{\rii}{\MakeUppercase{\romannumeral 2}}
\newcommand{\riii}{\MakeUppercase{\romannumeral3}}
\newtheorem{theorem}{Theorem}[section]
\newtheorem{lemma}[theorem]{Lemma}
\newtheorem{prop}[theorem]{Proposition}
\newtheorem{coro}[theorem]{Corollary}
\begin{document}

\title{3D Quantum Gravity from Holomorphic Blocks}

\author{{\bf Etera R. Livine}}\email{etera.livine@ens-lyon.fr}
\affiliation{Universit\'e de Lyon, ENS de Lyon, CNRS, Laboratoire de Physique LPENSL, 69007 Lyon, France}

\author{{\bf Qiaoyin Pan}}\email{qpan@fau.edu}
\affiliation{Department of Physics, Florida Atlantic University, 777 Glades Road, Boca Raton, FL 33431, USA}
\affiliation{Perimeter Institute, 31 Caroline St North, Waterloo N2L 2Y5, Ontario, Canada}
\affiliation{Department of Applied Mathematics, University of Waterloo, Waterloo, Ontario, Canada}

\date{\today}

\begin{abstract}

Three-dimensional gravity is a topological field theory, which can be quantized as the Ponzano-Regge state-sum model built from the $\{3nj\}$-symbols of the recoupling of the $\SU(2)$ representations, in which spins are interpreted as quantized edge lengths in Planck units. It describes the flat spacetime as gluing of three-dimensional cells with a fixed boundary metric encoding length scale. In this paper, we revisit the Ponzano-Regge model formulated in terms of spinors and rewrite the quantum geometry of 3D cells with holomorphic recoupling symbols. These symbols, known as Schwinger's generating function for the $\{6j\}$-symbols, are simply the squared inverse of the partition function of the 2D Ising model living on the boundary of the 3D cells.
They can furthermore be interpreted, in their critical regime,  as scale-invariant basic elements of geometry. 
We show how to glue them together into a discrete topological quantum field theory.
This reformulation of the path integral for 3D quantum gravity, with a rich pole structure of the elementary building blocks, opens a new door toward the study of phase transitions and continuum limits in 3D quantum gravity, and offers a new twist on the construction of a duality between 3D quantum gravity and a 2d conformal theory.

\end{abstract}

\maketitle
\tableofcontents

\section*{Introduction}

Spinfoam models (see $\eg$ \cite{Baez:1999sr,Livine:2010zx,Perez:2012wv} for reviews) provide a rigorous, background-independent and non-perturbative path integral quantization of gravitational theories based on discrete topological quantum field theories and state-sum models. They define probability amplitudes for histories of quantum states of geometry defined as entangled collections of discrete excitations.
They can be considered as a quantum version of Regge calculus for discretized general relativity \cite{Regge:2000wu}. They provide transition amplitudes for spin network states in loop quantum gravity \cite{Reisenberger:1996pu,Baez:1999sr,Engle:2007wy}. They also provide the triangulation weights in sum-over-random-geometries approaches to quantum gravity, such as matrix models, tensor models and group field theories \cite{Reisenberger:2000fy,Reisenberger:2000zc}. Finally, they are a natural mathematical framework for defining topological invariants $\eg$ \cite{Barrett:2008wh}, and have been shown to be related to non-commutative geometry, $\eg$ \cite{Freidel:2005me}, to lattice field theories with quantum group gauge symmetries, $\eg$ \cite{Bonzom:2014wva,Dupuis:2020ndx}, and to higher gauge theories \cite{Girelli:2007tt,Baratin:2014era,Asante:2019lki,Girelli:2021zmt}.

Retrospectively, the first spinfoam model was constructed by Ponzano and Regge \cite{Ponzano:1968se} and defines a discrete topologically-invariant path integral for three space-time dimensional gravity in Euclidean signature with vanishing cosmological constant \cite{Ooguri:1991ni,Freidel:2004vi,Barrett:2008wh}.
Let us underline that this is not a Wick-rotated path integral but it is truly the quantum theory of a positive signature metric with probability amplitudes in $\exp[iS_{grav}]$.
{\it A posteriori}, the Ponzano-Regge model has been understood as the discretized path integral for 3D gravity written in terms of veirbein-connection variables as a topological BF theory with gauge group $\SU(2)$ \cite{Freidel:2004vi}.
There exists a Lorentzian version of this model based on the $\SU(1,1)$ gauge group \cite{Freidel:2000uq,Davids:2000kz,Freidel:2005bb,Girelli:2015ija}. One can also take into account a non-vanishing cosmological constant through a q-deformation of the gauge group \cite{Bonzom:2014bua,Bonzom:2014wva,Dupuis:2020ndx}. This yields the Turaev-Viro topological invariant when the quantum deformation parameter $q$ is a root of unity \cite{Turaev:1992hq}. Through this relation, the Ponzano-Regge state-sum has been understood as a special case of the Reshetikhin-Turaev invariants \cite{Freidel:2004nb} and thereby related to the quantization of 3D gravity as a Chern-Simons theory as advocated by Witten \cite{Witten:1988hc}.

\smallskip

The Ponzano-Regge model is constructed as a path integral over discrete 3D geometries. Considering a 3D piecewise linear cellular complex $\Delta$, which can be thought of as a  cellular decomposition of a Riemannian 3D manifold, one has a hierarchy of cells with dimensions between 0 and 3, which one dresses with algebraic data following the logic from algebraic topology. Then one builds a probability amplitude for the 3D geometry from the algebraic data, such that it is topologically invariant in the sense that it does not depend on the details of the 3D cellular complex but only on its topology (and on boundary data). So the hierarchy of the Ponzano-Regge state-sum is:
\begin{itemize}

\item {\it 0-cells (points)}:\\
This level is actually put aside by spinfoam models, which focus on geometrical structures with co-dimensions less than or equal to 2. This allows for conical singularities $\eg$ \cite{Crane:2001kf}, which can be appropriately controlled in sums over random discrete geometries $\eg$ \cite{Gurau:2010nd}.

\item {\it 1-cells (edges)}:\\
Edges have the topology of a basic segment, devoid of any winding information.
One dresses edges with irreducible representations of the Lie group $\SU(2)$. These are labelled by half-integers $j\in\f\N2$, usually referred to as {\it spins}. The Hilbert space $\cV_{j}$ carrying the representation of spin $j$ is of dimension $d_{j}=(2j+1)$ and its standard basis is given the spin basis labelled by the spin $j$ and the magnetic moment $m$ running from $-j$ to $+j$ by integer steps. \\
The spin $j_{e}$ carried by an edge $e$ gives its quantized length in Planck unit, $\ell_{e}=j_{e} \ell_{Planck}$. A state in $\cV_{j}$ is then interpreted as a quantum 3-vector of length $j$. This is the key to the geometrical interpretation of the Ponzano-Regge model.

\item {\it 2-cells (faces)}:\\
Faces are assumed to have the topology of a two-dimensional disk with a $\cS^{1}$ boundary.
They are dressed with {\it intertwiners}, that is $\SU(2)$-invariant states living in the tensor product of the spins living on the edges forming the face's boundary:
\be
\cI_{f}=\textrm{Inv}_{\SU(2)}\bigg{[}\bigotimes_{e\in\pp f}\cV_{j_{e}}\bigg{]}
\,.
\ee
In the case of a triangle, consisting of three edges carrying spins $j_{1},j_{2},j_{3}$, this intertwiner state is one-dimensional if the spins satisfy the triangular inequalities and 0-dimensional in all other cases. When it is non-trivial, the unique intertwiner state is given by the Clebsh-Gordan coefficients, or equivalently the Wigner $\{3j\}$-symbols, encoding the recouplings of the three spins.
In general, intertwiner states are interpreted as the quantum version of polygons.

\item {\it 3-cells (elementary 3D regions or bubbles)}:\\
Elementary 3-cells $\sigma$ have the topology of a 3-ball. Their boundary has the topology of a 2-sphere made of faces glued together along edges. We focus on the boundary of each 3-cell and think of them as bubbles. For each bubble, we usually introduce the dual graph $\Gamma=(\pp\sigma)_{1}^{*}$, defined formally as the 1-skeleton of the dual of the boundary 2-complex: each face is represented as a (dual) node and each edge is represented as a (dual) link linking two nodes. Each edge or link comes with its spin.  Each face or node comes with its intertwiner state. Such graph with representations on its links and intertwiners at its nodes, $\Gamma_{\{j_{l},\iota_{n}\}}$, is called a {\it spin network}, as illustrated on fig.\ref{fig:boundaryspinnet}.
\begin{figure}[h!]
\begin{tikzpicture}[one end extended/.style={shorten >=-#1},
 one end extended/.default=0.7cm]
\def\x{1.5};
	\coordinate (O) at (0,0);
	\coordinate (A) at ([shift=(-30:\x)]O);
	\coordinate (B) at ([shift=(-90:\x)]A);
	\coordinate (C) at ([shift=(-150:\x)]B);
	\coordinate (D) at ([shift=(-210:\x)]C);
	\coordinate (E) at ([shift=(90:\x)]D);
	\coordinate (F) at ([shift=(18:\x)]A);
	\coordinate (G) at ([shift=(-54:\x)]F);
	\coordinate (H) at ([shift=(-18:\x)]B);
	\coordinate (I) at ([shift=(-20:1.5*\x)]C);
	\coordinate (J) at ([shift=(-5:2*\x)]H);

\draw[very thick] (O) -- (A);
\draw[very thick] (A) -- (B);
\draw[very thick] (B) -- (C);
\draw[very thick] (C) -- (D);
\draw[very thick] (D) -- (E);
\draw[very thick] (E) -- (O);
\draw[very thick] (A) -- (F);
\draw[very thick] (F) -- (G);
\draw[very thick] (G) -- (H);
\draw[very thick] (H) -- (B);
\draw[very thick] (C) -- (I);
\draw[very thick] (H) -- (I);
\draw[very thick] (J) -- (G);
\draw[very thick] (H) -- (J);

\coordinate (o) at ([shift=(-90:\x)]O);
\coordinate (a) at ([shift=(0:1.62*\x)]o);
\coordinate (b) at ([shift=(-120:1.2*\x)]a);
\coordinate (c) at ([shift=(-34:1.1*\x)]a);

\draw[red,one end extended] (o) -- node[pos=1.3,right]{$j_1$} ($(O)!(o)!(E)$);
\draw[red,one end extended] (o) -- node[pos=1.3,left]{$j_2$} ($(O)!(o)!(A)$);
\draw[red,one end extended] (o) -- node[pos=1.3,left]{$j_5$} ($(C)!(o)!(D)$);
\draw[red,one end extended] (o) -- node[pos=1.3,above]{$j_6$} ($(D)!(o)!(E)$);
\draw[red,one end extended] (a) -- node[pos=1.3,right]{$j_7$} ($(A)!(a)!(F)$);
\draw[red,one end extended] (a) -- node[pos=1.3,above]{$j_8$} ($(F)!(a)!(G)$);
\draw[red,one end extended] (b) -- node[pos=1.3,right]{$j_{11}$} ($(H)!(b)!(I)$);
\draw[red,one end extended] (b) -- node[pos=1.3,left]{$j_{12}$} ($(C)!(b)!(I)$);
\draw[red,one end extended] (c) -- node[pos=1.3,right]{$j_{13}$} ($(G)!(c)!(J)$);
\draw[red,one end extended] (c) -- node[pos=2,right]{$j_{14}$} ($(H)!(c)!(J)$);

\draw[red] (o) -- node[pos=0.4,above]{$j_3$} (a);
\draw[red] (o) -- node[pos=0.5,left]{$j_4$} (b);
\draw[red] (a) -- node[pos=0.5,right]{$j_9$} (b);
\draw[red] (a) -- node[pos=0.5,above]{$j_{10}$} (c);

\draw[red] (o) node{$\bullet$} node[above]{\large $\iota_1$};
\draw[red] (a) node{$\bullet$} node[above left]{\large $\iota_2$};
\draw[red] (b) node{$\bullet$} node[right]{\large $\iota_3$};
\draw[red] (c) node{$\bullet$} node[right]{\large $\iota_4$};

\end{tikzpicture}
\caption{(A portion of) the cellular decomposition of a 2-sphere made of faces glued along edges ({\it in black and thick}) and its dual graph $\Gamma$ ({\it in red and thin}) made of links and nodes. Each link $l$ is dressed with a spin $j_l$ and each node $n$ is dressed with an intertwiner $\iota_n$, which together represent a spin network $\Gamma_{\{j_l,\iota_n\}}$.}
\label{fig:boundaryspinnet}
\end{figure}
We define the probability amplitude for the geometry of the 3-cell as the evaluation of its boundary spin network:
\be
\cA_{\sigma}[\{j_{e},\iota_{f}\}_{e,f\in\pp\sigma}]
=
\tr_{\{j_{e}\}_{e\in\pp\sigma}} \, \bigotimes_{f\in\pp\sigma} \iota_{f}
=
\tr_{\{j_{n}\}_{n\in\Gamma}} \, \bigotimes_{n\in\Gamma} \iota_{n}
=
\cE_{\Gamma}[\{j_{l},\iota_{n}\}_{l,n\in\Gamma}]
\,.
\ee
The trace $\tr$ here is a slightly abusive notation. It means gluing the intertwiner states using the inner product on each edge, in the tensor product Hilbert space $\bigotimes_{e\in\pp\sigma}\cV_{j_{e}}$.
In the case of a 3-simplex, or tetrahedron, the boundary graph consists of four nodes connected to each other through six links, as illustrated in fig.\ref{fig:spinnetwork}. The links carry six spins $j_{1,..,6}$ while the 3-valent nodes carry the corresponding unique intertwiner state numerically given by the Clebsh-Gordan coefficients. The resulting spin network evaluation is the celebrated Wigner's $\{6j\}$-symbol.
\\
In general, one could glue the intertwiner states by inserting $\SU(2)$ group elements $g_{l}$ (or even $\SL(2,\bC)$ group elements as in \cite{Costantino:2011gen}) along each link. This yields the spin network wave-function $\psi_{\{j_{l},\iota_{n}\}}^{\Gamma}\in\cC^{\infty}(\SU(2)^{\times E_{\sigma}})$ where $E_{\sigma}$ counts the number of edges on the bubble boundary (or equivalently the number of links in the boundary graph $\Gamma$). The  spin network evaluation $\cA_{\sigma}[\Gamma, \{j_{l},\iota_{n}\}]$ then truly is the evaluation of the spin network wave-function $\psi_{\{j_{l},I_{n}\}}^{\Gamma}$ on trivial group elements $g_{l}=\id$ in $\SU(2)$, reflecting the fact that physical states in pure 3D (quantum) gravity with vanishing cosmological constant have a flat curvature. The Ponzano-Regge model can indeed be understood as defining the projector onto the moduli space of flat $\SU(2)$ connections $\eg$ \cite{Ooguri:1991ib,Freidel:2005bb}.

\end{itemize}

At the end of the day, the probability amplitude of a 3D cellular complex $\Delta$ dressed with the algebraic data of spins on its 1-cells and intertwiners on its 2-cells is obtained by putting all those elementary building blocks together and straightforwardly computing the product of the probability amplitudes of each 3-cells together with appropriate weights for the edges and faces:
\be
\cA[\Delta, \{j_{e},\iota_{f}\}]
=
\prod_{e}(-1)^{2j_{e}}d_{j_{e}}\,
\prod_{f} (-1)^{\sum_{e\in\pp f}j_{e}}\,
\prod_{\sigma}\cA_{\sigma}[\{j_{e},\iota_{f}\}_{e,f\in\pp\sigma}]\,.
\label{eq:spinfoam_general}
\ee
The edge weight only depends on the spin carried by the edge and is simply the dimension of the corresponding representation up to a sign. The face weight is a mere parity factor. The bubble weight carries the non-trivial dynamical information of the model.
Such a structure with a cellular complex dressed with representations and intertwiner states and a probability amplitude defined as the product of local amplitude for each cell depending solely on the algebraic data it carries is called the {\it local spinfoam ansatz} for a path integral over discrete quantum geometries. It has been shown in \cite{Boulatov:1992vp,Reisenberger:2000fy} that they are Feynman diagram amplitudes of non-local non-commutative field theories, referred to as {\it group field theories} or tensorial group field theories (see \cite{Carrozza:2013oiy} on recent studies of those Feynman diagrams and the renormalization of such field theories).

A first important remark is that the Ponzano-Regge ansatz is topologically invariant. Indeed, let us consider the amplitude for a 3D cellular complex $\Delta$ obtained by summing over all possible algebraic data. More precisely, we allow $\Delta$ to have a 2D boundary, we keep the algebraic data on $\pp\Delta$ fixed while we sum over bulk spins and bulk intertwiners:
\be
\cA[\Delta, \{j_{e},\iota_{f}\}_{e,f\in\pp\Delta}]
=
\sum_{\{j_{e},\iota_{f}\}_{e,f\in\Delta^{\textrm{o}}}}\cA[\Delta,  \{j_{e},\iota_{f}\}_{e,f\in\pp\Delta},\{j_{e},I=\iota_{f}\}_{e,f\in\Delta^{\textrm{o}}}]\,,
\ee
where we have written $\Delta^{\textrm{o}}=\Delta\setminus\pp\Delta$ for the interior or bulk of $\Delta$. Under appropriate gauge-fixing\footnotemark, this amplitude can be shown to depend only on the boundary data and on the topology of $\Delta$ and to never depend on the details of the bulk cellular complex $\Delta^{\textrm{o}}$ \cite{Freidel:2004vi,Freidel:2005bb,Dowdall:2009eg,Goeller:2019zpz}.
\footnotetext{
The sum over bulk spins and bulk intertwiners is usually divergent, just as Feynman diagrams in quantum field theory.  It is possible to render those amplitudes finite by $q$-deforming $\SU(2)$ at root of unity, with $q=\exp(2i\pi/N+2)$ for an integer $N\in\N$. This gives the Turaev-Viro topological invariant \cite{Turaev:1992hq} and is interpreted as switching on a non-vanishing positive cosmological constant $\Lambda >0$. Even without quantum deforming the gauge group, one can still identify the translational gauge symmetry responsible for the divergences and gauge-fix them - typically by fixing the value of the spins on the edges belonging to a maximal tree in $\Delta^{\textrm{o}}$, in which case the gauge-fixed amplitudes never depend neither on the choice of the gauge-fixing tree nor on the bulk cellular complex $\Delta^{\textrm{o}}$ \cite{Freidel:2004vi,Barrett:2008wh,Bonzom:2010ar,Bonzom:2010zh}.
}
More precisely, the amplitude for a 3D cellular complex with the topology of a 3-ball remains the evaluation of its boundary spin network. For 3D cellular complexes with a non-trivial topology, one needs to evaluate the spin network wave-function on the values of the non-trivial holonomies along the non-contractible cycles of the bulk cellular complex or integrate over possible values \cite{Freidel:2005bb,Dowdall:2009eg,Goeller:2019zpz}.

The second important remark is that the Ponzano-Regge ansatz is locally holographic, in the sense that the probability amplitudes associated to bounded regions of 3D space entirely depend on their boundary data (and their topology).  The amplitude for each elementary 3-cell $\sigma$ depends by definition solely on its boundary data: the spins $j_{e}$ and intertwiners $\iota_{f}$ carried by the edges and faces on its boundary $e,f\in\pp\sigma$. It is important to stress that there is no new algebraic data associated to the 3-cells (no maps between intertwiner states as one could imagine).
The probability amplitude of the 3-cell is the evaluation of its spin network $\cA_{\sigma}[\{j_{e},\iota_{f}\}_{e,f\in\pp\sigma}]=\cE_{\Gamma}[\{j_{l},\iota_{n}\}_{l,n\in\Gamma}]$.
Moreover, this property is true also for non-elementary 3-cells, that is for every bounded 3D region. Indeed, as long as one considers a bounded 3D region $\cR$ with the topology of a 3-ball, the topological invariance property of the Ponzano-Regge amplitude, detailed above, implies that the amplitude $\cA[\cR, \{j_{e},\iota_{f}\}_{e,f\in\pp\cR}]$ is simply the evaluation of the boundary spin network on $\pp\cR$. This means that any bounded 3D region with the topology of a 3-ball behaves exactly as an elementary 3-cell and its probability amplitude always only depends on its boundary data and never on the details of its bulk cellular decomposition.
In this setting, the local holography principle is deeply interlaced with the topological invariance of the theory\footnotemark.
\footnotetext{
A non-topological invariant model could still be locally holographic if one introduces the possibility of a non-trivial renormalization flow under coarse-graining, meaning that the probability amplitude of a bounded region would still be an evaluation of the boundary spin network wave function but that evaluation would now depend on extra parameters reflecting the size of the region (and perhaps other basic coarse-grained observables of the bulk geometry). The key would be that there would be only a finite number of extra parameters and that this number would be the same for all regions.
}

\bigskip

In the present work, we propose to revisit the Ponzano-Regge model and write it in terms of coherent boundary states for each 3-cell instead of pure spin networks sharply peaked on lengths. Those coherent states will be peaked on both intrinsic geometries - the edge lengths - and extrinsic geometry - the dihedral angles between faces, which define a discrete measure of extrinsic curvature. This reformulation  has two important features:
\begin{itemize}

\item Using a coherent superposition of boundary spin networks, defined as an infinite series over the spins controlled by couplings dual to the spins, actually amounts to considering a generating function for the spin network evaluations. This is the same logic as for a simple quantum harmonic oscillator, in which matrix elements of an operator in the coherent state basis. That is 
\be
\la z|\hat{\cO}| \tz\ra=\sum_{n,m\in\N}\f{\bar{z}^{n}\tz^{m}}{\sqrt{n!m!}} \la n|\hat{\cO}| m\ra
\ee
can be understood as a generating function for the matrix elements $\la n|\hat{\cO}| m\ra$ controlled by the complex couplings $z$ and $\tz$.
Here, we focus on 3-valent boundary graphs, for which we don't need intertwiner labels, so that spin network evaluations $\cE_{\Gamma}[\{j_{l},\iota_{n}\}_{l,n\in\Gamma}]$ simply depend on the spins on the boundary graph links. We will simply write $\cE_{\Gamma}[\{j_{l}\}_{l\in\Gamma}]$. Then we define coherent spin network evaluations similarly as for the harmonic oscillator as:
\be
E_{\Gamma}[\{Y_{l}\}_{l\in\Gamma}]
 =
\sum_{\{j_{l}\in\f\N2\}}  Y_{l}^{2j_{l}} \,\cW[\{j_{l}\}_{l\in\Gamma}]\,\cE_{\Gamma}[\{j_{l}\}_{l\in\Gamma}]\,,
\ee
with the couplings $Y_{l}\in\bC$ and weights $\cW[\{j_{l}\}_{l\in\Gamma}]$ possibly involving factorials of the spins \cite{Costantino:2011gen,Bonzom:2012bn,Bonzom:2015ova}.
Generating functions is a powerful mathematical tool. For instance, they typically map the asymptotic behaviour, here at large spins i.e. the semi-classical regime for length scales very large compared to the Planck length, onto poles of the generating function.

\item The coherent spin superpositions, or equivalently the generating functions, that we consider here allow for exact analytical resummation of the Ponzano-Regge amplitudes as rational functions in the couplings. They are actually the generalization of Schwinger's generating function for the $\{6j\}$-symbols \cite{Schwinger:1965an,Bargmann:1962zz}, they were introduced as coherent spin network states in \cite{Freidel:2010tt,Dupuis:2010iq,Dupuis:2011dh} and showed to lead to exact closed formula for spinfoam models in \cite{Varshalovich:1988qu,Freidel:2012ji}. At the end, the evaluations $E_{\Gamma}[\{Y_{l}\}_{l\in\Gamma}]$, for specific well-chosen weights  $\cW[\{j_{l}\}_{l\in\Gamma}]$, were shown to be given by the inverse squared  partition function of the 2D Ising model with inhomogeneous couplings $\tan^{-1} Y_{l}$ on the boundary graph \cite{Bonzom:2015ova}.
For instance, the generating function for the $\{6j\}$-symbols corresponds to the inverse of the square of the 2D Ising model on the tetrahedron with six variables $Y_{1,..,6}$ living on the edges and representing the strength of the coupling between the four triangles \cite{Bonzom:2011nv}.
Then, in general for arbitrary 3-cells and their boundary graphs, this provides formulas for the Ponzano-Regge amplitude as holomorphic functions of couplings living on the boundary of the 3-cells.
These formulas are at the heart of the proposed holographic duality between 3D quantum gravity defined by the Ponzano-Regge path integral and the 2D Ising model \cite{Bonzom:2015ova}.

\end{itemize}

Building on those previous works, we show how to glue the holomorphic amplitudes associated to the 3-cells - or in short, holomorphic blocks- defined as the evaluation of the coherent spin network superpositions on their 2d boundary. This gluing is done in a topologically invariant way, that is so that overall amplitudes of a 3D region do not depend on the chosen bulk cellular decomposition, and ultimately reproduces the sum over spins of the original Ponzano-Regge formulation.

This reformulation offers a new twist to the story of the Ponzano-Regge path integral. We indeed formulate it as a topological net of 2D Ising partition functions glued together: each 3-cell defines a 2D Ising model on its boundary, then those 3-cells, and thus those 2D Ising models, are glued together in a topologically-invariant fashion. We refer to this construction as a {\it topological Ising net}. It would be enlightening to investigate in the future how general such topological Ising nets can be, whether they can be defined in any dimension, using arbitrary powers of the Ising partition function, if they can be generalized to other condensed matter models and whether we can depart from topological invariance in a controlled way with a non-trivial, yet integrable, renormalization flow encoding the fusion of the 3-cell algebraic structure and amplitudes.

Moreover, one can look at this construction from the perspective of (finite distance) holographic dualities  {\it \`a la} AdS/CFT correspondence. As the 2D Ising model becomes a conformal field theory (CFT) in its critical regime, the exact equivalence of the present formulation between the 3D quantum gravity and the 2D Ising partition function, which holds for every value of Ising couplings, can be understood as a non-critical version of the gauge-gravity holography. Interpreting the Ising partition function for non-critical couplings as a non-critical version of conformal blocks, the 3D Ponzano-Regge path integral is realized as gluing such 2D non-critical blocks. This version of holography holds for discrete quantized geometries and not only at the level of field theories in the continuum limit (see $\eg$ \cite{Witten:1998qj,Witten:1998wy} for holographic duality at the field theory level).
More recent work following this line of thought and investigating the holographic behaviour of the Ponzano-Regge path integral can be found in \cite{Dittrich:2017hnl,Dittrich:2017rvb,Dittrich:2018xuk}.
A hope is that this reformulation will lead to new developments in the investigation of the phase diagram of 3D quantum gravity and the implementation of quasi-local holography in spinfoams and loop-gravity-inspired path integrals for quantum gravity.
 
\bigskip

In section \ref{sec:PR}, we review the standard formulation of the Ponzano-Regge state-sum as a path integral for discretized 3D gravity in its first-order formulation in terms of vierbein-connection variables. We show that the Ponzano-Regge amplitude for a 3D region is the spin network evaluation on the 2D boundary of the region, and that gluing neighbouring 3D regions is implemented by a fusion of those spin network evaluations done in a topologically-invariant way, which leads to a locally holographic formulation of 3D quantum gravity.

In section \ref{sec:SGF}, we introduce the generating function for spin network evaluations and compute it as a rational holomorphic function. The $\{6j\}$-symbol for the tetrahedron becomes a holomorphic $\{12z^{\times 2}\}$-symbol, equal to the inverse squared partition function of the 2D Ising model on the tetrahedron. We write the Ponzano-Regge model in terms of those holomorphic blocks and show the topological invariance of this new formulation.

Section \ref{sec:conformal} analyzes the pole structure of the $\{12z^{\times 2}\}$-symbol and of the tetrahedron gluing factors and investigates their geometrical interpretations in terms of gluing tetrahedra up to scale factors.


\section{The Ponzano-Regge spinfoam for 3D quantum gravity}
\label{sec:PR}

\subsection{3D gravity with boundary}
\label{sec:3d_gravity}
Throughout this paper, we consider gravity on a three-dimensional Euclidean manifold 
$\cM$ with two-dimensional boundary $\partial \cM$, thus the Einstein-Hilbert action with the GHY boundary term reads
\be
S[g_{\mu\nu}]
=\frac{1}{16\pi \G}\left[ \int_\cM \rd^3x \sqrt{g} R - 2 \int_{\partial \cM}\rd^2 x\, \sqrt{h} K \right]\,,
\label{eq:EH_1}
\ee
where $\G$ is the 3D gravitational constant proportional to the Planck length $\lp$, 
$h_{ab}$ is the induced metric on the boundary and $K$ the extrinsic scalar curvature.
In this paper, we put the length dimensions into the metric so that the coordinates carry no dimensions. 
 We have used the natural unit $\hbar=c=1$. 
 
We introduce a co-triad $\e$ field, which is an $\su(2)$-valued one-form on $\cM$, and a connection $\A$ on a principle $\SU(2)$-bundle over $\cM$, which can be viewed as an $\su(2)$-valued one-form on $\cM$. 
$\e$ is an orthonormal local frame of $\cM$, while
$\A$ is uniquely defined as the solution to the torsion-free condition, $\rd_\A \e=\rd \e+[\A,\e]=0$, of the co-triad. In components, they read
\be
\e^i_\mu \e^j_\nu \delta_{ij}=g_{\mu\nu}\,,\quad
\A_{\mu}^i=
-\epsilon^{ijk} \e_k^{\nu}\nabla_{\mu}\e_{j\nu} 
=\epsilon^{ijk}(\Gamma_{\mu\nu}^{\rho}\e_{j\rho}-\partial_\mu \e_{j\nu})\e_k^\nu\,,
\label{eq:A_of_e}
\ee
where $\Gamma_{\mu\nu}^\rho$ is the Christoffel symbol. 
We have used Greek letters to denote the coordinates on the manifold and Latin letters to denote the Lie algebra indices. 
The action \eqref{eq:EH_1} thus can be written as
\be
S[\e,\A]=\frac{1}{16\pi \G}\left[ \int_{\cM} \tr \left( \e \wedge  F(\A) \right)
+\int_{\partial \cM} \tr(\e\wedge \A )
\right]\,,
\label{eq:BF}
\ee
where $F(\A):=\rd \A + \f12[\A,\A]$ is the curvature two-form of $\A$, and the trace is over the Lie algebra indices. 

It is well-known that the bulk term recovers the Einstein-Hilbert action. \hmm{We now show} that the boundary term in \eqref{eq:BF} is also consistent with the GHY term imposing the torsion-free condition of the connection \eqref{eq:A_of_e}. 
We choose a gauge for the co-triad $\e^0_{\mu}=n_{\mu}$ on the boundary $\partial \cM$, where $\vec{n}$ is the vector normal to $\partial \cM$ that links the metric and the induced metric on the boundary by
\be
g_{\mu\nu}=h_{\mu\nu}+n_\mu n_\nu\,.
\label{eq:g_h_n}
\ee
The torsion-free connection on the boundary can be written as $\A^i_{\mu}= -\epsilon^{ij0}n^\nu \nabla_\mu \e^j_\nu - \epsilon^{i0k}\e_k^\nu \nabla_\mu n_\nu=2\epsilon^{ij}\e_j^\nu \nabla_\mu n_\nu$. Therefore, the boundary term reads
\be
\int_{\partial \cM}\epsilon^{\mu\nu\rho} \e^i_{\mu} \A_{i\nu} \hat{n}_{\rho}
= 2 \int_{\partial \cM}\epsilon^{\mu\nu\rho} \epsilon^{ij}\e_{i\mu} \e_j^\lambda \hat{n}_{\rho} \nabla_\nu n_\lambda 
= 2 \int_{\partial \cM} \det \e \, \nabla_{\mu} n^{\mu} 
= - 2 \int_{\partial \cM} \rd^2 x \, \sqrt{h}K\,,
\label{eq:bdrY_lqual_GHY}
\ee
which recovers the GHY boundary term in \eqref{eq:EH_1}. $\vec{\hat{n}}$ is a dimensionless normal direction vector to $\partial \cM$ that is proportional to $\vec{n}$, say $\vec{n}=N\vec{\hat{n}}$ \footnotemark{}. 
On the other hand, the variation is well-post upon the Dirichlet boundary condition $\delta \e^i_{\mu} |_{\partial \cM}=0$, which is consistent with the boundary condition $\delta g|_{\partial \cM}=0$ of the action \eqref{eq:EH_1}. 
\footnotetext{
It is important to note the difference between $n_\mu$ and $\hat{n}_\mu$.  
As the metric carries square of length dimensions, $n_\mu$ is with dimension of length according to \eqref{eq:g_h_n} while $\hat{n}_\mu$ is dimensionless so that the dimension of the boundary action \eqref{eq:bdrY_lqual_GHY} is correct. 
}

In the first-order formalism, $\e$ and $\A$ are treated independently, thus the action is of the BF type. 
The theory possesses two kinds of gauge symmetries, namely the Lorentz transformation, performed by an $\SU(2)$ group element $g$ (infinitesimally parametrized by an $\su(2)$-valued function $\lambda$), and the translation, performed by a scalar field $\Phi$ (infinitesimally parametrized by an $\su(2)$-valued function $\phi$). The finite and infinitesimal transformation laws are given by \footnotemark{}
\be\ba{ll}
\left|\ba{lll}
\e & \mapsto & g\e g^{-1}\\[0.15cm]
\A &\mapsto & g \A g^{-1}+ g\,\rd g^{-1}
\ea\right.\,,\quad
&
\left|\ba{lll}
\e & \mapsto &  \e + \rd_{\A} \Phi\\[0.15cm]
\A & \mapsto &\A
\ea\right.\,.\\[0.5cm]
\left|\ba{l}
\delta^\text{L}_\lambda \,\e = [\lambda, \e]\\[0.15cm]
\delta^\text{L}_\lambda \, \A = \rd_\A \lambda
\ea\right. \,,\quad 
&
\left|\ba{l}
\delta^\text{t}_\phi \e = \rd_\A \phi \\[0.15cm]
\delta^\text{t}_\phi \A =0
\ea\right. \,.
\ea
\label{eq:gauge}
\ee
\footnotetext{The subscript ``L'' denote Lorentz transformation and ``t'' denotes translation.}
The variation of the BF action with boundary term \eqref{eq:BF} is
\be
\delta S[\e,\A]= \int_{\cM} \tr \left( \delta\e \wedge F + \delta \A \wedge \rd_{\A}\e \right) +\int_{\partial \cM}\tr \left( \delta\e\wedge\A \right)\,.
\ee
The equations of motion recover the torsion-free condition and, in addition, enforce the flatness of the manifold,
\be
\rd_{\A} \e =0\,, \quad
F(\A) =0\,,
\label{eq:BF_eom}
\ee
subject to the boundary condition $\delta\e=0$ \footnotemark{}.
\footnotetext{
The boundary condition $\delta\A=0$ requires no boundary term to give a well-post variation that leads to the equations of motion \eqref{eq:BF_eom}. Different boundary conditions lead to different quantization results. See \cite{OLoughlin:2000yww} for a discussion.
}

It has been well-known that \eqref{eq:BF} can be rewritten as a Chern-Simons action with the Poincar\'e gauge group $\ISU(2)$, 
\be
S_{\text{CS}}[	\cA]=\frac{k}{4\pi}\int_\cM \tr\left[ \cA \wedge \rd \cA + \f23 \cA \wedge \cA \wedge \cA \right]\,,\quad \cA_\mu = \lp^{-1}\,\e_\mu^i P_i +  \A_\mu^i J_i\,,
\label{eq:CS}
\ee
where $J_i$ and $P_i$ are the generators of Lorentz transformation and translation respectively \cite{Achucarro:1989ch,Witten:1988hc} \footnotemark{}.
\footnotetext{
According to our agreement of dimension assignment, $\e$ is with dimension of length and $\A$, thus $F$, is dimensionless.
}  
Explicitly,
\be
S_{\text{CS}}[\e, \A]= \frac{k}{4\pi\lp}\left[\int_{\cM} \tr \left[ \e \wedge F(\A) \right]
+  \int_{\partial \cM} \tr\left[  \e\wedge \A \right]\right]\,.
\label{eq:CS_ISU}
\ee
We have used the Killing form $\langle J_i,P_j\rangle=\delta_{ij}$ and others vanish. It is remarkable that both the bulk term and the boundary term of \eqref{eq:BF} are recovered
with $k=\frac{\lp}{4\text{G}}$.

Upon quantization, the manifold $\cM$ is discretized into local blocks packed compactly. 
The structure of these local blocks depends on that of the discrete phase space and the quantization process. 
We shall see in Section \ref{sec:conformal} that, by considering the conformal class of the boundary action, natural discretization and the quantization process results in local blocks with scale-invariant nature, $\ie$ it represents the conformal geometry of the discretized manifold, and are glued locally in a scale-invariant way.

\subsection{3D quantum gravity as a discretized gauge field path integral}
In this subsection, we review the spinfoam model for 3D quantum gravity, which is the discretized path integral of the BF action.
We set $\frac{1}{16\pi \text{G}}=1$ and will continue to apply this simplification in the rest of the article since we no longer need to deal with $\lp$. 
Consider the gravity on a three-manifold $\cM$ with boundary $\partial \cM$. 
The path integral is an integration over the gauge equivalent class of the bulk configuration subject to an admissible boundary condition $\partial$,
\be
Z[\cM,\partial]=\int \cD \A_{\text{B}}\int \cD \e_{\text{B}}  \,\exp\left(i S[\e,\A]  
 \right)
=Z[\partial \cM,\partial] \int \cD \A_{\text{B}} \,\delta(F(\A_{\text{B}})) \,,
\label{eq:PI_cont_boundary}
\ee
where $Z[\partial \cM,\partial]$ depends on the boundary condition and the subscript ``$\B$'' denotes the configuration in the bulk.  
The idea of spinfoam is to write the path integral \eqref{eq:PI_cont_boundary},
which reduces to the flatness condition in the bulk, in a discrete fashion that encodes the local geometrical information.

The quantization program proceeds firstly with the cellular decomposition of the spacetime, then with the discretization of variables 
so that they are concentrated on the cells of a particular dimension.
In particular, when the cellular decomposition is the triangulation, denoted as ${\bf T}$, a 3-simplex is a tetrahedron, denoted as $T$, and a 2-simplex is a triangle, denoted as $t$.

Apart from working on the cellular decomposition $\triangle$ of the 3D spacetime manifold $\cM$ and the graph $\Gamma=(\partial \triangle)^*_1$ on the boundary of $\triangle$, we will also work on the dual cellular decomposition $\triangle^*$. 
In the dual picture, a dual face $f^*$ is dual to one edge $e$, an oriented dual edge $e^*$ is dual to one face $f$, and a dual vertex $v^*$ is dual to one 3-cell $\sigma$.
When $e^*$ is on the boundary of $f^*$, we denote $e^*\in \partial f^*$. 
An example is illustrated in fig.\ref{fig:spinnetwork}, in which $l_i$ is dual to $e_i$, and the orientation of $\Gamma$ is chosen to be the Kasteleyn orientation, for the reason related to the duality between spin network and 2D Ising model \cite{Bonzom:2015ova}. A Kasteleyn orientation is such that each plaquette has an odd number of links oriented opposite the orientation induced by the plaquette \footnotemark{}.
\footnotetext{
It was shown in \cite{Cimasoni2007di} that there exists a Kasteleyn orientation if and only if the number of nodes in a graph $\Gamma$ embedded in a surface $\Sigma$ is even, and there are $2^{2g}$ equivalence classes of Kasteleyn orientations of $\Gamma$, where $g$ is the genus of $\Sigma$. Two Kasteleyn orientations are called equivalent when one can be obtained from the other by a sequence of moves, in which the orientation of all the links incident to a node are reversed. Thus there is a unique equivalence class of Kasteleyn orientation for the graph $\Gamma=(\partial T)^*_1$ dual to the boundary of a tetrahedron.
}
\begin{figure}[h!]
\centering
\begin{minipage}{0.6\textwidth}
	\centering
\begin{tikzpicture}[scale=1.7]
\coordinate (O1) at (0,0,0);

\coordinate (A1) at (0,1.061,0);
\coordinate (B1) at (0,-0.354,1);
\coordinate (C1) at (-0.866,-0.354,-0.5);
\coordinate (D1) at (0.866,-0.354,-0.5);
\coordinate (aa) at (0.25,-1.5, 0.25);
\coordinate (bb) at (4.25-0.5,-1.5, 0.25);

\draw (aa) node{$T$} ;
\draw (bb) node{$\Gamma$};

\draw (A1) -- (B1) node[midway,right]{$e_{6}$};
\draw (A1) -- (C1)node[midway,left]{$e_{5}$};
\draw (A1) -- (D1)node[midway,right]{$e_{4}$};
\draw (B1) -- (C1)node[midway,left]{$e_{1}$};
\draw[dashed] (C1) -- (D1)node[midway,above]{$e_{3}$};
\draw (D1) -- (B1)node[midway,below]{$e_{2}$};

\draw[->,>=stealth, thick] (1.5,0.177,0.25) to node[midway,above]{dual} (3-0.5,0.177,0.25);

\coordinate (A2) at (3.5-0.5,0.8);
\coordinate (B2) at (3.5-0.5,-0.5);
\coordinate (C2) at (4.8-0.5,0.8);
\coordinate (D2) at (4.8-0.5,-0.5);
\draw[red] (A2) node{$\bullet$};
\draw[red] (B2) node{$\bullet$};
\draw[red] (C2) node{$\bullet$};
\draw[red] (D2) node{$\bullet$};

\draw (0,1.8,0) node{};

\draw[red,postaction={decorate},decoration={markings,mark={at position 0.55 with {\arrow[scale=1.5,>=stealth]{>}}}}] (B2) -- (A2) node[midway,left]{$l_{1}$} ;
\draw[red,postaction={decorate},decoration={markings,mark={at position 0.55 with {\arrow[scale=1.5,>=stealth]{>}}}}] (A2) -- (C2) node[midway,above]{$l_{5}$};
\draw[dashed,red,postaction={decorate},decoration={markings,mark={at position 0.65 with {\arrow[scale=1.5,>=stealth]{>}}}}] (A2) -- (D2) node[pos=0.65,right]{$l_{6}$};
\draw[red,postaction={decorate},decoration={markings,mark={at position 0.65 with {\arrow[scale=1.5,>=stealth]{>}}}}] (C2) -- (B2) node[pos=0.65,left]{$l_{3}$};
\draw[red,postaction={decorate},decoration={markings,mark={at position 0.55 with {\arrow[scale=1.5,>=stealth]{>}}}}] (C2) -- (D2) node[midway,right]{$l_{4}$};
\draw[red,postaction={decorate},decoration={markings,mark={at position 0.55 with {\arrow[scale=1.5,>=stealth]{>}}}}] (B2) -- (D2) node[midway,below]{$l_{2}$};
\end{tikzpicture}
\subcaption{}
\label{fig:spinnetwork_a}
\end{minipage}
\begin{minipage}{0.38\textwidth}
	\centering
\begin{tikzpicture}
\coordinate (A) at (0,0);
\coordinate (B) at (4*4/4,0);
\coordinate (C) at (2*4/4,3.464*4/4);

\coordinate (c) at ($ (A)!.5!(B) $);
\coordinate (b) at ($ (A)!.5!(C) $);
\coordinate (a) at ($ (B)!.5!(C) $);

\path[name path = Aa] (A) -- (a);
\path[name path = Bb] (B) -- (b);
\path[name path = Cc] (C) -- (c);
\path [name intersections = {of = Aa and Bb,by=O}];

\draw[thick] (A) -- node[midway,below]{$e_1$} (B);
\draw[thick] (B) -- node[midway,right]{$e_2$} (C);
\draw[thick] (C) -- node[midway,left]{$e_3$} (A);

\draw[thick] (A) -- node[midway,below]{$e_5$}(O);
\draw[thick] (B) -- node[midway,below]{$e_6$}(O);
\draw[thick] (C) -- node[midway,left]{$e_4$}(O);

\coordinate (oa) at ($(A)!.5!(O)$);
\coordinate (ob) at ($(B)!.5!(O)$);
\coordinate (oc) at ($(C)!.5!(O)$);

\path[name path = Coa] (C) -- (oa);
\path[name path = Cob] (C) -- (ob);
\path[name path = Aob] (A) -- (ob);

\path[name intersections = {of = Coa and Bb, by=ACO}];
\path[name intersections = {of = Cob and Aa, by=BCO}];
\path[name intersections = {of = Aob and Cc, by=ABO}];

\coordinate (out) at ($1.5*(b)-0.5*(B)$);
\coordinate (aboveC) at ($1.2*(C)-0.2*(c)$);
\coordinate (belowA) at ($1.2*(A)-0.2*(a)$);

\draw[red] (ACO) node{$\bullet$};
\draw[red] (BCO) node{$\bullet$};
\draw[red] (ABO) node{$\bullet$};
\draw[red] (out) node{$\bullet$};

\draw[dashed,red,thick,postaction={decorate},decoration={markings,mark={at position 0.43 with {\arrow[scale=1.5,>=stealth]{>}}}}] (out) to[bend left=50,swap] (aboveC) to[bend left=80,swap] (BCO);
\draw[dashed,red,thick,postaction={decorate},decoration={markings,mark={at position 0.43 with {\arrow[scale=1.5,>=stealth]{>}}}}] (out) to[bend right=50,swap] (belowA) to[bend right=80,swap] (ABO);
\draw[dashed,red,thick,postaction={decorate},decoration={markings,mark={at position 0.43 with {\arrow[scale=1.5,>=stealth]{>}}}}] (ACO) to[bend left=50,swap] node[pos=0.65,above]  {$l_4$} (BCO);
\draw[dashed,red,thick,postaction={decorate},decoration={markings,mark={at position 0.43 with {\arrow[scale=1.5,>=stealth]{>}}}}] (ABO) to[bend right=50,swap] node[pos=0.6 ,right]  {$l_6$} (BCO);
\draw[dashed,red,thick,postaction={decorate},decoration={markings,mark={at position 0.43 with {\arrow[scale=1.5,>=stealth]{>}}}}] (ABO) to[bend left=50,swap] node[pos=0.6,left]  {$l_5$} (ACO);
\draw[dashed,red,thick,,postaction={decorate},decoration={markings,mark={at position 0.73 with {\arrow[scale=1.5,>=stealth]{>}}}}] (ACO) --node[red,midway,above]{$l_3$} (out);

\draw[red] (aboveC) node[above]{$l_2$};
\draw[red] (belowA) node[below]{$l_1$};

\coordinate (note) at ($1.5*(c) - 0.5*(C)$);
\draw (note) node{$\partial T$ and $\Gamma$};
\end{tikzpicture} 
\subcaption{}
\label{fig:spinnetwork_b}
\end{minipage}
\caption{ 
(a) From the tetrahedron $T$ to the dual boundary spin network graph $\Gamma\equiv(\partial T)^*_1$: each triangle $t \subset T$ is replaced by a node $n \subset \Gamma$ and each edge $e_i \subset T$ by a link $l_i \subset \Gamma$. The graph $\Gamma$ also has the combinatorics of a tetrahedron. (b) Combination of $\partial T$ ({\it in black}), which is the top view projection of the left panel, and its dual graph $\Gamma$ ({\it in red}).}
\label{fig:spinnetwork}
\end{figure}
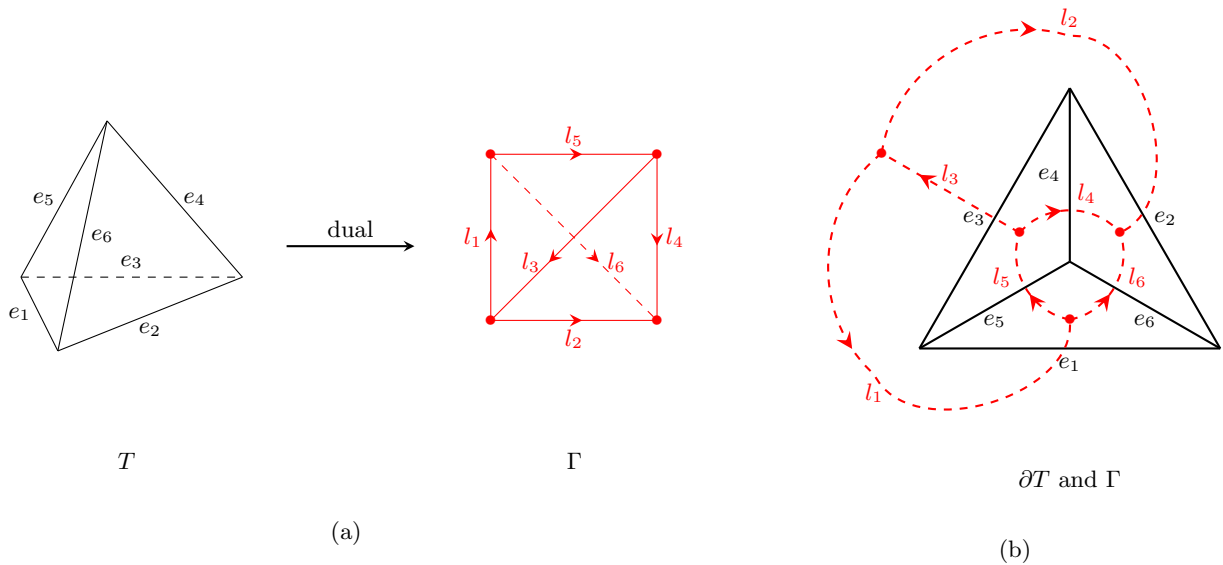

As \eqref{eq:PI_cont_boundary} integrates out the triads and reduces to a function of the connection in the bulk, it is natural to define the discrete variables related to the connection. 
With the cellular decomposition of $\cM$ and its dual skeleton at hand, we assign an $\SU(2)$ group element $g_{e^*}$ called holonomy to each dual edge $e^*$, which encodes the discrete information of the connection. The reverse of the dual edge orientation maps $g_{e^*}$ to its inverse $g^{-1}_{e^*}$. Then the curvature is naturally defined by the path-ordered product of holonomies for a dual face with a randomly selected starting dual vertex. The $\e_\partial$ field on the boundary, by applying the same technique as in LQG, is discretized to be the flux variable $X_{e}$, which is an $\su(2)$ Lie algebra object, each assigned to an edge $e\in \partial \triangle$ on the discretized boundary. This leads to the discrete version of \eqref{eq:PI_cont_boundary},
\be
Z[\triangle,\partial\triangle]= C[\partial \triangle,\partial ] \int_{\SU(2)} \prod_{e^*\notin (\partial\triangle)^*}\rd g_{e^*} \,\prod_{f^*}\delta(
\prod\limits_{e^*\in \partial f^*}^{\longrightarrow}
g_{e^*})\,,
\label{eq:PI_disc}
\ee 
where the measure $\rd g$ is the Haar measure of $\SU(2)$, and the delta distribution on the $\SU(2)$ group imposes the group element in the argument to be identity. The discrete flatness is thus understood as the trivial holonomy associated to each dual face. 
$C[\partial \triangle,\partial]$ is a term that depends on the boundary cellular decomposition and the discrete boundary condition. Upon quantization, these boundary condition becomes boundary states, thus the quantization of \eqref{eq:PI_cont_boundary} will depend on the boundary states $\psi_{\Gamma}$ on the graph $\Gamma$, which is the dual of the boundary discretization.

The machinery of the spinfoam to achieve localization is to express the delta distribution in \eqref{eq:PI_disc} as a plane wave of $\SU(2)$ in a certain representation, which is then able to be decomposed into the product of plane waves localized in different cells. 
In other words, it is to construct the spinfoam path integral, also understood as the total amplitude, with a product of local amplitudes associated to dual vertices (or tetrahedra), dual edges (or faces) and dual faces (or edges), which capture the (admissible) local representations and local intertwiners, and then sum over all possible local configurations. 
 It is called the local spinfoam ansatz, which postulates that one can formally decompose the total amplitude into
\footnote{
We change from now on the notations and terminology for amplitudes based on the dual cellular decomposition $\triangle^*$ compared to \eqref{eq:spinfoam_general} to be consistent with most of the literature in spinfoams. The two notations are in one-to-one correspondence: an edge amplitude $\cA_{e^*}$ is equivalent to a face weight $\cA_f$, a face amplitude $\cA_{f^*}$ is equivalent to an edge weight $\cA_e$ and a vertex amplitude $\cA_{v^*}$ is equivalent to a bubble weight $\cA_{\sigma}$.} 
\be
Z[\triangle,\psi^{\rho}_{\Gamma}]=\sum_{\rho_{\text{B}}, \iota_{\text{B}}} \,\prod_{f^*} \cA_{f^*} \, \prod_{e^*} \cA_{e^*}\, \prod_{v^*} \cA_{v^*}\,,
\label{eq:amplitudes_boundary}
\ee
where $\rho$ and $\iota$ denote the representation and the intertwiner respectively, and the sum of representations is only over those associated to the bulk (denoted with the subscript ``B''). The boundary state $\psi^{\rho}_{\Gamma}$ encodes the representation $\rho$ associated to the boundary graph $\Gamma$, which is left in the expression of the total amplitude. 
The summation symbol was used as we assumed the representations $\rho$ are discrete, which will no longer be the case when we consider the spinfoam model with spinor representation in Section \ref{sec:SGF}. In the latter case, the summation symbol is changed to the integration symbol.

The edge amplitude $\cA_{e^*}$ and the face amplitude $\cA_{f^*}$ are both kinematical. The former describes the gluing of adjacent 3-cells, while the latter is there in order to compensate the factors to recover the delta distribution in \eqref{eq:PI_disc}, whose contribution dominates only in the quantum regime.  In contrast, the vertex amplitude $\cA_{v^*}$ contains the dynamical information of the spinfoam, thus deserves a deeper investigation.

\subsection{Fourier transform and spin network evaluations}

The Ponzano-Regge model \cite{Ponzano:1968se} is a realization of \eqref{eq:amplitudes_boundary} based on the triangulation of the manifold, where $\rho$ is given by the $\SU(2)$ irreducible representation labelled by spin $j\in \N/2$. $j$ is interpreted into the edge length suggested by the LQG framework. The vertex amplitude of the Ponzano-Regge state-sum geometrically describes the 3D geometry of the tetrahedron it is associated to, and an edge amplitude describes how two neighbouring tetrahedra are glued together. 

We start from the discrete path integral \eqref{eq:PI_disc} for a general cell decomposition and work on the dual picture. $\delta(g_{f^*})$ can be decomposed over the $\SU(2)$ spin representation using the Peter-Weyl theorem,
\be
\delta(g_{f^*})=\sum_{j_{f^*}}d_{j_{f^*}}\, \chi^{j_{f^*}}(g_{f^*})\,,
\label{eq:delta_g}
\ee
where $d_{j}\equiv 2j+1$ is the dimension of the spin $j$ representation space $\cV^{j}$ and $\chi^{j}(g)=\tr D^{j}(g)$ the character of $g$ in the spin $j$ representation, formulated as the trace of the Wigner matrix $D^j(g)$ of $g$ in the $j$ representation. 
Thus equivalently in the cellular decomposition picture, each edge $e$ is dressed with a spin $j_{e}$.
When decomposing $\delta(g_{f^*})$ into spin representation, one has an $\SU(2)$ group integration for each dual edge $e^*$, which is dual to an $N$-gon, of $N$ copies of the Wigner matrix $D^{j_i}(g)$ $(i=1 \cdots N)$.
 This is the projector, also called intertwiner $\id_{\cH}=\sum_{a}|\cI_a\rangle\langle \cI_a|$, of the kinematical Hilbert space onto the $\SU(2)$-invariant Hilbert space $ \text{Inv}_{\SU(2)}\left(\cV^{j_1}\otimes \cdots \otimes \cV^{j_N}\right)\bigotimes \text{Inv}_{\SU(2)}\left( \cV^{* j_1}\otimes \cdots \otimes \cV^{* j_N}\right)$. In components, the integration takes the form
\be
\int_{\SU(2)}\rd g_{e^*}\,
 D^{j_1}_{m_1m'_1}(g_{e^*})\cdots D^{j_n}_{m_nm'_n}(g_{e^*})
=\sum_a \langle \{j_i,m_i\} | \cI_a \rangle \overline{\langle \{j_i,m'_i\} | \cI_a \rangle }\,,
\label{eq:SU2_integral}
 \ee
where $a$ runs from 1 to the dimension of this $\SU(2)$-invariant space. Decomposing the intertwiner into a particular $a$ basis $\sum_a|\cI_a\rangle \langle \cI_a|$ corresponds to separating an $N$-valent node ($N\geq 3$) into $N-2$ three-valent nodes in a tree way. In the cellular decomposition picture, equally, it corresponds to adding edges in the $N$-gon on the boundary of a 3-cell so that the boundary is made of gluing triangles. 
One example of separating a 6-valent node into four 3-valent nodes and its correspondence of separating a hexagon into four triangles is given in fig.\ref{fig:hexagon}.
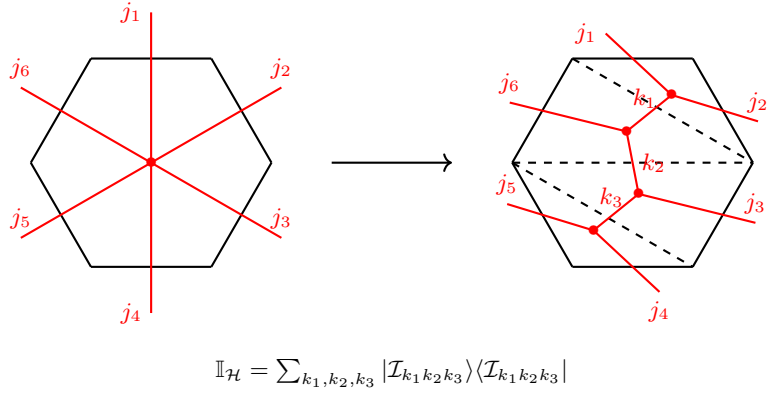
\begin{figure}
\centering
	\begin{tikzpicture}[scale=0.8,extended line/.default=1cm]
\coordinate (A) at (0,0);
\coordinate (B) at ([shift=(0:2)]A);
\coordinate (C) at ([shift=(-60:2)]B);
\coordinate (D) at ([shift=(-120:2)]C);
\coordinate (E) at ([shift=(180:2)]D);
\coordinate (F) at ([shift=(120:2)]E);

\draw[thick] (A) -- coordinate[midway](mab)(B);
\draw[thick] (B) -- coordinate[midway](mbc)(C);
\draw[thick] (C) -- coordinate[midway](mcd)(D);
\draw[thick] (D) -- coordinate[midway](mde)(E);
\draw[thick] (E) -- coordinate[midway](mef)(F);
\draw[thick] (F) -- coordinate[midway](mfa) (A);

\draw[thick,dashed,name path = ac] (A) -- (C);
\draw[thick,dashed] (F) -- (C);
\draw[thick,dashed,name path = fd] (F) -- (D);

\path [name path = be] (B) --(E);
\path [name intersections = {of = ac and be, by=int1}];
\path [name intersections = {of = fd and be, by=int2}];
\path  (B) --coordinate[pos=0.7](mb1) (int1);
\path  (E) --coordinate[pos=0.7](mb4) (int2);
\path  (C) --coordinate[pos=0.6](mb2) (mfa);
\path  (F) --coordinate[pos=0.6](mb3) (mcd);
\draw [thick,red] (mb1) -- node[at end,left]{$j_1$}($(mb1)!1.5cm!(mab)$);
\draw [thick,red] (mb1) -- node[at end,above]{$j_2$}($(mb1)!1.5cm!(mbc)$);
\draw [thick,red] (mb1) -- node[pos=.6,above]{$k_1$}(mb2);
\draw [thick,red] (mb2) -- node[at end,above]{$j_6$}($(mb2)!2cm!(mfa)$);
\draw [thick,red] (mb2) -- node[midway,right]{$k_2$}(mb3);
\draw [thick,red] (mb3) -- node[at end,above]{$j_3$}($(mb3)!2cm!(mcd)$);
\draw [thick,red] (mb3) -- node[pos=.6,above]{$k_3$}(mb4);
\draw [thick,red] (mb4) -- node[at end,below]{$j_4$}($(mb4)!1.5cm!(mde)$);
\draw [thick,red] (mb4) -- node[at end,above]{$j_5$}($(mb4)!1.5cm!(mef)$);

\draw[red] (mb1) node{$\bullet$};
\draw[red] (mb2) node{$\bullet$};
\draw[red] (mb3) node{$\bullet$};
\draw[red] (mb4) node{$\bullet$};

\coordinate (l) at ([shift=(180:7)]C);
\coordinate (r) at ([shift=(180:5)]C);
\draw[thick,->] (l) -- coordinate[midway](id)(r);
\coordinate (cI) at ([shift=(-90:3.5)]id);

\coordinate (A2) at ([shift=(180:8)]A);
\coordinate (B2) at ([shift=(180:8)]B);
\coordinate (C2) at ([shift=(180:8)]C);
\coordinate (D2) at ([shift=(180:8)]D);
\coordinate (E2) at ([shift=(180:8)]E);
\coordinate (F2) at ([shift=(180:8)]F);
\draw[thick] (A2) -- coordinate[midway](mab2)(B2);
\draw[thick] (B2) -- coordinate[midway](mbc2)(C2);
\draw[thick] (C2) -- coordinate[midway](mcd2)(D2);
\draw[thick] (D2) -- coordinate[midway](mde2)(E2);
\draw[thick] (E2) -- coordinate[midway](mef2)(F2);
\draw[thick] (F2) -- coordinate[midway](mfa2)(A2);

\path  (A2) --coordinate[midway](cen) (D2);
\draw [thick,red] (cen) -- node[at end,left]{$j_1$}($(cen)!2.5cm!(mab2)$);
\draw [thick,red] (cen) -- node[at end,above]{$j_2$}($(cen)!2.5cm!(mbc2)$);
\draw [thick,red] (cen) -- node[at end,above]{$j_3$}($(cen)!2.5cm!(mcd2)$);
\draw [thick,red] (cen) -- node[at end,left]{$j_4$}($(cen)!2.5cm!(mde2)$);
\draw [thick,red] (cen) -- node[at end,above]{$j_5$}($(cen)!2.5cm!(mef2)$);
\draw [thick,red] (cen) -- node[at end,above]{$j_6$}($(cen)!2.5cm!(mfa2)$);
\draw[red] (cen) node{$\bullet$};

\node at (cI) {$ \id_{\cH}=\sum_{k_1,k_2,k_3}| \cI_{k_1k_2k_3}\rangle\langle\cI_{k_1k_2k_3}|$};
	\end{tikzpicture}
\caption{Splitting a 6-valent node into four 3-valent nodes {\it (in red)} and its corresponding change on the boundary 2-cell {\it (in black)}, which in this case is to split a hexagon into four triangles by adding three internal edges {\it (dashed lines)}. The splitting is not unique and each way of splitting corresponds to choosing one set of basis $\{a\}$ as in \eqref{eq:SU2_integral}. Here the basis is labelled by the three internal spins $k_1,k_2,k_3$ whose coupling with the $j_1,\cdots,j_6$ is as shown in the right.}

\label{fig:hexagon}
\end{figure}
Performing the group integration to all the dual edges, one ends up with a $\{3nj\}$-symbol (up to a sign) for each 3-cell given a basis of the intertwiner $|\cI^{e^*}_a\rangle \langle \cI^{e^*}_a|$ for each boundary face. This $\{3nj\}$-symbol defines a {\it spin network evaluation} (we will see why it is called so in the next subsection), which is the contraction of the intertwiner basis $\cI^{e^*}_a$ on all the dual edges incident to $v^*$:
\be
\{3nj\}_{v^*,a}= \tr_{\otimes_{f^*}\cV^{j_{f^*}}} \left[ \bigotimes_{e^*}\cI^{e^*}_a \otimes \bigotimes_{f^*}\id_{\cV^{j_{f^*}}} \right]\,.
\ee
 Therefore, the discrete partition function can be written as the gluing of $\{3nj\}$-symbols with edge amplitudes simply given by a sign. Symbolically, 
\be
Z[\triangle,\psi_{\Gamma}^j]= \sum_{\{j_{f^*}\}}\, \prod_{f^*}(-1)^{2j_{f^*}}d_{j_{f^*}}\,\sum_{\{a\}}\prod_{e^*}(-1)^{\sum\limits_{f^*|e^*\in f^*}j_{f^*} + 2J_a}\, \prod_{v^*} \{3nj\}_{v^*,a}\,,
\label{eq:partition_cell}
\ee
where the power of the edge amplitude is the sum of spins over all the dual faces whose boundary contains a given dual edge $e^*$ and twice the spins corresponding to the $a$-basis of the intertwiner on the dual edge (for the example in fig.\ref{fig:hexagon}, $J_a=k_1+k_2+k_3$). The double counting for $J_a$ is due to the fact that when separating the boundary polygon into triangles by adding internal edges, each internal edge is on the boundary of two triangles. 
 
In the case of triangulation, one encounters intertwiners for 3-valent nodes which are one-dimensional. The integration \eqref{eq:SU2_integral} simply becomes
\be
\int_{\SU(2)}\rd g_{e^*}\,
 D^{j_1}_{m_1n_1}(g_{e^*})D^{j_2}_{m_2n_2}(g_{e^*})D^{j_3}_{m_3n_3}(g_{e^*})
=\mat{ccc}{j_1 & j_2 & j_3 \\ m_1 & m_2 & m_3}
 \mat{ccc}{j_1 & j_2 & j_3 \\ n_1 & n_2 & n_3}\,,
\ee
which is the product of two normalized Clebsh-Gordan coefficients, or equivalently the $\{3j\}$-symbols. The re-coupling of the $\{3j\}$-symbols ends up with $\{6j\}$-symbols, each associated to a tetrahedron $T$, or equivalently a dual vertex $v^*$. For a tetrahedron with the notation in fig.\ref{fig:spinnetwork} and each edge $e_i$ dressed with a spin $j_i$, the $\{6j\}$-symbol is given by
\be
\Mat{ccc}{j_1 & j_2 & j_3 \\ j_4 & j_5 & j_6} =
\sum_{m_i} 
(-1)^{\sum_{i=1}^6 (j_i-m_i)}
\mat{ccc}{j_1 & j_2 & j_3 \\ m_1 & m_2 & -m_3} 
\mat{ccc}{j_1 & j_5 & j_6 \\ -m_1 & m_5 & m_6}
\mat{ccc}{j_4 & j_2 & j_6 \\ -m_4 & -m_2 & -m_6}
\mat{ccc}{j_4 & j_5 & j_3 \\ m_4 & -m_5 & m_3}   \,.
\ee
It ends up with a state-sum formulation of the discrete partition function \eqref{eq:PI_disc}, $\ie$ the Ponzano-Regge model, 
\be
Z[{\bf T},\psi_{\Gamma}^j]= \sum_{\{j_{f^*}\}}\, \prod_{f^*} (-1)^{2j_{f^*}}d_{j_{f^*}}\,\prod_{e^*}(-1)^{\sum_{i=1}^3 j_i}\, \prod_{v^*}  
\Mat{ccc}{j_1 & j_2 & j_3 \\ j_4 & j_5 & j_6}_{v^*}\,.
\label{eq:partition_6j}
\ee
It easily reads that the vertex amplitude $\cA_{v^*}$ is the $\{6j\}$-symbol associated to the tetrahedron dual to $v^*$, and the edge amplitude $\cA_{e^*}$ is a sign given by the spins on the sides of the triangle dual to $e^*$, 
and the face amplitude $\cA_{f^*}$ is the dimension $d_{j_{f^*}}$ of the spin representation space associated to the face $f^*$. See $\eg$ \cite{Barrett:2008wh} for detailed explanation of the sign factors. 
When there's no boundary, the edge amplitude term can be absorbed in the vertex amplitude \cite{Barrett:2008wh}, then the state-sum can be written as
\be
Z[{\bf T}]= \sum_{\{j_{f^*}\}}\, \prod_{f^*} (-1)^{2j_{f^*}}d_{j_{f^*}}\, \prod_{v^*}  (-1)^{\sum_{i=1}^6 2 j_i}
\Mat{ccc}{j_1 & j_2 & j_3 \\ j_4 & j_5 & j_6}_{v^*}\,.
\ee
Geometrically, the vertex amplitude describes a tetrahedron with edge lengths specified by the spins in the $\{6j\}$-symbol. The edge amplitude determines that the gluing of two adjacent tetrahedra is performed by matching the side lengths of the triangles, thus the full 2D geometrical information of the triangles. This trivial way of gluing can be viewed as resulting from the flatness of the manifold, imposed by $\delta(g_{e})$ for all the edges of the triangulation. The face amplitude is simply a weight factor, which is important only in the quantum regime.

\subsection{Topological invariance and local holography}
\label{sec:local_holography_SN}
The Ponzano-Regge state-sum formula
\begin{align}
Z[{\bf T},\psi_{\Gamma}^j]
&=  C[\partial {\bf T} ] \int_{\SU(2)} \prod_{e^*\notin (\partial\cM)^*}\rd g_{e^*} \,\prod_{f^*}\delta(\overrightarrow{\prod}_{e^*\in \partial f^*}g_{e^*})  \label{eq:group_form}\\
& =\sum_{\{j_{f^*}\}}\, \prod_{f^*}(-1)^{2j_{f^*}} d_{j_{f^*}}\, \prod_{e^*}(-1)^{\sum_{i=1}^3 j_i}\,\prod_{v^*} 
\Mat{ccc}{j_1 & j_2 & j_3 \\ j_4 & j_5 & j_6}_{v^*} \label{eq:spin_form}
\end{align}
under the Pachner moves in the bulk, which is the discrete version of the bulk diffeomorphism, but only depends on the topology of the manifold $\cM$ \cite{Ooguri:1991ni} and the boundary states. 
This reflects the holographic nature of the spinfoam model for 3D quantum gravity. 
Consider the smallest 3D space-time block with trivial topology -- a tetrahedron, 
one can define the boundary state to be the spin network state (defined in \ref{subsubsec:local_amplitude}) coming from the loop quantum gravity. 
In this way, instead of splitting a delta distribution into local amplitudes as above, one can start from the local amplitude ansatz \eqref{eq:amplitudes_boundary} and construct first the vertex amplitudes then recover total amplitude \eqref{eq:PI_disc} by the chosen edge and face amplitudes.

In this subsection, we first review the Pachner moves for the Ponzano-Regge state-sum formula \eqref{eq:group_form} or \eqref{eq:spin_form} to show the topological invariance of the total amplitude then reconstruct the Ponzano-Regge amplitude from the local amplitude ansatz. 

\subsubsection{Topological invariance from Pachner moves}
In 3D, there are two types of Pachner moves, namely the $2-3$ moves and the $1-4$ moves as shown in fig.\ref{fig:Pachner}. To prove the topological invariance, one can either start from the group formulation \eqref{eq:group_form} and apply the change of variable method  
or start from the spin formulation \eqref{eq:spin_form} and apply the recursion relation of $\{6j\}$-symbols.  
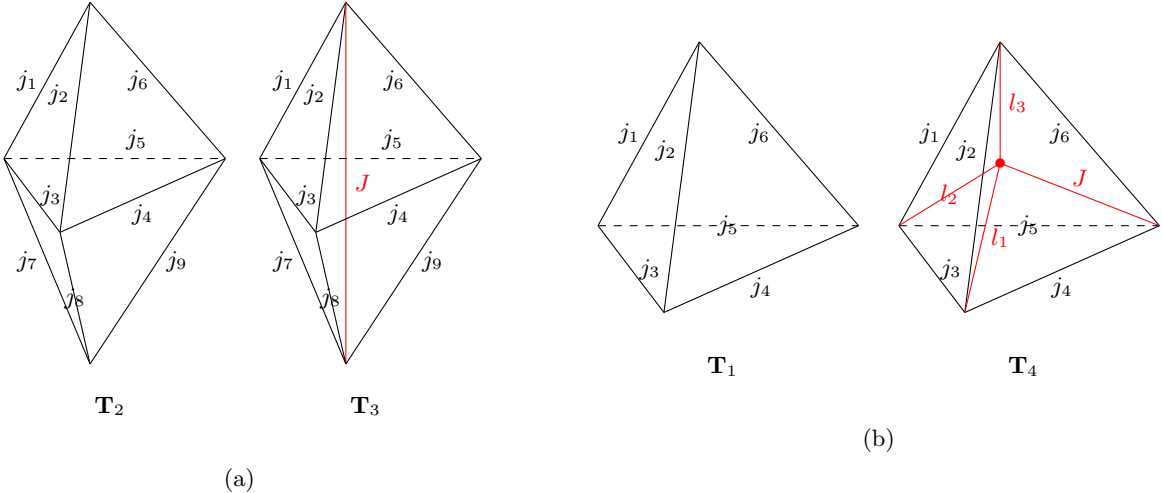
\begin{figure}[h!]
\centering
\begin{minipage}{0.35\textwidth}
	\centering
\begin{tikzpicture}[scale=1.7]
\coordinate (O1) at (0,0,0);

\coordinate (A1) at (0,1.061,0);
\coordinate (B1) at (0.15,-0.354,1);
\coordinate (C1) at (-0.866,-0.354,-0.5);
\coordinate (D1) at (0.866,-0.354,-0.5);
\coordinate (E1) at (0,-2*0.354-1.061,0);

\coordinate (A2) at ([shift=(0:2)]A1);
\coordinate (B2) at ([shift=(0:2)]B1);
\coordinate (C2) at ([shift=(0:2)]C1);
\coordinate (D2) at ([shift=(0:2)]D1);
\coordinate (E2) at ([shift=(0:2)]E1);

\coordinate (aa) at (0.25,-2, 0.25);
\coordinate (bb) at ([shift=(0:2)]aa);

\draw (aa) node{${\bf T}_2$} ;
\draw (bb) node{${\bf T}_3$};

\draw (A1) -- node[pos=.4,left]{$j_2$} (B1) ;
\draw (A1) -- node[midway,left]{$j_1$}(C1);
\draw (A1) -- node[midway,left]{$j_6$}(D1);
\draw (B1) -- node[midway,right]{$j_3$}(C1);
\draw[dashed] (C1) -- node[pos=0.6,above]{$j_5$}(D1);
\draw (D1) -- node[midway,below]{$j_4$}(B1);
\draw (E1) -- node[midway]{$j_8$}(B1) ;
\draw (E1) -- node[midway,left]{$j_7$}(C1);
\draw (E1) -- node[midway,right]{$j_9$}(D1);

\draw (A2) --node[pos=.4,left]{$j_2$} (B2) ;
\draw (A2) -- node[midway,left]{$j_1$}(C2);
\draw (A2) --node[midway,left]{$j_6$} (D2);
\draw (B2) -- node[midway,right]{$j_3$}(C2);
\draw[dashed] (C2) -- node[pos=0.6,above]{$j_5$}(D2);
\draw[red] (A2) -- node[midway,right]{$J$}(E2);
\draw (D2) -- node[midway,below]{$j_4$}(B2);
\draw (E2) -- node[midway]{$j_8$}(B2) ;
\draw (E2) -- node[midway,left]{$j_7$}(C2);
\draw (E2) -- node[midway,right]{$j_9$}(D2);
\end{tikzpicture}
\subcaption{}
\label{fig:Pachner32}
\end{minipage}
\quad
\begin{minipage}{0.55\textwidth}
	\centering
\begin{tikzpicture}[scale=2]
\coordinate (O1) at (0,0,0);

\coordinate (A1) at (0,1.061,0);
\coordinate (B1) at (0.15,-0.354,1);
\coordinate (C1) at (-0.866,-0.354,-0.5);
\coordinate (D1) at (0.866,-0.354,-0.5);
\coordinate (E1) at (0,0.254,0);

\coordinate (A2) at ([shift=(180:2)]A1);
\coordinate (B2) at ([shift=(180:2)]B1);
\coordinate (C2) at ([shift=(180:2)]C1);
\coordinate (D2) at ([shift=(180:2)]D1);

\coordinate (aa) at (0.25,-1, 0.25);
\coordinate (bb) at ([shift=(180:2)]aa);

\draw (aa) node{${\bf T}_4$} ;
\draw (bb) node{${\bf T}_1$};

\draw (A1) -- node[pos=.4,left]{$j_2$} (B1) ;
\draw (A1) -- node[midway,left]{$j_1$}(C1);
\draw (A1) -- node[midway,left]{$j_6$}(D1);
\draw (B1) -- node[midway,right]{$j_3$}(C1);
\draw[dashed] (C1) -- node[midway]{$j_5$}(D1);
\draw (D1) -- node[midway,below]{$j_4$}(B1);
\draw[red] (E1) -- node[midway,right]{$l_3$}(A1) ;
\draw[red] (E1) -- node[midway,right]{$l_1$}(B1) ;
\draw[red] (E1) -- node[midway]{$l_2$}(C1);
\draw[red] (E1) -- node[midway,above]{$J$}(D1);

\draw (A2) --node[pos=.4,left]{$j_2$} (B2) ;
\draw (A2) -- node[midway,left]{$j_1$}(C2);
\draw (A2) --node[midway,left]{$j_6$} (D2);
\draw (B2) -- node[midway,right]{$j_3$}(C2);
\draw[dashed] (C2) -- node[midway]{$j_5$}(D2);
\draw (D2) -- node[midway,below]{$j_4$}(B2);

\draw[red] (E1) node{$\bullet$};

\end{tikzpicture}
\subcaption{}
\label{fig:Pachner41}
\end{minipage}
\caption{(a) 2-3 Pachner move: adding an internal edge. (b) 1-4 Pachner move: adding an internal vertex.}
\label{fig:Pachner}
\end{figure}

To apply the change of variable method, one simply uses the invariance property of the $\SU(2)$ group measure under (left and right) $\SU(2)$ transformation. 
In the case of $2-3$ moves, the dual triangulation ${\bf T}^*_2$ for two tetrahedra ${\bf T}_2$ and the dual triangulation ${\bf T}^*_3$ for three tetrahedra ${\bf T}_3$ are given in the left panel of fig.\ref{fig:trll_2} and fig.\ref{fig:trll_3} respectively (projected onto a plane).
For both ${\bf T}^*_2$ and ${\bf T}^*_3$, one can choose a base dual vertex and redefine the holonomies starting from the base dual vertex. Using the notation in fig.\ref{fig:trll_2}, one can write the partition function \eqref{eq:group_form} for ${\bf T}_2$ by transforming $h_i \rightarrow H_i = h_i k^{-1}$ as
\be\begin{split}
Z[{\bf T}_2,\psi_\Gamma^j ]
=&C[\partial {\bf T}_2 ]
\int_{\SU(2)}\left(\prod_{e'^*}\rd g_{e'^*} \right) \prod_{f'^*}\delta(\overrightarrow{\prod} g_{e'^*})
\int_{\SU(2)}\left(\prod_{i=1}^3 \rd g_i\rd h_i\right)\rd k\, \delta(g_1k h_1^{-1}\cdots)\delta(g_2kh_2^{-1}\cdots) \\
&\delta(g_3kh_3^{-1}\cdots)\delta(g_1g_2^{-1}\cdots)\delta(g_2g_3^{-1}\cdots)\delta(g_3g_1^{-1}\cdots)\delta(h_1h_2^{-1}\cdots)\delta(h_2h_3^{-1}\cdots)\delta(h_3h_1^{-1}\cdots)\\
=& C[\partial {\bf T}_2 ]
\int_{\SU(2)}\left(\prod_{e'^*}\rd g_{e'^*} \right) \prod_{f'^*}\delta(\overrightarrow{\prod} g_{e'^*})
\int_{\SU(2)}\left(\prod_{i=1}^3 \rd g_i\rd H_i\right) \delta(g_1 H_1^{-1}\cdots)\delta(g_2H_2^{-1}\cdots) \\
&\delta(g_3H_3^{-1}\cdots)\delta(g_1g_2^{-1}\cdots)\delta(g_2g_3^{-1}\cdots)\delta(g_3g_1^{-1}\cdots)\delta(H_1H_2^{-1}\cdots)\delta(H_2H_3^{-1}\cdots)\delta(H_3H_1^{-1}\cdots)\,,
\end{split}
\label{eq:Pachner_T2}
\ee
where we have denoted the irrelevant part of the integration with primes and holonomies not in ${\bf T}_2$ with $\cdots$.
 
For ${\bf T}_3$, one can transform 
\be\left|\ba{l}
g_1 \rightarrow G_1 = g_1k_1^{-1}\,,\quad \,\,\,\,h_1\rightarrow H_1=h_1k_1^{-1} \\
g_2 \rightarrow G_2 = g_2k_2\,,\quad h_2\rightarrow H_2=h_2k_2\\
k_3\rightarrow K_3=k_1 k_3 k_2
\ea\right.
\nn\ee
thus the partition function for ${\bf T}_3$ can be rewritten as (we again denote the irrelevant part of the integration with primes.)
\be\begin{split}
Z[{\bf T}_3,\psi_\Gamma^j ]
=&C[\partial {\bf T}_3 ]
\int_{\SU(2)}\left(\prod_{e'^*}\rd g_{e'^*} \right) \prod_{f'^*}\delta(\overrightarrow{\prod} g_{e'^*})
\int_{\SU(2)}\left(\prod_{i=1}^3 \rd g_i\rd h_i \rd k_i\right)
\delta(g_1k_3 g_2^{-1}\cdots)\delta(h_1k_3h_2^{-1}\cdots)\\
&\delta(g_2k_2g_3^{-1}\cdots)\delta(h_2k_2h_3^{-1}\cdots)
\delta(g_3k_1g_1^{-1}\cdots)\delta(h_3k_1h_1^{-1}\cdots)
\delta(g_1h_1^{-1}\cdots)\delta(g_2h_2^{-1}\cdots)\delta(g_3h_3^{-1}\cdots)
\delta(k_1 k_3 k_2) \\
=& C[\partial {\bf T}_3 ]
\int_{\SU(2)}\left(\prod_{e'^*}\rd g_{e'^*} \right) \prod_{f'^*}\delta(\overrightarrow{\prod} g_{e'^*})
\int_{\SU(2)} \left(\prod_{i=1}^2\rd G_i\rd H_i\right)\rd g_3\rd h_3 \rd K_3 
\delta(G_1K_3 G_2^{-1}\cdots)\delta(H_1K_3H_2^{-1}\cdots)\\
&\delta(G_2g_3^{-1}\cdots)\delta(H_2h_3^{-1}\cdots)
\delta(g_3G_1^{-1}\cdots)\delta(h_3H_1^{-1}\cdots)
\delta(G_1H_1^{-1}\cdots)\delta(G_2H_2^{-1}\cdots)\delta(g_3h_3^{-1}\cdots)
\delta(K_3)\\
=& C[\partial {\bf T}_3 ]
\int_{\SU(2)}\left(\prod_{e'^*}\rd g_{e'^*} \right) \prod_{f'^*}\delta(\overrightarrow{\prod} g_{e'^*})
\int_{\SU(2)} \left(\prod_{i=1}^2\rd G_i\rd H_i\right)\rd g_3\rd h_3
\delta(G_1G_2^{-1}\cdots)\delta(H_1H_2^{-1}\cdots)\\
&\delta(G_2g_3^{-1}\cdots)\delta(H_2h_3^{-1}\cdots)
\delta(g_3G_1^{-1}\cdots)\delta(h_3H_1^{-1}\cdots)
\delta(G_1H_1^{-1}\cdots)\delta(G_2H_2^{-1}\cdots)\delta(g_3h_3^{-1}\cdots)\,.
\end{split}
\label{eq:Pachner_T3}
\ee
to arrive at the last equation, we have used $\delta(K_3)$ to eliminate $K_3$ in the expression. 
The boundary term $C[\partial {\bf T}_2 ]$ and $C[\partial {\bf T}_3 ]$ are the same since ${\bf T}_2$ and ${\bf T}_3$ possess the same boundary, so as the irrelevant parts. Therefore, \eqref{eq:Pachner_T2} and \eqref{eq:Pachner_T3} are exactly the same (although they are written with different notations).
\begin{figure}[h!]
\centering
\begin{minipage}{0.45\textwidth}
	\centering
\begin{tikzpicture}[scale=0.8]
\coordinate (A) at (0,0);
\coordinate (B) at (0,-1.5);

\coordinate (A1) at ([shift=(30:1)]A);
\coordinate (A2) at ([shift=(90:1)]A);
\coordinate (A3) at ([shift=(150:1)]A);
\coordinate (B1) at ([shift=(-30:1)]B);
\coordinate (B2) at ([shift=(-90:1)]B);
\coordinate (B3) at ([shift=(-150:1)]B);

\coordinate (a) at (4.3,-1);
\coordinate (a1) at ([shift=(30:1)]a);
\coordinate (a2) at ([shift=(90:1)]a);
\coordinate (a3) at ([shift=(150:1)]a);
\coordinate (b1) at ([shift=(-30:1)]a);
\coordinate (b2) at ([shift=(-90:1)]a);
\coordinate (b3) at ([shift=(-150:1)]a);

\draw[postaction={decorate},decoration={markings,mark={at position 1 with {\arrow[scale=1.5,>=stealth]{>}}}}] (A) -- (A3) node[at end,above]{$g_{1}$} ;
\draw[postaction={decorate},decoration={markings,mark={at position 1 with {\arrow[scale=1.5,>=stealth]{>}}}}] (A) -- (A2) node[at end,above]{$g_{2}$} ;
\draw[postaction={decorate},decoration={markings,mark={at position 1 with {\arrow[scale=1.5,>=stealth]{>}}}}] (A) -- (A1) node[at end,above]{$g_{3}$} ;
\draw[postaction={decorate},decoration={markings,mark={at position 1 with {\arrow[scale=1.5,>=stealth]{>}}}}] (B) -- (B3) node[at end,below]{$h_{1}$} ;
\draw[postaction={decorate},decoration={markings,mark={at position 1 with {\arrow[scale=1.5,>=stealth]{>}}}}] (B) -- (B2) node[at end,below]{$h_{2}$} ;
\draw[postaction={decorate},decoration={markings,mark={at position 1 with {\arrow[scale=1.5,>=stealth]{>}}}}] (B) -- (B1) node[at end,below]{$h_{3}$} ;

\draw[dashed, postaction={decorate},decoration={markings,mark={at position 0.55 with {\arrow[scale=1.5,>=stealth]{>}}}}] (B) -- (A) node[midway,right]{$k$} ;

\draw[red] (A) node{$\bullet$};\draw[red] (a) node{$\bullet$};

\draw[postaction={decorate},decoration={markings,mark={at position 1 with {\arrow[scale=1.5,>=stealth]{>}}}}] (a) -- (a3) node[at end,above]{$g_{1}$} ;
\draw[postaction={decorate},decoration={markings,mark={at position 1 with {\arrow[scale=1.5,>=stealth]{>}}}}] (a) -- (a2) node[at end,above]{$g_{2}$} ;
\draw[postaction={decorate},decoration={markings,mark={at position 1 with {\arrow[scale=1.5,>=stealth]{>}}}}] (a) -- (a1) node[at end,above]{$g_{3}$} ;
\draw[postaction={decorate},decoration={markings,mark={at position 1 with {\arrow[scale=1.5,>=stealth]{>}}}}] (a) -- (b3) node[at end,below]{$H_{1}$} ;
\draw[postaction={decorate},decoration={markings,mark={at position 1 with {\arrow[scale=1.5,>=stealth]{>}}}}] (a) -- (b2) node[at end,below]{$H_{2}$} ;
\draw[postaction={decorate},decoration={markings,mark={at position 1 with {\arrow[scale=1.5,>=stealth]{>}}}}] (a) -- (b1) node[at end,below]{$H_{3}$} ;

\coordinate (l) at (1.3,-1); \coordinate (r) at (2.8,-1);
\draw[->] (l) -- node[midway, above]{change of}node[midway,below]{variables} (r);
\end{tikzpicture}
\subcaption{}
\label{fig:trll_2}
\end{minipage}
\quad
\begin{minipage}{0.45\textwidth}
\centering
\begin{tikzpicture}[scale=0.8]
\coordinate (A) at (0,0);
\coordinate (B) at ([shift=(120:1.5)]A);
\coordinate (C) at ([shift=(180:1.5)]A);

\coordinate (A1) at ([shift=(-10:1)]A);
\coordinate (A2) at ([shift=(-50:1)]A);
\coordinate (B1) at ([shift=(70:1)]B);
\coordinate (B2) at ([shift=(110:1)]B);
\coordinate (C1) at ([shift=(190:1)]C);
\coordinate (C2) at ([shift=(230:1)]C);

\coordinate (a) at (5,0.8);
\coordinate (a1) at ([shift=(30:1)]a);
\coordinate (a2) at ([shift=(80:1)]a);
\coordinate (a3) at ([shift=(170:1)]a);
\coordinate (b1) at ([shift=(-30:1)]a);
\coordinate (b2) at ([shift=(-90:1)]a);
\coordinate (b3) at ([shift=(-150:1)]a);
\coordinate (c) at ([shift=(135:1)]a);

\draw[postaction={decorate},decoration={markings,mark={at position 1 with {\arrow[scale=1.5,>=stealth]{>}}}}] (A) -- (A1) node[at end,right]{$g_{3}$} ;
\draw[postaction={decorate},decoration={markings,mark={at position 1 with {\arrow[scale=1.5,>=stealth]{>}}}}] (A) -- (A2) node[at end,right]{$h_{3}$} ;
\draw[postaction={decorate},decoration={markings,mark={at position 1 with {\arrow[scale=1.5,>=stealth]{>}}}}] (B) -- (B1) node[at end,above]{$h_{2}$} ;
\draw[postaction={decorate},decoration={markings,mark={at position 1 with {\arrow[scale=1.5,>=stealth]{>}}}}] (B) -- (B2) node[at end,above]{$g_{2}$} ;
\draw[postaction={decorate},decoration={markings,mark={at position 1 with {\arrow[scale=1.5,>=stealth]{>}}}}] (C) -- (C1) node[at end,left]{$g_{1}$} ;
\draw[postaction={decorate},decoration={markings,mark={at position 1 with {\arrow[scale=1.5,>=stealth]{>}}}}] (C) -- (C2) node[at end,left]{$h_{1}$} ;

\draw[dashed, postaction={decorate},decoration={markings,mark={at position 0.55 with {\arrow[scale=1.5,>=stealth]{>}}}}] (A) -- (B) node[midway,right]{$k_2$} ;
\draw[dashed, postaction={decorate},decoration={markings,mark={at position 0.55 with {\arrow[scale=1.5,>=stealth]{>}}}}] (B) -- (C) node[midway,left]{$k_3$} ;
\draw[dashed, postaction={decorate},decoration={markings,mark={at position 0.55 with {\arrow[scale=1.5,>=stealth]{>}}}}] (C) -- (A) node[midway,below]{$k_1$} ;

\draw[red] (A) node{$\bullet$};\draw[red] (a) node{$\bullet$};

\draw[postaction={decorate},decoration={markings,mark={at position 1 with {\arrow[scale=1.5,>=stealth]{>}}}}] (a) -- (a3) node[at end,left]{$G_1$} ;
\draw[postaction={decorate},decoration={markings,mark={at position 1 with {\arrow[scale=1.5,>=stealth]{>}}}}] (a) -- (a2) node[at end,above]{$G_2$} ;
\draw[postaction={decorate},decoration={markings,mark={at position 1 with {\arrow[scale=1.5,>=stealth]{>}}}}] (a) -- (a1) node[at end,right]{$H_2$} ;
\draw[postaction={decorate},decoration={markings,mark={at position 1 with {\arrow[scale=1.5,>=stealth]{>}}}}] (a) -- (b3) node[at end,left]{$H_{1}$} ;
\draw[postaction={decorate},decoration={markings,mark={at position 1 with {\arrow[scale=1.5,>=stealth]{>}}}}] (a) -- (b2) node[at end,below]{$h_3$} ;
\draw[postaction={decorate},decoration={markings,mark={at position 1 with {\arrow[scale=1.5,>=stealth]{>}}}}] (a) -- (b1) node[at end,right]{$g_3$} ;

\coordinate (l) at (1.5,0.8); \coordinate (r) at (3,0.8);
\draw[->] (l) -- node[midway, above]{change of}node[midway,below]{variables} (r);

\draw[dashed, postaction={decorate},decoration={markings,mark={at position 1 with {\arrow[scale=1.5,>=stealth]{<}}}}]  (a) to [out=160,in=225] node[at end,above]{$K_3$}  (c);
\draw[dashed]  (a) to [out=90,in=45] (c);
\end{tikzpicture}
\subcaption{}
\label{fig:trll_3}	
\end{minipage}
\caption{(a) Dual triangulation ${\bf T}^*_2$ and the effective graph after acting gauge transformation on $h_i$ and redefine them from the base dual vertex {\it (in red)} as $h_i\rightarrow H_i=h_ik^{-1},i=1,\cdots,3$.
 (b) Dual triangulation ${\bf T}^*_3$ and the effective graph after acting gauge transformation on $g_{1,2},h_{1,2},k_3$ and redefine them starting from the base dual vertex as $(g_1,h_1,g_2,h_2,k_3)\rightarrow (G_1=g_1k_1^{-1},H_1=h_1k_1^{-1},G_2=g_2k_2,H_2=h_2k_2, k_3\rightarrow K_3 =k_1k_3k_2)$. The extra loop {\it (dashed)} for $\delta(K_3)$ can be eliminated by the integration over the delta function in the spinfoam amplitude: $\int_{\SU(2)}\rd K_3\delta(K_3)=1$. }
\label{fig:tree}
\end{figure}
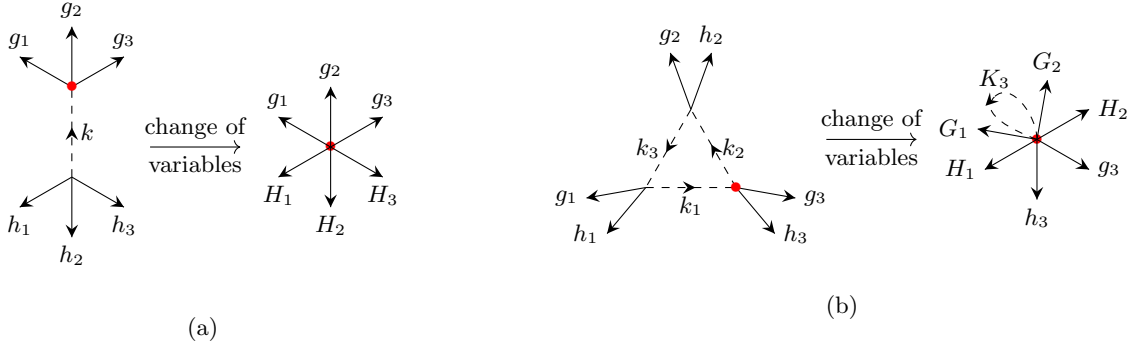

In the case of $1-4$ moves, one can apply the same method on ${\bf T}_4$, resulting in a divergent term $\delta_{\SU(2)}(\id)$. 
Indeed, the holonomies surrounding the four edges incident to the internal vertex are on four loops patched together to form a 2-sphere. Thus only three of the four delta functions of these holonomies are independent, which gives one extra delta function evaluated on identity.
This divergence corresponds to the translational symmetry of the internal vertex of ${\bf T}_4$, which is the infinite gauge volume of the Lie algebra $\su(2)$ \cite{Freidel:2002dw,Freidel:2004nb}.
To remove the divergence, one can go through the partial gauge fixing method \cite{Freidel:2002xb} (or equivalently the Fadeev-Popov gauge fixing procedure illustrated in \cite{Freidel:2002dw,Freidel:2004vi}).

On the other hand, starting from the spin formulation \eqref{eq:spin_form}, one can apply the {\it Biedenharn-Elliott Identity}
\be
\Mat{ccc}{j_1 & j_2 & j_3 \\ j_4 & j_5 & j_6}
\Mat{ccc}{j_7 & j_8 & j_3 \\ j_4 & j_5 & j_9}
=\sum_{j}(-1)^{J+\sum_{i=1}^9 j_i} d_J
\Mat{ccc}{j_1 & j_6 & j_6 \\ j_9 & j_7 & J}
\Mat{ccc}{j_2 & j_6 & j_4 \\ j_9 & j_8 & J}
\Mat{ccc}{j_1 & j_2 & j_3 \\ j_8 & j_7 & J}
\label{eq:EI}
\ee
and directly show that the partition function after $2-3$ moves is unchanged. For the $1-4$ move case, the corresponding identity for $\{6j\}$-symbols is \cite{Bonzom:2009zd}
\be
d_{J} \Mat{ccc}{j_1 & j_2 & j_3 \\ j_4 & j_5 & j_6}
=\sum_{l_i} (-1)^{\sum_{i=1}^6 j_i +\sum_{i=1}^3 l_i + J } d_{l_1}d_{l_2}d_{l_3}
  \Mat{ccc}{j_1 & j_2 & j_3 \\ l_1 & l_2 & l_3}
 \Mat{ccc}{j_6 & j_5 & j_1 \\ l_2 & l_3 & J}
 \Mat{ccc}{j_4 & j_2 & j_6 \\ l_3 & J & l_1}
 \Mat{ccc}{j_3 & j_4 & j_5 \\ J & l_1 & l_2}\,,
 \label{eq:41_identity}
\ee
which is true for any admissible $J$ corresponding to the length of one of the internal edges as shown in fig.\ref{fig:Pachner41}. \eqref{eq:41_identity} is the partial gauge fixing version of the apparent but divergent result \cite{Bonzom:2009zd}
\be
\sum_J d_{J}^2 \Mat{ccc}{j_1 & j_2 & j_3 \\ j_4 & j_5 & j_6}
=\sum_{l_i,J} (-1)^{\sum_{i=1}^6 j_i +\sum_{i=1}^3 l_i + J } d_{l_1}d_{l_2}d_{l_3}d_{J}
  \Mat{ccc}{j_1 & j_2 & j_3 \\ l_1 & l_2 & l_3}
 \Mat{ccc}{j_6 & j_5 & j_1 \\ l_2 & l_3 & J}
 \Mat{ccc}{j_4 & j_2 & j_6 \\ l_3 & J & l_1}
 \Mat{ccc}{j_3 & j_4 & j_5 \\ J & l_1 & l_2}\,,
 \label{eq:41_identity_diverge}
\ee
which is obtained by writing the amplitude for the four tetrahedra and then applying the Biedenharn-Elliott identity and the orthogonal relations of $\{6j\}$-symbols. 
The divergence comes from $\sum_{J}d_{J}^2=\delta_{\SU(2)}(\id)$ whose degree is related to the topology of $\cM$ (see \cite{Barrett:2008wh,Bonzom:2010ar,Bonzom:2010zh} for discussion). Regularization was originally performed by introducing a cut-off on spin $J$ \cite{Ponzano:1968se}. It was then realized that the divergence is correspondent to the $\su(2)$ gauge that generates the translational symmetry of the internal vertex thus the regularization can be performed by a partial gauge fixing procedure \cite{Freidel:2002dw,Freidel:2004vi}. 

These two methods can be straightforwardly extended to arbitrary cellular decomposition with $\{3nj\}$-symbols, from which one can show that the Ponzano-Regge state-sum model is topological invariant and only depends on the boundary states. 

\medskip
\subsubsection{Locally-holographic amplitude}
\label{subsubsec:local_amplitude}

Due to the topological invariance of the Ponzano-Regge model, the total amplitude is determined by the boundary state for a manifold $\cM$ of fixed topology. Such a boundary state can be provided by the spin network state which is defined in the loop quantum gravity framework. 
We first work on an arbitrary cellular decomposition of the manifold in order to give a general construction of the boundary spin network state. The results from the triangulation naturally follow, which will be the building blocks of the original Ponzano-Regge model. 

Consider a cellular decomposition of $\cM$ whose boundary is a union of 2-cells. We construct the oriented graph $\Gamma$ dual to this boundary cellular decomposition made up with $|L|$ links $l$'s, $|N|$ nodes $n$'s and $|P|$ plaquettes $p$'s. 
On $\Gamma$, we associate a spin $j_l$ to each oriented link $l$ and an intertwiner $\iota^n$ to each node $n$. This specifies a covariant phase space on $\Gamma$ \footnotemark{}. 
\footnotetext{
The phase space is called covariant because the boundary $\partial \cM$ of the $\cM$ is not specified to be a space-like slice $\Sigma$ of $\cM$ but a general 2D hypersurface. The spin network state on such a graph $\Gamma$ is also called the projected spin network state, introduced in \cite{Livine:2002ak}, which is used in the construction of the covariant LQG, see $\eg$ \cite{Alexandrov:2002br,Livine:2006ix}.
}
An intertwiner is an $\SU(2)$-invariant map from the tensor product of the spin representation spaces (or the dual spin representation space), associated to the links incident to the same node, to the trivial space:
\be
\iota^n_{(j_l)}:\left(\bigotimes_{l|s(l)=n} \cV^{j_l}\right)\otimes \left(\bigotimes_{l|t(l)=n} \cV^{j_l^*}\right)\rightarrow 0 \,,
\label{eq:intertwiner}
\ee
where $s(l)$ and $t(l)$ denote respectively the source and target node of the link $l$. 
Or equivalently, the basis of the intertwiner can be written as the tensor product of the magnetic basis followed by a group averaging,
\be
\iota^n_{(j_l)}(m_l)|0\rangle = \int_{\SU(2)} \rd h_n 
\left( \bigotimes_{l|t(l)=n} \langle j_l,n_l| \,h_n^{-1}\right) \otimes 
\left( \bigotimes_{l|s(l)=n} h_n\, | j_l,m_l\rangle\right)\,,
\label{eq:intertwiner_jm}
\ee
which is indeed $\SU(2)$-invariant.  
Finally, the spin network state $s_\Gamma^{(j_l,\iota_n)} $ on $\Gamma$ is simply defined as the tensor product of the intertwiners. Conventionally the spin network state is evaluated on the group elements $\{g_l\}\in \SU(2)^{|L|}$ associated to the links, thus
\be
s_\Gamma^{\{j_l,\iota_n\}}(g_l)=\sum_{m_l,n_l}\prod_l \langle j_l, m_l|g_l|j_l,n_l\rangle \,
\prod_n \langle \otimes_{l|t(l)=n}\,j_l,n_l|\iota^n_{(j_l)}|\otimes_{l|s(l)=n}\, j_l,m_l\rangle\,.
\label{eq:SN}
\ee
Its evaluation on identity $s_\Gamma^{(j_l,\iota_n)}(\id)$ plays the role of the vertex amplitude $\cA_{v^*}(j_l,\iota_n)$ of the spinfoam partition function \eqref{eq:amplitudes_boundary} and it describes the 2D boundary quantum geometry of an elementary 3-cell, which can be taken to be a polyhedron with no lose of generality. 

In short, the spinfoam can be viewed as a gluing, under certain gluing conditions, of ``bubbles'' which are homogeneously two-spheres dressed with spin network evaluation. When working on the spin network states, the gluing condition is to identify the shape of the glued boundaries, $\ie$ they have the same number of sides as polygons, and the spins are assigned on the glued links. The spinfoam can thus be written as
\be
Z[\triangle,\psi_{\Gamma=( \partial \triangle)^*_1}^j]= \sum_{\{j_{f^*}\}}\, \prod_{f^*}(-1)^{2j_{f^*}}d_{j_{f^*}}
\,\prod_{e^*}\text{Sign}({\iota_{e^*}})\, \prod_{v^*}  
s_{v^*}^{\{j_l,\iota_n\}}(\id) \,,
\label{eq:partition_cell}
\ee
where $s_{v^*}^{\{j_l,\iota_n\}}(\id)$ is the spin network evaluation on each bubble, and the edge amplitude $\text{Sign}({\iota_{e^*}})$ is a sign depending on the spins of the intertwiner on (the node dual to) the shared face dual to the dual edge $e^*$. The exact value of this sign depends on the choice of basis of the intertwiner.

When the cellular decomposition is specified to be a triangulation, each boundary is made up of a triangle and thus $\Gamma$ is identically three-valent, in which case the intertwiner is one-dimensional and thus uniquely defined under a chosen basis. 
In this case, the spin network evaluation is simply a $\{3nj\}$-symbol.
For a cellular decomposition whose boundary is not a 2-complex but a general 2-cell, $\Gamma$ can be higher-valent. In this case, one can add $s-3$ virtual links to separate an $s$-valent node into three-valent nodes (in a tree way) and obtain a ``fattened node'' in the same spirit as in fig.\ref{fig:hexagon}. One then assigns all admissible spins to each virtual link. For each admissible assignment, say $\{k_{\alpha}\},\alpha=1,\cdots,s-3$, the intertwiner for this fattened node corresponds to one basis $|\cI_a\rangle \langle \cI_a |$ of the intertwiner for the original $s$-valent node. This intertwiner basis combinatorial gives a $\{3nj\}$-symbol associated to the boundary ``fattened graph'' given by the original links plus the added virtual links. Thus we reproduce the result of the Ponzano-Regge amplitude by separating the delta distribution on the $\SU(2)$ group described in the last subsection, $\ie$
\be
Z[\triangle,\psi_{\Gamma}^j]= \sum_{\{j_{f^*}\}}\, \prod_{f^*}(-1)^{2j_{f^*}}d_{j_{f^*}}\,\sum_{\{a\}} \prod_{e^*}(-1)^{\sum_{i=1}^s j_i + 2\sum_{\alpha}^{s-3}k_{\alpha}}\, \prod_{v^*} \{3nj\}_{v^*,a}\,,
\nn\ee
where the power in the edge amplitude is the sum of spins on the links dual to edges surrounding the $s$-gon plus twice of the spin on the links on the virtual links. These spins need to be added twice since each added link is dual to a virtual edge that is on the boundary of two triangles (see black dashed line in fig.\ref{fig:hexagon}).
 
The smallest three-valent graph embedded on a two-sphere is indeed a tetrahedron graph $\Gamma=(\partial T)^*_1$, as illustrated in fig.\ref{fig:spinnetwork}. The spin network state trivially evaluated on a tetrahedron graph gives a $\{6j\}$-symbol \footnotemark{},
\footnotetext{
We have ignored the notation $\iota^n$ for the intertwiners on the left-hand side for simplicity. The intertwiners are implicit in the definition of the $\{6j\}$-symbol.
}
\be
s_{\tet}^{\{j_l\}}(\id)=\Mat{ccc}{j_1 & j_2 & j_3 \\ j_4 & j_5 & j_6}\,,
\label{eq:6j}
\ee
which is exactly the vertex amplitude we obtained through decomposing the delta distribution on $\SU(2)$. 
This is the Ponzano-Regge state-sum for the simplest triangulation of a 3-ball (with no summation at all), describing the boundary quantum geometry of a tetrahedron. In the semi-classical limit, seen from scaling $j$ to $\lambda j $ and taking $\lambda \rightarrow \infty$, the $\{6j\}$-symbol is given by the Hartle-Sorkin action \cite{Hartle:1981cf} in Regge calculus for a tetrahedron in terms of the edge lengths and dihedral angles:
\be
\Mat{ccc}{\lambda j_1 & \lambda j_2 &\lambda j_3 \\ \lambda j_4 & \lambda j_5 & \lambda j_6}\xrightarrow{\lambda\rightarrow \infty} \f{1}{\sqrt{12\pi V}}\cos\left( S_{\text{HS}}(\{\lambda j_l+\f12\}) +\f{\pi}{4} \right)\,,\quad\text{ with }\quad
S_{\text{HS}}(\{\ell_l\})=\sum_{l=1}^6 \ell_l \Theta_l\,,
\label{eq:6j-asymptotic}
\ee
where $\Theta_l$ is the dihedral angle around the edge $e$ (dual to link $l$) and $V$ is the volume of the tetrahedron with edge length $\ell_l=\lambda j_l+\f12\,,l=1,\cdots, 6$ (see also Section \ref{sec:geo_SGF} for a more detailed analysis).
 This asymptotic was postulated by Ponzano and Regge \cite{Ponzano:1968se} and proven in different methods \cite{Roberts:1998zka,Schulten:1975sem,Freidel:2002mj,Barrett:1993db}. The Hartle-Sorkin action \eqref{eq:6j-asymptotic} is the discrete version the GHY boundary term \cite{Hartle:1981cf}:
\be
\int_{\partial \cM} \rd^2 \sqrt{h}K\xrightarrow{\text{discretize}} \sum_{l\in (\partial \triangle)^*_1} \ell_l\Theta_l\,.
\ee   
For a general $\{3nj\}$-symbol, the semi-classical limit also encodes the geometry of the boundary 2-cell it describes \cite{Dowdall:2009eg}.  
The reproduction of the vertex amplitude with the boundary states exposes the fact that the vertex amplitude is a local-holographic amplitude, encoding only the boundary data of the 3-cell it is associated to. 
The gluing process for two adjacent 3-cells can be understood as smearing the data on the shared boundaries. 
In this way, the degrees of freedom on the shared boundaries become gauge through gluing, and the only physical degrees of freedom are on the union of the un-glued boundaries, $\ie$ the cellular decomposition of $\partial \cM$. 
This is exactly why the bulk part of the amplitude is independent of the cellular decomposition and we can refine it up to the quantum limit.

In the Ponzano-Regge state-sum \eqref{eq:spin_form}, the boundary data are the lengths of the one-skeleton encoded in the $\{6j\}$-symbols. Therefore, the path integral constructed as such depends on the boundary metric and thus the length scale. Apparently, the scale invariance of the bulk is merely obtained by the summation of the spin labels in the bulk, $\ie$ the smearing of the length scale. 

In the next section, we will study a scale-invariant path integral even if with boundary configuration. To this end, it is natural to choose a scale-invariant boundary state, then one can define a scale-invariant vertex amplitude as the evaluation of this new quantum state on the boundary of an elementary 3-cell.   
These states should encode the conformal geometry, $\ie$ angles, on the boundary surface. 
Technically, to define such a conformal boundary quantum state means to find an ``alternative" representation of $\SU(2)$ that can be geometrically interpreted as angles. 
Given a different formulation of the partition function from \eqref{eq:spin_form}, the recursion relation of $\{6j\}$-symbols would be replaced by other identities in order to reproduce the topological invariance, which will be the case illustrated in Section \ref{sec:Topo_inv_coherence}.

\section{A new coherent holomorphic state-integral}
\label{sec:SGF}

The ``alternative'' representation we are going to apply is the spinor representation of $\SU(2)$ \cite{Livine:2011gp}, developed from the $\bU(N)$ formulation of LQG \cite{Dupuis:2010iq,Freidel:2010tt,Borja:2010rc}. 
The quantum states under this representation are called the {\it coherent spin network states} (see below) \cite{Bonzom:2011nv}. 
After a concise review of its general construction, we will specialize in the coherent spin network state of a tetrahedron graph, whose evaluation on identity after changing the weight gives the SGF \cite{Schwinger:1965an,Bargmann:1962zz}. 
We will construct a new Ponzano-Regge model with spinor variables, where the SGF serves as the vertex amplitude. Similar to the original one, it can also be seen as built with local amplitudes associated to the elementary bubbles.

\subsection{Generating function for spin networks}
\label{sec:spinor}

Let us introduce the $\SU(2)$ spin coherent state (or the $\SU(2)$ coherent state for short) {\it à la} Perelomov, denoted as $|j,z\rangle$ with a fixed spin $j\in\N/2$ and a spinor $|z\rangle:=\mat{c}{z^0\\z^1}\in \bC^2$. It is a superposition state of a pair of harmonic oscillators $|n^0,n^1\rangle_{\HO}$, identified with a magnetic number basis $|j,m\rangle\in \cV^j$ by the relation $j=\f12 (n^0 + n^1), \,m=\f12 (n^0 - n^1)$, and reads
\be
|j,z\rangle:=\frac{(z^A a^{A\dagger})^{2j}}{\sqrt{(2j)!}}|0\rangle \equiv
\sum_{m=-j}^{j}\sqrt{\frac{(2j)!}{(j+m)!(j-m)!}}(z^0)^{j+m}(z^1)^{j-m}|j,m\rangle\,,
\label{eq:jz_basis}
\ee
where $a^{A\dagger}$ is the creation operator acting on the oscillator $n^A$, $A=0,1$. This is the state that admits the generalized minimal uncertainty given by the dispersion of the $\SU(2)$ Casimir \cite{Livine:2007vk}. 
The norm is easily computed $\langle j,z|j,z\rangle = \langle z|z\rangle^{2j}$.
For a fixed spin $j$, $|j,z\rangle$ serves as an alternative orthonormal basis spanning the representation space $\cV^j$. We refer to \cite{Livine:2011gp} for more details. 

It is easy to check that the creation and annihilation operators act on the $\SU(2)$ coherent state as
\be
a^A|j,z\rangle =\sqrt{2j}\, z^A \,|j-\f12, z\rangle \,,\quad
a^{A\dagger} |j,z\rangle =\frac{1}{\sqrt{2j+1}}\frac{\partial}{\partial z^A}\,|j+\f12,z\rangle \,,
\label{eq:a_on_jz}
\ee
thus $a^A$ decreases the spin by $1/2$ as well as multiplying the state by $z^A$, while $a^{A\dagger}$ increases the spin by $1/2$ and derives the state by $z^A$. 

We also introduce a dual $\SU(2)$ coherent state $[j,z|\equiv \langle j, \varsigma z|$ in terms of a dual spinor $[z|=\langle \varsigma z|:= \mat{cc}{-z^1 ,& z^0}$, which is also holomorphic $z^A$, living in the dual representation space $\cV^{j*}$ (see \eqref{eq:dual_spinor}) \footnotemark{}. $a^A$ acts on $[j,z|$ as the creation operator while $a^{A\dagger}$ acts as the annihilation operator,
\be
[j,z|\,a^A 
=-\epsilon^{AB}\frac{1}{\sqrt{2j+1}}\frac{\partial}{\partial z^B}\, [j+\f12,z|\,,\quad
[j,z|\,a^{A\dagger}
=-\epsilon^{AB}\sqrt{2j}\,z^B \,[j-\f12,z|\,.
\label{eq:a_on_jz_dual}
\ee
\footnotetext{
We remind the readers that $[j,z|$ is called the dual spinor due to the fact that it is orthogonal to the regular spinor $|j, z\rangle$, $\ie$ $[j,z|j,z\rangle=0$, but not because it lives in the dual representation $\cV^{j*}$. $|j,z]=|j,\varsigma z\rangle$ is the dual spinor living in $\cV^j$ and anti-holomorphic on $z^A$.
}
Spinors have been used to construct the spinorial phase space of loop gravity, which is equivalent to the holonomy-flux phase space \cite{Livine:2011gp}. A brief summary is given in Appendix \ref{app:spinorial}. 

Consider an oriented graph $\Gamma$ with $|L|$ oriented links $l$'s, $|N|$ nodes $n$'s and $|P|$ plaquettes $p$'s. 
We dress each link $l$ with a spin $j_l$, and associate a spinor $|z_l\rangle$ to the source $s(l)$ of $l$ and another spinor $|\zt_l\ra$ to the target $t(l)$. 
For an $N$-valent node $n$, we can construct an intertwiner living in the tensor space $\left(\bigotimes_{l|s(l)=n}\cV^{j_l}\right)\otimes\left( \bigotimes_{l|t(l)=n}\cV^{j_l^*}\right)$ by $\SU(2)$-group averaging the tensor product of the $\SU(2)$ holomorphic coherent states, $\ie$
\be
\iota^n_{(j_l)}(z_l)\,|0\rangle:=\int_{\SU(2)}\rd g_n \left( \bigotimes_{l|t(l)=n}[j_l,\zt_l|\,g_n^{-1}\right) \otimes \left(\bigotimes_{l|s(l)=n}g_n\,|j_l,z_l\rangle\right)\,.
\label{eq:LS_intertwiner}
\ee
This is called the {\it LS coherent intertwiner}, first introduced in \cite{Livine:2007vk} (see also \cite{Freidel:2010tt}) and used to define the EPRL-FK spinfoam models \cite{Engle:2007wy,Livine:2007ya}. It is also closely related to the $\bU(N)$ coherent states which are by definition covariant under the $\bU(N)$ action \cite{Freidel:2010tt}. 

\eqref{eq:intertwiner_jm} and \eqref{eq:LS_intertwiner} are simply projections of a general $\SU(2)$ intertwiner on different bases, the former on the magnetic number basis while the latter on the coherent state basis.
Equipped with the intertwiners \eqref{eq:LS_intertwiner}, one can define a spin network state evaluated on $\SU(2)$ group elements $\{g_l\}$ as a holomorphic function of the spinors:
\be
s_\Gamma^{\{j_l,z_l,\zt_l\}}(g_l)= \int_{\SU(2)^{|N|}}\prod_{n}\rd h_n\prod_{l} \,
[ j_l,\zt_l |\, h_{t(l)}^{-1}\, g_l \,h_{s(l)}\,| j_l,z_l\rangle\,.
\label{eq:SN_spinor}
\ee

It is also possible to eliminate the spins and define an intertwiner associated to a node $n$ with only spinor labels, which can be viewed as a generating function of the LS coherent intertwiners \eqref{eq:LS_intertwiner} with a chosen series of weight \cite{Bonzom:2012bn}. 
The simplest weight is $\frac{1}{\prod_{l\in n} \sqrt{(2j_l)!}}$, which defines the {\it coherent intertwiner} \footnotemark{},
\be
\iota^n(z_l) = \sum_{\{j_l\}} \frac{1}{\prod_{l\in n} \sqrt{(2j_l)!}} \iota^n_{(j_l)}(z_l)\,.
\label{eq:coherent_intertwiner}
\ee
It is indeed an $\SU(2)$ invariant state in $\bigoplus_{\{j_l\}}\left(\bigotimes_{l|s(l)=n}\cV^{j_l}\right)\otimes\left( \bigotimes_{l|t(l)=n}\cV^{j_l^*}\right)$.
We associate a coherent intertwiner to each node and glue them along links associated with $\SU(2)$ holonomies. The gluing is performed in the standard way, by taking different irreducible representations orthogonal. The result defines the coherent spin network state \cite{Bonzom:2012bn}
\be
s_\Gamma^{\cohe}(g_l)=
\sum_{\{j_l\}} \int_{\SU(2)^{|N|}}\prod_{n}\rd h_n\prod_{l} \,
\f{1}{(2j_l)!}[ j_l,\zt_l |\, h_{t(l)}^{-1}\, g_l \,h_{s(l)}\,| j_l,z_l\rangle 
=\int_{\SU(2)^{|N|}}\prod_{n}\rd h_n\prod_{l} \, e^{[\zt_l| h_{t(l)}^{-1}\, g_l \,h_{s(l)}|z_l\rangle}\,.
\label{eq:coherent_SN}
\ee
\footnotetext{
The term ``coherent intertwiner'' was used to denote the LS coherent intertwiner \eqref{eq:LS_intertwiner} for short in some literature. We remind the readers that these two terms have distinct definitions in this paper, following \cite{Bonzom:2012bn}.
}

A slightly different choice of weight, which is what we will focus on in this paper, is $\frac{(J_n+1)!}{\prod_{l\in n} \sqrt{(2j_l)!}}$ \footnotemark{}, 
\footnotetext{
Other choices of weight lead to different generating functions, which would be useful for different interests. See $\eg$ \cite{Bonzom:2012bn} for a discussion and the application of other generating functions with alternative weights.
}
where $J_n\equiv \sum_{l\in n}j_l$ is the sum of spins on the links incident to $n$. It defines the {\it scaleless intertwiner} (as we will see in the next section that it encodes the scale-invariant geometry of a 3-cell boundary):
\be
\iota_n^{\sl}(z_l) = \sum_{\{j_l\}} \frac{(J_n+1)!}{\prod_{l\in n} \sqrt{(2j_l)!}} \iota^n_{(j_l)}(z_l)\,.
\label{eq:scaleless_intertwiner}
\ee

Again, we associate a scaleless intertwiner to each node and glue them in the same way as the coherent intertwiners. We define it as the {\it scaleless spin network state} :  
\be
s^{\sl}_{\Gamma}(g_l)= \sum_{\{j_l\}}\prod_n \frac{(J_n+1)!}{\prod_{l\in n}(2j_l)!} \int_{\SU(2)^{|N|}}\prod_{n}\rd h_n\prod_{l} \,
[ j_l,\zt_l |\, h_{t(l)}^{-1}\, g_l \,h_{s(l)}\,| j_l,z_l\rangle\,.
\label{eq:SpinorN}
\ee
It can be viewed as a generating function of the spin network state \eqref{eq:SN}. 
The use of spinors shifts the view of building blocks of quantum geometries from the links to the nodes, which is also the spirit behind the construction of the $\bU(N)$ coherent states \cite{Freidel:2010tt,Borja:2010rc,Dupuis:2010iq}. 

Above we have defined the coherent intertwiners, coherent spin network states, scaleless intertwiners and the scaleless spin network states for a general graph. The goal is to use these notions to define a new vertex amplitude in terms of spinors in the Ponzano-Regge model. To this end, we will work on a three-valent graph in the next subsection. More specifically, we will study the tetrahedron graph as shown in fig.\ref{fig:spinnetwork} and study the evaluation of the scaleless spin network state on this simple graph.

\subsection{The holomorphic $\{12z^{\times 2}\}$ symbol and 2D Ising on a tetrahedron}

In this subsection and the next, we will fix the cellular decomposition of $\cM$  to be a triangulation unless specified and intensively work on the tetrahedron graph $\Gamma=(\partial T)^*_1$ that is a 2D dual graph of the boundary two-skeleton of a tetrahedron $T$. The simplicity it brings helps to quantify the scaleless intertwiners \eqref{eq:scaleless_intertwiner} and the scaleless spin network functions \eqref{eq:SpinorN}. 
On the other hand, it turns out that the scaleless spin network state for a tetrahedron graph, when evaluated on the identity, possesses a closed form known as the SGF.

For a three-valent graph, 
the intertwiner \eqref{eq:LS_intertwiner} for each node is one-dimensional, thus it must be proportional to the $\{3j\}$-symbol. The exact relation is well-known \cite{Varshalovich:1988qu}, with the proportionality coefficient given by a holomorphic polynomial of degree $J_n$. Consider three outgoing links $(l_1,l_2,l_3)$ meeting at the node $n$, we denote the total spin as $J_{123}=j_1+j_2+j_3$. Then one gets
\be
\iota^{n}_{j_1j_2j_3}(z_1,z_2,z_3)=P_{j_1j_2j_3}(z_1,z_2,z_3)\, \iota^{n}_{j_1j_2j_3}\,,
\label{eq:intertwiner_123}
\ee
with
\be
P_{j_1j_2j_3}(z_1,z_2,z_3) 
 = \frac{\triangle(j_1j_2j_3)}{(J_{123}+1)!}
\left(\prod_{l=1}^3 \sqrt{(2j_l)!}\right)
[z_1|z_2\rangle^{J_{123}-2j_3}\,[z_2|z_3\rangle^{J_{123}-2j_1}\,[z_3|z_1\rangle^{J_{123}-2j_2}\,,
\label{eq:P_jz}
\ee
where $\triangle(j_1j_2j_3)$ is the quantum triangle coefficient defined as
\be
\triangle(j_1j_2j_3)=\sqrt{\frac{(J_{123}+1)!}{(j_1+j_2-j_3)!(j_1+j_3-j_2)!(j_2+j_3-j_1)!}}\,.
\ee
The scaleless intertwiner \eqref{eq:scaleless_intertwiner} for the node with incident links $(l_1,l_2,l_3)$ outgoing reads
\be
\iota_n^{\sl} (z_1,z_2,z_3) = \sum_{j_1,j_2,j_3} \frac{(J_{123}+1)!}{\prod_{l=1}^3 \sqrt{(2j_l)!}}
\iota^n_{j_1j_2j_3}(z_1,z_2,z_3)\,,
\ee
which is an invariant vector on $\otimes_{l=1}^3(\oplus_{j_l}\cV^{j_l} )$, and is also a generating function of the $\{3j\}$-symbol. 

The proportionality coefficient \eqref{eq:P_jz} remains unchanged when some links are incoming (except that, in this case, we denote the spinor on an incoming link with a tilde).
For instance, for a node with links $e_1$, $e_2$ incoming and link $e_3$ outgoing, the relation reads
\be
\iota^n_{j_1^*j_2^*j_3}(\zt_1,\zt_2,z_3)=P_{j_1j_2j_3}(\zt_1,\zt_2,z_3)\, \iota^n_{j_1^*j_2^*j_3}\,.
\ee

The scaleless spin network state for the tetrahedron graph naturally follows except that there is a sign ambiguity as the graph is odd-valent. 
The reason is that the sign of the intertwiner \eqref{eq:intertwiner_123} would be changed under exchanging any pair of spinors in the argument, thus it is necessary to fix the ordering of links incident to a node in order to specify (the sign of) the definition of the scaleless spin network function. 
In practice, it is enough to fix a cyclic order $\prec$ for each node. To do this, we first embed the graph $\Gamma$ on a 2D oriented surface $\Sigma$. 
Consider three links $(l,l',l'')$ incident to one node $n$, we call $l'$ is of higher order than $l$, denoted as $l\prec l'$(or $l'\succ l$), when the sweeping from $l$ to $l'$ (in the direction without touching $l''$) induces a surface with the same orientation as $\Sigma$. Otherwise, $l'$ is of lower order than $l$, denoted as $l \succ l'$ (or $l' \prec l$). For an ordering $(l \prec l', l' \prec, l'', l'' \prec l )$ as in fig.\ref{fig:order_node}, we fix an ordered holomorphic polynomial $P^{\prec}_{j_lj_{l'}j_{l''}}(z_l,z_{l'},z_{l''}) $ to be
\be
P^{\prec}_{j_ej_{l'}j_{l''}}(z_l,z_{l'},z_{l''}) 
 = \frac{\triangle(j_{l}j_{l'}j_{l''})}{(J_n+1)!}
\left(\prod_{l\in n} \sqrt{(2j_l)!}\right)\,
[z_l|z_{l'}\rangle^{J_n-2j_{l''}}\,[z_{l'}|z_{l''}\rangle^{J_n-2j_{l}}\,[z_{l''}|z_l\rangle^{J_n-2j_{l'}}\,.
\ee
We have ignored the tilde of spinors for incoming links for simplicity and unification. 
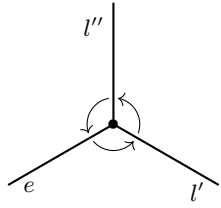
\begin{figure}[h!]
	\centering
\begin{tikzpicture}[scale=0.7]
\coordinate (A) at (0,0);
\coordinate (B) at (4*4/4,0);
\coordinate (C) at (2*4/4,3.464*4/4);

\coordinate (c) at ($ (A)!.5!(B) $);
\coordinate (b) at ($ (A)!.5!(C) $);
\coordinate (a) at ($ (B)!.5!(C) $);

\path[name path = Aa] (A) -- (a);
\path[name path = Bb] (B) -- (b);
\path[name path = Cc] (C) -- (c);
\path [name intersections = {of = Aa and Bb,by=O}];

\draw[thick] (A) -- node[pos=0.2,below]{$e$}(O);
\draw[thick] (B) -- node[pos=0.2,below]{$l'$}(O);
\draw[thick] (C) -- node[pos=0.2,left]{$l''$}(O);

\draw[->] ([shift=(-20:0.5cm)]O) arc (-20:80:0.5cm);
\draw[->] ([shift=(100:0.5cm)]O) arc (100:200:0.5cm);
\draw[->] ([shift=(220:0.5cm)]O) arc (220:320:0.5cm);

\draw (O) node{$\bullet$};

\end{tikzpicture}
\caption{The cyclic order $(l \prec l', l' \prec l'', l'' \prec l )$ of three links $l,l',l''$ incident to the same node.}	
\label{fig:order_node}
\end{figure}

Consider again a tetrahedron graph as in fig.\ref{fig:spinnetwork}, it can be naturally embedded in a 2-sphere, which generates the cyclic order for all the nodes at once. The scaleless spin network function is then uniquely defined as
\be
\cS_{\tet}^{\{z_l,\zt_l\}}(g_l)=\sum_{j_1,\cdots,j_6}\frac{\prod_{n=1}^4 (J_n+1)!}{\prod_{l=1}^6 (2j_l)!}s_{\tet}^{\{j_l,z_l,\zt_l\}}(g_l)\,,
\label{eq:spinor_from_spin_network}
\ee
with
\be\begin{split}
s_{\tet}^{\{j_l,z_l,\zt_l\}}(g_l)
&=\int_{\SU(2)^{4}}\prod_{n=1}^4 \rd h_n \prod_{l=1}^6 
 [j_l,\zt_l|\,h_{t(l)}^{-1}\, g_l \, h_{s(l)}\, |j_l,z_l\rangle\\
&=P^{\prec}_{j_1j_2j_3}(z_1,z_2,z_3)P^{\prec}_{j_1j_5j_6}(\zt_1,\zt_5,z_6)P^{\prec}_{j_3j_4j_5}(\zt_3,\zt_4,z_5)P^{\prec}_{j_2j_4j_6}(\zt_2,\zt_6,z_4)\,
s_{\tet}^{\{j_l,\iota_n\}}(g_l)
\end{split}
\label{eq:spinor_network_j}
\ee
being the special case of \eqref{eq:SN_spinor} for a tetrahedron graph, which can be factorized, as shown in the second line of \eqref{eq:spinor_network_j}, into the standard spin network function independent of the spinors, and a holomorphic polynomial independent of the arguments $\{g_l\}$. 
In particular, its evaluation on identity gives a holomorphic ``$\{12z^{\times 2}\}$ symbol'', known as the Schwinger's generating function (SGF) of the $\{6j\}$-symbols, which is a function of 12 spinors thus 24 complex variables,
\be
\cS^{\sl}_{\tet}(\{z_l,\zt_l\})
=\cS_{\tet}^{\{z_l,\zt_l\}}(\id)=
\sum_{j_1\cdots j_6} \left[ \prod_{n=1}^4\sqrt{\frac{(J_n+1)!}{\prod_{l\in n}(J_n-2j_l)!}} \right]
\left\{ 
\ba{ccc}
j_1 & j_2 & j_3 \\
 j_4 & j_5 & j_6
\ea\right\}
\prod_{n=1}^4\prod_{\ba{c}l,l',l''\in n, \\ l\prec l' \ea} \left[z_{l}|z_{l'}\right>^{J_n-2j_{l''}}\,.
\label{eq:SGF_1}
\ee
It was first found to be of the closed form by Schwinger \cite{Schwinger:1965an,Bargmann:1962zz,Bonzom:2011nv}. (See also \cite{Bonzom:2015ova} for the deduction from the duality between the 2D Ising model and the Ponzano-Regge model). It is in a form of a scaleless function: 
\be\begin{split}
\cS(\{z_l,\zt_l\})\equiv\cS^{\sl}_{\tet}(\{z_l,\zt_l\})=G(\{z_l,\zt_l\})^{-2}\,,\quad
G(\{z_l,\zt_l\})
&=1+\sum_{\cL}\prod_{v\subset \cL \,,\, l \prec l' } [z_l|z_{l'}\rangle\\
&=1+\sum_{(3c)}^4
\prod_{\ba{c}v\subset (3c) \\ l \prec l' \ea}\left[ z_{l}|z_{l'} \right>
+\sum_{(4c)}^3
\prod_{\ba{c}v\subset (4c) \\ l \prec l' \ea}\left[ z_{l}|z_{l'} \right>\,,
\label{eq:evaluate_SGF}
\end{split}\ee
where $\cL$'s denote the loops in the tetrahedron graph, including three-cycles $(3c)$'s and four-cycles $(4c)$'s. 
Use the notation in fig.\ref{fig:spinnetwork_b}, 
the cycle sums are explicitly
\begin{align}
\sum_{(3c)}^4\prod_{\ba{c} v\subset (3c) \\ l \prec l' \ea}\left[ z_{l}|z_{l'} \right>
&=\left[ z_1|z_2 \right>\! \left[\zt_2|\zt_6 \right> \!\left[z_6|\zt_1 \right> 
+\left[ \zt_3|z_1 \right>\! \left[ \zt_1|z_5 \right> \!\left[ \zt_5|z_3 \right> 
+\left[ z_2|\zt_3 \right>\! \left[ z_3|z_4 \right>\! \left[ \zt_4|\zt_2 \right> 
+\left[ \zt_6|\zt_4 \right>\! \left[ z_4|\zt_5 \right>\! \left[ z_5|z_6 \right> \,,\\
\sum_{(4c)}^3
\prod_{\ba{c} v\subset (4c) \\ l \prec l' \ea}\left[ z_{l}|z_{l'} \right>
&=\left[ \zt_3|z_1 \right>\!\left[ \zt_1|z_6 \right>\!\left[ \zt_6|\zt_4 \right>\!\left[ z_4|z_3 \right>
+\left[ \zt_5|z_3 \right>\!\left[ \zt_3|z_2 \right>\!\left[ \zt_2|\zt_6 \right>\!\left[ z_6|z_5 \right>
+\left[ \zt_1|z_5 \right>\!\left[ \zt_5|z_4 \right>\!\left[ \zt_4|\zt_2 \right>\!\left[ z_2|z_1 \right>\,.
\end{align}
Examples of the 3-cycles and 4-cycles for the tetrahedron graph are given in fig.\ref{fig:3_4_cycles}. 
Note that the SGF \eqref{eq:evaluate_SGF} does not depend on the orientation of edges but only the order of the half edges incident to each vertex, which is already fixed to be consistent with the orientation of the surface they are embedded on. 
Each inner product of a pair of spinors $[z_l|z_{l'}\rangle$ contained in a cycle is associated to an angle, thus we can understand the cycles as the result of gluing those angles in cyclic order. 
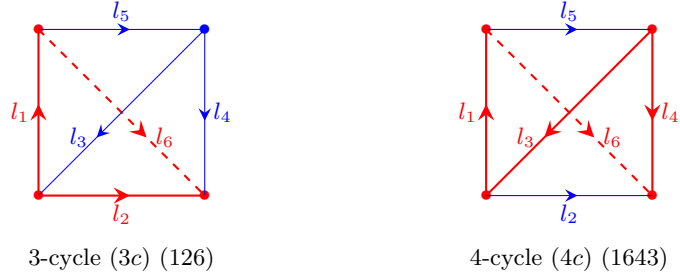
\begin{figure}[h!]
\centering
\begin{tikzpicture}[scale=1.7]
\coordinate (O1) at (0,0);

\coordinate (A1) at (0,0.8);
\coordinate (B1) at (0,-0.5);
\coordinate (C1) at (1.3,0.8);
\coordinate (D1) at (1.3,-0.5);
\coordinate (aa) at (0.65,-1);
\coordinate (bb) at (4.15,-1);
\draw (aa) node{3-cycle $(3c)$ $(126)$} ;
\draw (bb) node{4-cycle $(4c)$ $(1643)$} ;

\draw[red] (A1) node{$\bullet$};
\draw[red] (B1) node{$\bullet$};
\draw[blue] (C1) node{$\bullet$};
\draw[red] (D1) node{$\bullet$};

\draw[thick, red,postaction={decorate},decoration={markings,mark={at position 0.55 with {\arrow[scale=1.5,>=stealth]{>}}}}] (B1) -- (A1) node[midway,left]{$l_{1}$} ;
\draw[blue,postaction={decorate},decoration={markings,mark={at position 0.55 with {\arrow[scale=1.5,>=stealth]{>}}}}] (A1) -- (C1) node[midway,above]{$l_{5}$};
\draw[dashed,thick, red,postaction={decorate},decoration={markings,mark={at position 0.65 with {\arrow[scale=1.5,>=stealth]{>}}}}] (A1) -- (D1) node[pos=0.65,right]{$l_{6}$};
\draw[blue,postaction={decorate},decoration={markings,mark={at position 0.65 with {\arrow[scale=1.5,>=stealth]{>}}}}] (C1) -- (B1)  node[pos=0.65,left]{$l_{3}$};
\draw[blue,postaction={decorate},decoration={markings,mark={at position 0.55 with {\arrow[scale=1.5,>=stealth]{>}}}}] (C1) -- (D1)  node[midway,right]{$l_{4}$};
\draw[thick, red,postaction={decorate},decoration={markings,mark={at position 0.55 with {\arrow[scale=1.5,>=stealth]{>}}}}] (B1) -- (D1) node[midway,below]{$l_{2}$};

\coordinate (A2) at (3.5,0.8);
\coordinate (B2) at (3.5,-0.5);
\coordinate (C2) at (4.8,0.8);
\coordinate (D2) at (4.8,-0.5);
\draw[red] (A2) node{$\bullet$};
\draw[red] (B2) node{$\bullet$};
\draw[red] (C2) node{$\bullet$};
\draw[red] (D2) node{$\bullet$};
\draw[thick, red,postaction={decorate},decoration={markings,mark={at position 0.55 with {\arrow[scale=1.5,>=stealth]{>}}}}] (B2) -- (A2)  node[midway,left]{$l_{1}$} ;
\draw[blue,postaction={decorate},decoration={markings,mark={at position 0.55 with {\arrow[scale=1.5,>=stealth]{>}}}}] (A2) -- (C2) node[midway,above]{$l_{5}$};
\draw[dashed,thick, red,postaction={decorate},decoration={markings,mark={at position 0.65 with {\arrow[scale=1.5,>=stealth]{>}}}}] (A2) -- (D2) node[pos=0.65,right]{$l_{6}$};
\draw[thick,red,postaction={decorate},decoration={markings,mark={at position 0.65 with {\arrow[scale=1.5,>=stealth]{>}}}}] (C2) -- (B2)  node[pos=0.65,left]{$l_{3}$};
\draw[thick, red,postaction={decorate},decoration={markings,mark={at position 0.55 with {\arrow[scale=1.5,>=stealth]{>}}}}] (C2) -- (D2)  node[midway,right]{$l_{4}$};
\draw[blue,postaction={decorate},decoration={markings,mark={at position 0.55 with {\arrow[scale=1.5,>=stealth]{>}}}}] (B2) -- (D2) node[midway,below]{$l_{2}$};

\end{tikzpicture}
\caption{({\it in red}) Cycle $(126)$ as an example of 3-cycles and cycle $(1643)$ as an example of 4-cycles. The numbers in the bracket denote the links on the cycle. }
\label{fig:3_4_cycles}
\end{figure}

\medskip
\noindent {\bf Duality between SGF and the 2D Ising on a tetrahedron. } 
\medskip

It was shown in \cite{Bonzom:2015ova} that the SGF is related to the high-temperature expansion of the Ising model defined on the same tetrahedron graph. In fact, such a relation can be 
generalized to an arbitrary three-valent planar graph (with the Kasteleyn orientation).  
Explicitly, the scaleless spin network evaluation, as a function of the spinors, is inversely proportional to the square of the amplitude $Z^{\text{Ising}}_{\Gamma}$, as a function of the edge couplings $\{Y_l\}$ in terms of the spinors $\{z_l,\tz_l\}$, of the Ising model, which is a polynomial of the couplings. Precisely,
\be
\cS^{\sl}_\Gamma(\{z_l,\tz_l\}) =\left( 2^{2|N|} \prod_{e\in\Gamma} \cosh^2 y_l \right) \f{1}{\left[Z^{\text{Ising}}_\Gamma(\{y_l\})\right]^2}\,, \quad \text{ with }\quad
Z^{\text{Ising}}_\Gamma(\{y_l\}) = \sum_{\cL \in \Gamma} \prod_{l\in \cL} \tanh y_l\,,
\ee
where the sum is over all the disjoint loops $\cL$'s in $\Gamma$. 
The pole of the SGF thus corresponds to the zeros of the 2D Ising amplitude for the same graph. 
The dependence of the scaleless spin network states on the spinors can also be rewritten as the dependence on the link couplings (defined below in \ref{sec:geo_SGF}) as one-to-one correspondence to the edge couplings in the Ising mode. 
Thus this duality directly gives an exact and closed expression of the scaleless spin network for a general three-valent graph $\Gamma$ (see \cite{Bonzom:2015ova}). The exact expression is also proven directly in \cite{Westbury:1998ge,Freidel:2012ji}. 
A similar result was also found for a square 2D lattice \cite{Dittrich:2013jxa}

\medskip

Let us comment that the SGF can be understood in two ways, leading to the application of different interests:
\begin{enumerate}
	\item As indicated by the name, the SGF is a generating function of the $\{6j\}$-symbols, thus it contains all the information of the spin network evaluation of a tetrahedron graph in its expansion. 
That the spins are restricted to be half-integers and admissible in the $\{6j\}$-symbols brings difficulties in numerical analysis, which was often circumvented by looking at the large $j$ limit of the $\{6j\}$-symbol. 
The computation issues are translated to complex analysis when we consider the SGF, which is expected to be easier to deal with as it is a continuous function of 12 independent spinors. The large $j$ limit of the $\{6j\}$-symbols then contributes dominantly at the poles of the SGF \cite{Bonzom:2019dpg}. 

Other than the possible computational benefits, the SGF is attractive because of its remarkable closed form. A similar closed form for the generating function of the $9j$-symbols was also constructed in the original work by Schwinger \cite{Schwinger:1965an}. Later on, Bargmann \cite{Bargmann:1962zz} reproduced Schwinger's results and his method was used to construct the generating function of the $12j$- and $15j$-symbols \cite{Huang:1974gs} and more generally the $\{3nj\}$-symbols \cite{Labarthe:1975yf}, which can also be expressed as loops. 
	\item The SGF is also understood as (the evaluation of) a quantum state - the scaleless spin network state -
that describes the physical state of a tetrahedron graph in the spinor representation. 
The physical state of a graph is most simply represented by the flatness projection written in terms of the holonomies, which is the product of the Dirac deltas on holonomies corresponding to independent cycles in the graph. For a graph embedded on the manifold of trivial topology, the physical state is unique, which is the case for the tetrahedron graph embedded in the 2-sphere.
Using the notation in fig.\ref{fig:spinnetwork}, the physical state for a tetrahedron graph can be written as
\be
\psi_{\phys}(g_l)=\delta(g_6g_1g_2^{-1})\delta(g_1^{-1}g_5^{-1}g_3^{-1})\delta(g_3g_4^{-1}g_2)\,.
\label{eq:physical_state}
\ee
When projected onto the scaleless spin network basis, $\psi_{\phys}(g_l)$ gives exactly the SGF,
\be
\psi_{\phys}(\{z_l,\zt_l\})= \int_{\SU(2)^6}\prod_{l=1}^6 \rd g_l \cS_\tet^{\{z_l,\zt_l\}}(g_l) \psi_\phys(g_l) = \cS(\{z_l,\zt_l\})\,.
\ee
This can be deduced from \eqref{eq:spinor_from_spin_network} and the fact that the projection of the physical state onto the spin network basis \eqref{eq:SN} is the $\{6j\}$-symbol \cite{Bonzom:2011hm}:
\be
\psi_{\phys}(j_l)=\int_{\SU(2)^6}\prod_{l=1}^6 \rd g_l \, s_\tet^{\{j_l,\iota_n \}}(g_l) \psi_\phys (g_l) = \Mat{ccc}{j_1 & j_2 & j_3 \\ j_4 & j_5 & j_6}\,.
\ee
The scaleless spin network states can be used to describe the kinematical information in the Hilbert space of quantum geometry. By the continuous nature of the spinor arguments in a scaleless spin network state, the dynamics given by the Wheeler de-Witt equation would be translated into a differential equation of the SGF \cite{Bonzom:2011nv}, rather than a recursion relation as for the $\{6j\}$-symbols when we consider the spin network states \cite{Bonzom:2011hm}.
The interested reader will find a short review with details of this approach in Appendix \ref{app:WdW_SGF}.

\end{enumerate}

\medskip
\noindent {\bf Symmetries of the SGF. } 
\medskip

The loop structure \eqref{eq:evaluate_SGF} of the SGF brings a large number of degrees of symmetry.
Compared to a $\{6j\}$-symbol which has only 6 real variables, the SGF has 12 independent spinors, thus 24 complex or 48 real variables, in its argument. 
The symmetries would allow us to work on a smaller set of variables. 
Firstly, the SGF \eqref{eq:SGF_1} is explicitly written only in terms of the inner product $[z_l|z_{l'}\rangle$ of spinors, which is by definition invariant under the $\SL(2,\bC)$ action that acts covariantly on the spinors
\be
g_n \triangleright |z_l'\rangle = g_n|z_l'\rangle\,,\quad
g_n \triangleright [z_l| =  [z_l|g_n^{-1}\,,\quad
l, l' \in n\,,\quad
g_n \in \SL(2,\bC)\,.
\ee
Thus the spinors form a set with 12 complex or 24 real variables. 
Secondly, notice that the exact evaluation \eqref{eq:evaluate_SGF} takes the form as cycles, a (complex) rescaling of the spinor $z_l \rightarrow \alpha z_l$ $(\alpha\in \bC\backslash \{0\})$ on one end of each link and an ``anti-rescaling'' of the spinor $\tz_l \rightarrow  \tz_l/\alpha$ on the other end leaves the SGF unchanged.
Therefore, the symmetries of the SGF can be expressed as 
\be
\cS(\{z^n_l,\tz^{n'}_l\}) = \cS(\{g_n\, z^n_l, g_{n'}\, \tz^{n'}_l\}) = \cS(\{\alpha_l\, z^n_l, \alpha_l^{-1} \tz^{n'}_l\})\,,\quad
\forall\, g_n \in \SL(2,\bC)\,,
\forall\, \alpha_l \in \bC\backslash \{0\}\,.
\label{eq:symm_SGF}
\ee
In words, the SGF is invariant under the $\SL(2,\bC)$ gauge transformation, one for each node hence a total of 12 complex degrees of freedom, and anti-scale transformation, one for each link hence a total of 6 complex degrees of freedom. 
These symmetries form the full redundant degrees of freedom in the 12 spinors, leaving 6 complex degrees of freedom in the SGF. 
The same redundancy appears when we use the spinor representation to describe the scaleless spin network states for a general graph. 
It is thus possible to change the argument variables of the SGF to a smaller set.  We will see in Subsection \ref{sec:geo_SGF} that a change of variables will make it clear to see the geometrical information given by this scaleless spin network state.

\subsection{A new Ponzano-Regge state-integral formula}
We now turn our attention back to the Ponzano-Regge model. We aim to decompose the discrete path integral 
\be
Z[{\bf T},\partial{\bf T}]= C[\partial {\bf T} ] \int_{\SU(2)} \prod_{e^*\notin \partial{\bf T}}\rd g_{e^*} \,\prod_{f^*}\delta(\overrightarrow{\prod}_{e^*\in \partial f^*}g_{e^*})\nn
\ee 
into the product of vertex, edge and face amplitudes encoding local quantum geometries through the spinor representation labels, so that the global quantum geometry can be understood as gluing elementary blocks with local geometry information stored in these spinor variables. 
The total amplitude should be written alternatively in the form as
\be
\cA[{\bf T},\psi_\Gamma^{\sl} ]=\int \left[\rd \mu(z)\right] \prod_{f^*} \cA_{f^*}[z_{f^*}] \prod_{e^*}\cA_{e^*}[z_{e^*}]\prod_{v^*}\cA_{v^*}[z_{v^*}]\,,
\label{eq:state_integral}
\ee
where $\psi^{\sl}_\Gamma$ is the boundary scaleless spin network state and $\rd\mu(z):=\frac{1}{\pi^2}e^{-\left<z|z\right>}\rd z^0 \rd z^1$ is the Haar measure of spinors (see also Appendix \ref{app:spinorial}). With the use of spinors, therefore, the amplitude would be written as a ``state-integral'' instead of a state-sum.

Let us first introduce the notations we will use in the state-integral formulas both in Proposition \ref{prop:state_integral_2} (also in \eqref{eq:PR_cohe}, which is proved in Appendix \ref{app:coherent_PR}). For the vertex amplitude, we denote the spinors (or dual spinors) on the source $s(l)$ and target $t(l)$ of a link $l$ as $z_l$ and $\tz_l$ respectively. For the edge amplitude of a dual edge $e^*$, we consider the node $n$ in a triangle shared by two adjacent tetrahedra. Therefore, spinors on $v$ have two independent copies, one from the tetrahedron dual to $s(e^*)$ and the other from the tetrahedron dual to $t(e^*)$. For a link $l\in n$, we denote the two spinors from the two tetrahedra respectively as $z_l^{s(e^*)}$ and $z_l^{t(e^*)}$. For the face amplitude, we consider a dual face $f^*$ whose boundary loop connects $M(\geq 3)$ tetrahedra. One needs to choose randomly a node on the boundary of one of these tetrahedra, say $T_1$. We denote the spinor on this node that will contribute to this face amplitude as $z^{f^*,T_1}$.
\begin{figure}[h!]
	\centering
\begin{tikzpicture}[one end extended/.style={shorten >=-#1},one end extended/.default=0.5cm]
	\coordinate (O) at (0,0);
	\def\rr{1.5};
	\path[name path = l1] (\rr,2) -- (\rr,-2);
	\path[name path = l2,rotate around={45:(O)}] (\rr,2) -- (\rr,-2);
	\path[name path = l3,rotate around={90:(O)}] (\rr,2) -- (\rr,-2);
	\path[name path = l4,rotate around={135:(O)}] (\rr,2) -- (\rr,-2);
	\path[name path = l5,rotate around={180:(O)}] (\rr,2) -- (\rr,-2);
	\path[name path = l6,rotate around={-135:(O)}] (\rr,2) -- (\rr,-2);
	\path[name path = l7,rotate around={-90:(O)}] (\rr,2) -- (\rr,-2);
	\path[name path = l8,rotate around={-45:(O)}] (\rr,2) -- (\rr,-2);
	
	\path [name intersections = {of = l1 and l2,by=T1}];
	\path [name intersections = {of = l2 and l3,by=T2}];
	\path [name intersections = {of = l3 and l4,by=T3}];
	\path [name intersections = {of = l4 and l5,by=T4}];
	\path [name intersections = {of = l5 and l6,by=T5}];
	\path [name intersections = {of = l6 and l7,by=T6}];
	\path [name intersections = {of = l7 and l8,by=T7}];
	\path [name intersections = {of = l8 and l1,by=T8}];
	
	\draw[red,thick,decoration={markings,mark=at position 0.7 with {\arrow[scale=1.3,>=stealth]{>}}},postaction={decorate}] (T8) -- node[pos=0.7,right]{$l_1^*$} (T1);
	\draw[red,thick,decoration={markings,mark=at position 0.7 with {\arrow[scale=1.3,>=stealth]{>}}},postaction={decorate}] (T1) -- node[pos=0.7,above right]{$l_2^*$} (T2);
	\draw[red,thick,decoration={markings,mark=at position 0.7 with {\arrow[scale=1.3,>=stealth]{>}}},postaction={decorate}] (T2) -- node[pos=0.7,above]{$l_3^*$} (T3);
	\draw[red,thick,dashed,decoration={markings,mark=at position 0.75 with {\arrow[scale=1.3,>=stealth]{>}}},postaction={decorate}] (T3) -- (T4);
	\draw[red,thick,decoration={markings,mark=at position 0.7 with {\arrow[scale=1.3,>=stealth]{>}}},postaction={decorate}] (T4) -- node[pos=0.85,above left]{$l_{i-1}^*$} (T5);
	\draw[red,thick,decoration={markings,mark=at position 0.7 with {\arrow[scale=1.3,>=stealth]{>}}},postaction={decorate}] (T5) -- node[pos=0.7,below left]{$l_i^*$} (T6);
	\draw[red,thick,dashed,decoration={markings,mark=at position 0.7 with {\arrow[scale=1.3,>=stealth]{>}}},postaction={decorate}] (T6) -- (T7);
	\draw[red,thick,decoration={markings,mark=at position 0.7 with {\arrow[scale=1.3,>=stealth]{>}}},postaction={decorate}] (T7) -- node[pos=0.7,below right]{$l_M^*$} (T8);
	
	\draw[one end extended] (O) -- ($(T1)!(O)!(T2)$);
	\draw[one end extended] (O) -- ($(T2)!(O)!(T3)$);
	\draw[dashed, one end extended] (O) -- ($(T3)!(O)!(T4)$);
	\draw[one end extended] (O) -- ($(T4)!(O)!(T5)$);
	\draw[one end extended] (O) -- ($(T5)!(O)!(T6)$);
	\draw[dashed, one end extended] (O) -- ($(T6)!(O)!(T7)$);
	\draw[one end extended] (O) -- ($(T7)!(O)!(T8)$);
	\draw[one end extended] (O) -- ($(T8)!(O)!(T1)$);
	\draw[red] (T1) node{$\bullet$};
	\draw[red] (T2) node{$\bullet$};
	\draw[red] (T3) node{$\bullet$};
	\draw[red] (T4) node{$\bullet$};
	\draw[red] (T5) node{$\bullet$};
	\draw[red] (T6) node{$\bullet$};
	\draw[red] (T7) node{$\bullet$};
	\draw[red] (T8) node{$\bullet$};
	\draw (T1) node[above right]{$T_2$};
	\draw (T2) node[above right]{$T_3$};
	\draw (T3) node[above left]{$T_4$};
	\draw (T4) node[above left]{$T_{i-1}$};
	\draw (T5) node[below left]{$T_i$};
	\draw (T6) node[below ]{$T_{i+1}$};
	\draw (T7) node[below right]{$T_M$};
	\draw (T8) node[below right]{$T_1$};
	
	\draw ([shift=(20:0.3)]O) node{$e$};	
\end{tikzpicture}
	\caption{A dual face {\it (in red)} dual to the edge $e$ surrounded by a loop $(e_1^*e_2^*\cdots e_M^*e_1^*)$, with $e_i^*$ dual to triangle $t_i$. Two adjacent tetrahedra $T_i$ and $T_{i+1}$ (identifying $T_{M+1}\equiv T_1$) are glued along the triangle $t_i$.}
	\label{fig:order_tetra}
\end{figure}

Our starting point is the expression of the SGF in terms of the group integral similar to \eqref{eq:coherent_SN} \cite{Bonzom:2011nv}
\be
\cS(\{z_l,\zt_l\})
=\sum_{j_1\cdots j_6} \left(\prod_{n=1}^4(J_n+1)! \int_{\SU(2)}\rd h_n \right)
\prod_{l=1}^6 \frac{1}{(2j_l)!}
\left[ j_l, \zt_l \right| h_{t(l)}^{-1}h_{s(l)}\left|j_l,z_l \right>
\label{eq:SGF}
\ee
and the fact that (the bulk part of) the Ponzano-Regge amplitude \eqref{eq:group_form} can be written in the following way as the collection of local vertex, edge and face amplitudes. 
\be
\cA_{\bf T}[\cM] =\int \left[\rd \mu(z)\right]
\,\prod_{f^*} \left( \langle z^{f^*,T_1}|z^{f^*,T_1}\rangle-1 \right) 
\prod_{e^*} \left( e^{\sum_{l\in n} \langle z_l^{s(e^*)}|z_l^{t(e^*)}]} \right)
\prod_{v^*} \left( \int_{\SU(2)^4}\prod_{n=1}^4 \rd h_n\, e^{\sum_{l=1}^6 [\zt_l|h_{t(l)}^{-1}h_{s(l)}|z_l\rangle} \right)\,,
\label{eq:PR_cohe}
\ee
where each vertex amplitude is given by the coherent state \eqref{eq:coherent_SN} associated to a tetrahedron graph evaluated on identity. The proof of this re-expression is given in Appendix \ref{app:coherent_PR}.
This is a state-integral expression of the Ponzano-Regge model since, instead of summation of spins, it is written in terms of integrals over spinors. A similar state-integral model for 4D BF theory was explicitly constructed in \cite{Dupuis:2011fz}. 
The vertex amplitude in \eqref{eq:PR_cohe} no longer depends on the spins, thus it is expected to be irrelevant to the size of the tetrahedron it is associated to. 
The drawback of this construction is that the $\SU(2)$ group elements are still included in the formula other than spinors, making it hard to single out the geometrical information stored in the spinors themselves. Thus we aim to promote the integral expression \eqref{eq:PR_cohe} in order that only spinors are left in the integral expression. 

To get rid of the $\SU(2)$ group integral in the vertex amplitude, we replace the vertex amplitude by the SGF.
This is promising logically due to the fact that the vertex amplitude is simply the spin network state evaluation on the boundary of a tetrahedron. Therefore, the SGF, as the scaleless spin network evaluation, is a natural replacement of the $\{6j\}$-symbol.   

 Notice that the difference between \eqref{eq:SGF} from \eqref{eq:coherent_SN} is the factor $(J_n+1)!$ for each node. This can be cancelled out by modifying the edge amplitude, and the result is given in the following proposition. The notations are the same as in Proposition \ref{prop:state_integral_1}.
\begin{prop}
The spinfoam model can be expressed as a state-integral
\be
\cA_{\bf T}[\cM,\psi_\Gamma^{\sl} ]=\int \left[\rd \mu(z)\right] \prod_{f^*} \cA_{f^*}[z_{f^*}] \prod_{e^*}\cA_{e^*}[z_{e^*}]\prod_{v^*}\cA_{v^*}[z_{v^*}]
\nn\ee
with the vertex, edge and face amplitude written as
\begin{align}
\cA_{v^*}
&=\cS^{\sl}_{v^*}(\{z_l,\tz_l\})= \frac{1}{(1+\sum_{\cL}\prod_{v\subset \cL \,,\, l \prec l' } [z_l|z_{l'}\rangle)^2}
\label{eq:vertex_amplitude_new}\\
\cA_{e^*}
&=
\sum_{k=0}^{\infty} \f{1}{(k+1)!^2 (2k)!}\left( \sum_{l\in n}\langle z_l^{s(e^*)}|w_l^{t(e^*)}] \right)^{2k} 
=\,{}_0F_3(;2,2,\f12; \f{\left(\sum_{l\in n}  \langle z_l^{s(e^*)}|w_l^{t(e^*)}]\right)^2}{4}  ) 
\label{eq:edge_amplitude_new}\\
\cA_{f^*}
&=\langle z^{f^*,T_1}|z^{f^*,T_1}\rangle-1\,.
\label{eq:face_amplitude_new}
\end{align}	
\label{prop:state_integral_2}
$\psi^{\sl}_\Gamma$ is the boundary scaleless spin network state. 
\end{prop}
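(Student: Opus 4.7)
The approach is to begin from the coherent-state form \eqref{eq:PR_cohe} of the Ponzano--Regge amplitude (proved in Appendix \ref{app:coherent_PR}) and replace the vertex amplitude $V_{v^*}$ (the coherent spin-network evaluation at identity) by the Schwinger generating function $\cS^{\sl}_{v^*}$ (the scaleless spin-network evaluation at identity) of \eqref{eq:SGF}. Peter--Weyl expanding the vertex exponential via the coherent-state identity $[\tz|g|z\ra^{2j}=[j,\tz|g|j,z\ra$ yields $V_{v^*}=\sum_{\{j_l\}}\prod_l(2j_l)!^{-1}W^{v^*}_{\{j_l\}}$, where $W^{v^*}_{\{j_l\}}=\int\prod_n\rd h_n\prod_l[j_l,\tz_l|h_{t(l)}^{-1}h_{s(l)}|j_l,z_l\ra$ is the $\SU(2)$-invariant kernel that already appears in \eqref{eq:SGF}; comparison then shows that $\cS^{\sl}_{v^*}$ differs from $V_{v^*}$ by the insertion of an extra factor $(J_n+1)!$ per node, exactly as expected from the weight difference between \eqref{eq:coherent_intertwiner} and \eqref{eq:scaleless_intertwiner}.

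Since each bulk dual edge $e^*$ is dual to a triangle shared by two tetrahedra, each producing one 3-valent node with the same spins and the same $J_n\in\N$, replacing both adjacent vertex amplitudes by SGF introduces an overcount factor $(J_n+1)!^2$ per bulk $e^*$; for boundary dual edges, the single leftover $(J_n+1)!$ is exactly what upgrades the coherent boundary state implicit in \eqref{eq:PR_cohe} to the advertised scaleless one $\psi^{\sl}_\Gamma$. To compensate the bulk overcount inside the new edge amplitude, I would expand $e^{X_{e^*}}=\sum_{N_1,N_2,N_3\ge 0}\prod_l X_l^{N_l}/N_l!$ with $X_l=\la z_l^{s(e^*)}|z_l^{t(e^*)}]$ and $X_{e^*}=\sum_l X_l$, note that the Gaussian spinor measure $\rd\mu(z)$ couples the polynomial degrees across adjacent vertex amplitudes to force $N_l=2j_l$, and observe that admissibility of the $\{3j\}$-symbols at the node forces $J_n\in\N$, hence $K:=\sum_l N_l=2k$ is even. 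Dividing by $(k+1)!^2$ and resumming the admissible configurations at fixed $k$ via the multinomial identity $\sum_{N_1+N_2+N_3=2k}\prod_l X_l^{N_l}/N_l!=X_{e^*}^{2k}/(2k)!$ collapses the compensated sum to
\be
\cA_{e^*}=\sum_{k=0}^\infty\frac{X_{e^*}^{2k}}{(k+1)!^2\,(2k)!}\,,
\ee
matching \eqref{eq:edge_amplitude_new}. Identifying $(k+1)!=(2)_k$ and using the Gamma-function duplication $(2k)!=4^k k!(\f12)_k$ then rewrites the series as ${}_0F_3(;2,2,\f12;X_{e^*}^2/4)$.

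The face amplitude is carried through unchanged from \eqref{eq:PR_cohe}, giving \eqref{eq:face_amplitude_new}, since no $J_n$-factors were introduced there. The main obstacle is the combinatorial bookkeeping: verifying that the admissibility condition $J_n\in\N$ translates exactly to the parity restriction $K=2k$ on the multinomial expansion of $e^{X_{e^*}}$, confirming that the Gaussian spinor integrals consistently identify a single spin per shared triangle across the three adjacent amplitudes, and checking that the counting of $(J_n+1)!$-factors between bulk and boundary dual edges matches precisely so that all non-admissible contributions drop out and the clean factorized form of the proposition emerges.
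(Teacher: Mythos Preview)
Your proposal is correct and follows essentially the same strategy as the paper: both start from the coherent-block formula \eqref{eq:PR_cohe}, recognize that the SGF differs from the coherent vertex by a factor $(J_n+1)!$ per node, and exploit the fact that the Gaussian spinor integrals force degree-matching ($N_l=2j_l$) so that these node-dependent factors can be shuttled from the vertex into the edge amplitude. The only difference is in implementation: the paper inserts $\tfrac{1}{(J_n+1)!}$ via the contour-integral representation $\tfrac{1}{(M-1)!}=\tfrac{1}{2\pi i}\oint \rd s\, e^s s^{-M}$ and then evaluates the two resulting contour integrals at each edge to produce the ${}_0F_3$, whereas you divide the degree-$2k$ piece of $e^{X_{e^*}}$ directly by $(k+1)!^2$ and resum with the multinomial identity---a slightly more elementary route that makes the parity restriction and the boundary bookkeeping (one leftover $(J_n+1)!$ upgrading $\psi^{\text{cohe}}_\Gamma$ to $\psi^{\sl}_\Gamma$) more explicit.
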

\begin{proof}
This is a re-arrangement of the vertex amplitude and edge amplitude compared to \eqref{eq:vertex_amplitude}-\eqref{eq:face_amplitude}. 
The difference between the vertex amplitude in \eqref{eq:PR_cohe} (or \eqref{eq:vertex_amplitude}) and \eqref{eq:vertex_amplitude_new} is a factorial factor. 
We first apply the identity
\be
\frac{1}{(M-1)!}=\frac{1}{2\pi i}\oint\limits_{|s|=r_0} \rd s \frac{e^s}{s^M}
\ee
to rewrite the vertex amplitude \eqref{eq:vertex_amplitude} so that it is related to the SGF \eqref{eq:SGF}:
\be
s_{\tet}^{\cohe}(\id)=
\sum_{j_1\cdots j_6}
\left(\prod_{n=1}^4 \frac{1}{2\pi i}\oint \rd s_n \frac{e^{s_n}}{s_n^{J_n+2}} (J_n+1)! 
 \int_{\SU(2)}\rd h_n \right) 
\prod_{l=1}^6 \frac{1}{(2j_l)!}
[ j_l, z_l | h_{t(l)}^{-1}h_{s(l)}|j_l,\zt_l \rangle \,.
\label{eq:vertex_amplitude_to_SGF}
\ee
Then we expand the exponential of the edge amplitude in \eqref{eq:PR_cohe} 
\be
e^{\sum_{l\in n} \langle z_l^{s(e^*)}|z_l^{t(e^*)}]}= \sum_{k_1\cdots k_3\in \N/2}
\prod_{l\in n} \frac{\langle k_l, z_l^{s(e^*)}|k_l, z_l^{t(e^*)}]}{(2k_l)!}\,.
\label{eq:expand_edge_amplitude}
\ee
The spinor integration in the total amplitude expression will select $k_l\equiv j_l\,,\,\forall l$ from \eqref{eq:vertex_amplitude_to_SGF} and \eqref{eq:expand_edge_amplitude}, thus we are safe to redefine $J_n= \sum_{l\in n}k_l$. This moves the contour integral from the vertex amplitude to the edge amplitude, leaving the vertex amplitude purely given by the SGF. As such, each edge amplitude absorbs two contour integrals, one from the tetrahedron $s(e^*)$ and the other from $t(e^*)$, and is written as
\be\begin{split}
\cA_{e^*}(\{z_l\},\{w_l\})
&=
\frac{1}{(2\pi i)^2}\oint\rd s\oint \rd t \,
\frac{e^{s+t}}{(st)^2} 
\sum_{k_1\cdots k_3\in\N/2} \prod_{l\in n}
 \frac{1}{(st)^{k_l}}\frac{\langle k_l, z_l^{s(e^*)}|k_l, z_l^{t(e^*)}]}{(2k_l)!}\\
&= 
\frac{1}{(2\pi i)^2}\oint\rd s\oint \rd t \,
\frac{1}{(st)^2}e^{s+t+\frac{1}{\sqrt{st}}\sum_{l\in n}\langle z_l^{s(e^*)}|w_l^{t(e^*)}] }\\
&=\frac{1}{(2\pi i)^2}\oint\rd s\oint \rd t\,
\sum_{k,m,u\in \N}\f{1}{k!m!u!} s^{k-2-\f{u}{2}} \, t^{m-2-\f{u}{2}} \left(\sum_{l\in n} \langle z_l^{s(e^*)}|w_l^{t(e^*)}]\right)^{k} \\
&=\sum_{k=0}^{\infty} \f{1}{(k+1)!^2 (2k)!}\left( \sum_{l\in n}\langle z_l^{s(e^*)}|w_l^{t(e^*)}] \right)^{2k}  \\
&=\,{}_0F_3(;2,2,\f12; \f{\left(\sum_{l\in n}  \langle z_l^{s(e^*)}|w_l^{t(e^*)}]\right)^2}{4}  ) 
\end{split}
\label{eq:new_ldge_w_contour}
\ee
thus \eqref{eq:edge_amplitude_new}. We have used the identity $\langle k_l, z_l^{s(e^*)}|k_l, z_l^{t(e^*)}] \equiv 
\langle z_l^{s(e^*)}|w_l^{t(e^*)}]^{2k_l}$ to get the second line. To arrived at the fourth line, we have identified $u$ with $2k-2$ and $m$ with $k$ followed with a change of variable $k\rightarrow k+1$ since $\f{1}{2\pi i}\oint\rd t \,t^u = 1$ for $u=-1$ and zero otherwise.                                                                                                                          
\end{proof}

To summarize, Proposition \ref{prop:state_integral_2} is a re-grouped version of \eqref{eq:PR_cohe}. In order to have a simple vertex amplitude with a nice geometrical interpretation, the edge amplitude becomes more complicated. However, we will see in Subsection \ref{sec:glue} that this new edge amplitude also possesses a nice geometrical interpretation that is compatible with that of the new vertex amplitude \eqref{eq:vertex_amplitude_new}.

\subsection{Topological invariance of the holomorphic blocks}
\label{sec:Topo_inv_coherence}
 The topological invariance of the new Ponzano-Regge state-integral formula can also be proven by performing the $2-3$ and $1-4$ Pachner moves. To do this, one can either express the vertex amplitudes \eqref{eq:vertex_amplitude_new} into the explicit form as the generating function of the $\{6j\}$-symbols \eqref{eq:SGF_1} then use the recursion relation of the $\{6j\}$-symbols \eqref{eq:EI} and \eqref{eq:41_identity}, or work on the spinors variables to prove the invariance of total amplitude under Pachner moves. The former approach is rather straightforward hence here we only illustrate the latter approach. We consider in this section the gluing of general 3-cells through triangles and show that the resulting total amplitude after gluing is given by the boundary scaleless spin network state on the union boundary, which means that the state-integral model is topological invariant. We will show it through three different types of gluing which serve as the elementary steps to glue disjoint 3-cells. The first type is to glue two 3-cells through one triangle with no extra internal edge produced (Proposition \ref{prop:gluing_3_edges}). The second type is to glue two adjacent triangles sharing one edge on the boundary of one 3-cell and produce one extra internal edge (Proposition \ref{prop:gluing_2_edges}). The third type is to glue two adjacent triangles sharing two edges on the boundary of one 3-cell and produce two internal edges and one internal vertex (Proposition \ref{prop:gluing_1_edge}). 
 The topological invariance of the Ponzano-Regge state-integral model, which can be directly derived by analyzing the total amplitude under a combination of these three types of gluing without changing the topological nature of the 3-cell, is a natural conclusion (Corollary \ref{cor:topological}).

The scaleless spin network state for a three-valent graph can be expressed into a Gaussian integral \cite{Bargmann:1962zz,Bonzom:2015ova}. Let us introduce an auxiliary spinor $|\xi_{nl}\rangle =\mat{c}{\xi_{nl}^0\\ \xi_{nl}^1} \in \bC^2$ attached to each half link $l$ incident to node $n$ with the same spinor measure $\rd \mu(\xi_{nl})=\f{1}{\pi^2}\rd \xi_{nl}^0\rd\bxi_{nl}^0\rd \xi_{nl}^1\rd\bxi_{nl}^1 e^{-\langle \xi_{nl}|\xi_{nl}\rangle}$. The dual auxiliary spinor is defined in the same way as $z$'s, $\ie$ $|\xi_{nl}]=\mat{c}{-\bxi_{nl}^1\\ \bxi_{nl}^0}$. 
To unify the notation, we denote the new spinors at the source $s(l)$ and target $t(l)$ of a link $l$ as $\xi_l$ and $\txi_l$. 
Consider a general 3-cell $\triangle$ whose boundary $\partial \triangle$ is made up by jointed triangles whose dual graph $(\partial \triangle)^*_1$ is a (closed) three-valent graph. The correspondent scaleless spin network state is \cite{Bonzom:2015ova} 
\be
\cS^{\sl}_{(\partial \triangle)^*_1}(\{z_l,\tz_l\})=\int\left( \prod_{l} \rd\mu(\xi_{l})\rd\mu(\txi_{l}) \right)
e^{\sum_{l} \langle \xi_{l}|\txi_{l}] + \sum_{\alpha} [z_{s(\alpha)}|z_{t(\alpha)}\rangle [\xi_{s(\alpha)}|\xi_{t(\alpha)}\rangle }\,,
\label{eq:SGF_aux}
\ee
where $\alpha$ is the angle formed by two links $s(\alpha)$ and $t(\alpha)$ incident to a same node with the cyclic order $s(\alpha)\prec t(\alpha)$. We denote $\alpha\in n$ if $s(\alpha),t(\alpha)\in n$. Clearly \eqref{eq:SGF_aux} contains a link-term $e^{\sum_{l} \langle \xi_{l}|\txi_{l}] }$ and an angle-term $e^{\sum_{\alpha} [z_{s(\alpha)}|z_{t(\alpha)}\rangle [\xi_{s(\alpha)}|\xi_{t(\alpha)}\rangle}$.

\begin{lemma}
\label{lemma:gluing_type}
Arbitrary gluing of 3-cells through triangles can be separated into a sequence of the following three types of gluing:
\begin{itemize}
	\item {\bf Type \ri}: identifying three edges of two disjoint triangles, each on the boundary of one 3-cell (fig.\ref{fig:gluing_3_edges_1}), whose result is one 3-cell with no extra internal edges produced (fig.\ref{fig:gluing_3_edges_2});
	\item {\bf Type \rii}: identifying the two remaining edges of two adjacent triangles sharing one edge on the boundary of one 3-cell (fig\ref{fig:gluing_2_edges_1}), whose result is one 3-cell with one extra internal edge produced (fig.\ref{fig:gluing_2_edges_2});
	\item {\bf Type \riii}: identifying the remaining edge of two adjacent triangles sharing two edges on the boundary of one 3-cell (fig.\ref{fig:gluing_1_edge_1}), whose result is one 3-cell with two extra internal edges and one internal vertex produced (fig.\ref{fig:gluing_1_edge_2}).
\end{itemize}
\end{lemma}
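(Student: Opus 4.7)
The proof is combinatorial, and the plan is to proceed by induction on the number of elementary triangle-pair identifications needed to realize the global gluing, with each inductive step classifying the next identification into one of the three listed types. The key observation is that any gluing of a collection of 3-cells along their triangulated boundaries can be executed as a sequence of single triangle-pair identifications, and the type of such an elementary move is dictated by (a) whether the two triangles currently lie on the same 3-cell and (b) how many of their edges are already identified with one another by earlier moves.

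First I would fix an ordering of the elementary triangle-pair identifications realizing the global gluing. At each step, let the pair about to be identified be denoted $(t,t')$. If $t$ and $t'$ lie on distinct 3-cells in the current partial complex, then none of their edges have yet been forced to coincide and the move identifies all three pairs of edges at once, i.e.\ Type \ri. If instead $t$ and $t'$ lie on the same 3-cell, let $k\in\{0,1,2,3\}$ be the number of edges they already share: $k=1$ identifies the two remaining edge pairs (Type \rii), $k=2$ identifies the last edge pair (Type \riii), and $k=3$ is the vacuous identity move. This case split covers every elementary identification except the case $k=0$ within a single 3-cell.

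The obstruction is therefore the case $k=0$: two triangles on the same 3-cell sharing no common edges. The plan is to handle this by a reordering argument, showing that such an identification can always be postponed. The idea is that identifications of neighbouring triangles, if advanced first, will eventually cause at least one edge of $t$ to be identified with one of $t'$, reducing the move to Type \rii\ or Type \riii. Because the global gluing pattern is finite, this rearrangement terminates; and because the three types suffice whenever $k\geq 1$, the reordering produces a sequence consisting only of Types \ri, \rii, \riii. Alternatively, a $k=0$ move on a single 3-cell can be resolved by inserting a formal ``un-gluing'' that temporarily splits the 3-cell along a separating surface, performs the identification as a Type \ri\ between two pieces, and then recomposes via Types \rii\ and \riii; the topological invariance of the spinfoam amplitude means these auxiliary steps leave the final amplitude unchanged.

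The main obstacle I expect is to verify the reordering rigorously, i.e.\ to exclude any circular dependency in which identifying $(t,t')$ would be needed to enable the other identifications that bring them into edge-contact. This should follow from the fact that the identification relation on boundary triangles defines a finite bipartite dependency graph on edges and triangles that admits a topological sort respecting the shared-edge count at each step, so that the $k=0$ case is always reachable from a $k\geq 1$ configuration rather than the other way around.
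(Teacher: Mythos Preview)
The paper does not actually prove this lemma: it is stated and then immediately followed by ``We consider separately these three types of gluing\ldots'', passing directly to Propositions~\ref{prop:gluing_3_edges}--\ref{prop:gluing_1_edge}. The only place the decomposition is made explicit is at the end of the section, where the $2$--$3$ and $1$--$4$ Pachner moves are broken down by hand into the three types (two tetrahedra: one Type~\ri; three tetrahedra: two Type~\ri\ plus one Type~\rii; four tetrahedra: three Type~\ri, two Type~\rii, one Type~\riii). So in the paper the lemma functions as an organizing statement whose content is verified only on the Pachner configurations actually needed for Corollary~\ref{cor:topological}, rather than as a general combinatorial theorem with its own proof.

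Your attempt is therefore more ambitious than what the paper does. The case split on $(\text{same cell?},\,k)$ is the right skeleton, and you correctly isolate the genuine issue: two triangles on the boundary of a \emph{single} 3-cell sharing no edge ($k=0$) is not literally any of Types~\ri--\riii. Your two proposed fixes, however, are both incomplete. The reordering argument assumes that some other pending identification will eventually bring $t$ and $t'$ into edge-contact, but nothing in the hypotheses guarantees that such identifications exist in the given gluing pattern; one can imagine a final complex in which the image of $t=t'$ still has three distinct edges on the boundary of the region already assembled, so no earlier move helps. The ``un-gluing'' alternative steps outside the lemma's statement (it introduces inverse moves) and appeals to topological invariance of the amplitude, which is precisely what the lemma is meant to help establish, so this is circular.

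If you want a clean general statement, the honest route is to \emph{enlarge} the list with a fourth elementary move (same cell, $k=0$) and prove the analogue of Propositions~\ref{prop:gluing_3_edges}--\ref{prop:gluing_1_edge} for it; alternatively, restrict the lemma to the situation the paper actually uses, namely that any triangulated 3-ball can be assembled from its tetrahedra by a sequence of Types~\ri--\riii\ (build a shelling order and attach tetrahedra one at a time: each new tetrahedron meets the already-assembled piece in $1$, $2$, or $3$ faces, giving exactly one Type~\ri\ followed by zero, one, or two moves of Types~\rii/\riii). That restricted version is what the Pachner-move application needs, and it avoids the $k=0$ obstruction entirely.
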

We consider separately these three types of gluing and analyze the total amplitude after gluing.
 \begin{prop}
 \label{prop:gluing_3_edges}
 For gluing of Type \ri, 
 the scaleless spin network states $\cS^{\sl}_{(\partial \triangle_1)^*_1}, \cS^{\sl}_{(\partial \triangle_2)^*_1}$ on the boundaries $\partial \triangle_1, \partial \triangle_2$ of two 3-cells $\triangle_1, \triangle_2$ glued with an edge amplitude $ \cA_{e^*}^{\partial \triangle_1 \cap\partial \triangle_2}$ in the form of \eqref{eq:edge_amplitude_new} produces a scaleless spin network state $\cS^{\sl}_{(\partial (\triangle_1 \cup \triangle_2))^*_1}$ on the union boundary $\partial(\triangle_1 \cup \triangle_2)$ after gluing.
 	Using the spinor notation in fig.\ref{fig:gluing_3_edges} (also introduced in the proof), it is symbolically expressed as
\be
\cS^{\sl}_{(\partial (\triangle_1 \cup \triangle_2))^*_1}
=  \int\left(\prod_{l,c\in \partial \triangle_1 \cap\partial \triangle_2} \rd\mu(z_l)\rd\mu(\tw_c)\right)
  	\cS^{\sl}_{(\partial \triangle_1)^*_1}(\{z_l,\tz_l\})\,
 \cA_{e^*}^{\partial \triangle_1 \cap\partial \triangle_2}
\,  	\cS^{\sl}_{(\partial \triangle_2)^*_1}(\{w_c,\tw_c\})\,.
\label{eq:gluing_2_cells}
\ee 
 \end{prop}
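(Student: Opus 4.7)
The strategy is to express both scaleless spin network states via the Gaussian integral representation \eqref{eq:SGF_aux} and then perform the required spinor integrations on the right-hand side of \eqref{eq:gluing_2_cells}, recognising the result as the Gaussian representation of the scaleless spin network state on the union boundary. Substituting \eqref{eq:SGF_aux} for both $\cS^{\sl}_{(\partial \triangle_1)^*_1}$ and $\cS^{\sl}_{(\partial \triangle_2)^*_1}$ turns the combined integrand into a product of Gaussian measures, link-terms $e^{\langle \xi|\txi]}$ on every link of both boundary graphs, angle-terms $e^{[z|z'\rangle[\xi|\xi'\rangle}$ at every node, and the edge amplitude $\,_0F_3$ coupling the spinors $z_l$ at $n_1\in(\partial\triangle_1)^*_1$ to the spinors $w_c$ at $n_2\in(\partial\triangle_2)^*_1$, where $n_1$ and $n_2$ are the two nodes dual to the glued triangle.

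Next I would rewrite the $\,_0F_3$ edge amplitude back into its exponential form $e^{\sum_l\langle z_l|\tw_l]}$ via the contour-integral identity used in the proof of Proposition \ref{prop:state_integral_2} (applied backward), so that the integrand becomes Gaussian in the spinors $z_l$ and $\tw_c$ to be eliminated. Each $z_l$ at $n_1$ then appears only in the two angle-terms at $n_1$ that involve $z_l$ and in the exponentialised edge amplitude, so the $z_l$-integrations can be carried out in closed form by repeated use of the elementary spinor Gaussian identity $\int \rd\mu(z)\,e^{\langle a|z\rangle+\langle z|b\rangle}=e^{\langle a|b\rangle}$ (together with its $[z|$-variant for the holomorphic/anti-holomorphic pieces). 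The $\tw_c$-integrations on the $\triangle_2$-side are performed analogously. The result should be an effective action coupling the auxiliary spinors $\xi$ at $n_1$ to those at $n_2$, with all link-terms and angle-terms of the two boundary graphs that do not touch $n_1, n_2$ carried over unchanged.

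The final step is to recognise the resulting effective action as the Gaussian representation \eqref{eq:SGF_aux} of $\cS^{\sl}_{(\partial(\triangle_1\cup\triangle_2))^*_1}$: in the union dual boundary graph the nodes $n_1, n_2$ have disappeared and the three half-links previously at $n_1$ are joined pairwise with those at $n_2$ into three new links; along these new links the effective coupling must collapse into the standard link-term form $e^{\langle\xi|\txi]}$. I expect the main obstacle to be precisely this matching. The three angle-terms at $n_1$ each contain a factor $[\xi_{s(\alpha)}|\xi_{t(\alpha)}\rangle$ coupling two auxiliary spinors at distinct half-links of $n_1$; after eliminating $z_l$ and $\tw_c$ these $\xi$-factors must reassemble into three independent link-terms on the new links, which encodes the one-dimensionality of the three-valent intertwiner space. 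The required combinatorial identity can plausibly be verified either by direct computation of the six-spinor Gaussian integral at each of $n_1, n_2$, or by expanding back into the spin basis and invoking the orthogonality of three-valent intertwiners together with the term-by-term matching of the series for $\,_0F_3$ against the product of $\{3j\}$-recouplings.
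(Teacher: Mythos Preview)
Your overall strategy matches the paper's: both start from the Gaussian representation \eqref{eq:SGF_aux}, reinstate the contour integrals so that the $_0F_3$ edge amplitude becomes an exponential, and then integrate out the six ``glued'' spinors $z_l,\tw_l$. The paper organises that last integration via Bargmann's trick (packing the spinor components into $3\times3$ matrices whose determinants reproduce the angle-terms), but this is just a tidy way of performing the iterated elementary Gaussians you describe.

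The genuine gap is in your final step. After the $z_l$- and $\tw_l$-integrations and the two contour integrals, the result is \emph{not} a product of three independent link-terms. What one actually obtains is the Bessel--Clifford function
\[
C_1(\Lambda\cdot\Sigma)=\sum_{k\ge0}\frac{(\Lambda\cdot\Sigma)^k}{k!(k+1)!}\,,\qquad
\Lambda\cdot\Sigma=\sum_{l<l'}[\xi_l|\xi_{l'}\rangle\,[\teta_l|\teta_{l'}\rangle\,,
\]
which couples the auxiliary spinors $\xi_l$ at $n_1$ to the $\teta_l$ at $n_2$ through the \emph{angle}-type bilinears $[\xi_l|\xi_{l'}\rangle$, not pairwise along the three would-be new links. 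Your proposed workarounds do not disentangle this: ``direct computation of the six-spinor Gaussian integral'' lands you precisely on $C_1(\Lambda\cdot\Sigma)$ with no indication of how to factor it, and ``expanding back into the spin basis'' abandons the spinor calculation for the spin-sum argument the proposition is meant to replace.

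The paper's missing ingredient is the identity \eqref{eq:polynomial_identity},
\[
C_1(\Lambda\cdot\Sigma)=\int_{\SU(2)}\rd g\ e^{\sum_l[\xi_l|g|\teta_l\rangle}\,,
\]
which trades the Bessel--Clifford function for an $\SU(2)$ average of an exponential that is now separately linear in each $\xi_l$ and $\teta_l$. One then integrates out $\xi_l,\teta_l$ against their own link-terms $e^{\langle\xi_l|\txi_l]+\langle\eta_l|\teta_l]}$ to get $\int_{\SU(2)}\rd g\,e^{\sum_l\langle\eta_l|g|\txi_l]}$, and finally absorbs $g$ using the $\SU(2)$-invariance of the remaining auxiliary-spinor Haar measures, yielding exactly $e^{\sum_l\langle\eta_l|\txi_l]}$ --- the three new link-terms of $(\partial(\triangle_1\cup\triangle_2))^*_1$.
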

 \begin{proof}
\begin{figure}[h!]
\begin{minipage}{0.55\textwidth}
\centering
\includegraphics{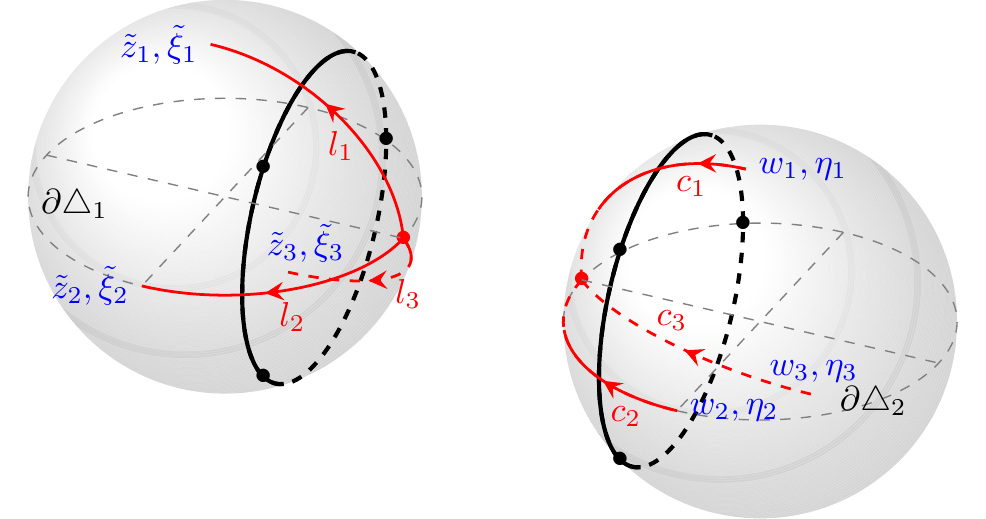}
\subcaption{}
\label{fig:gluing_3_edges_1}
\end{minipage}
\quad
\begin{minipage}{0.3\textwidth}
\centering
\includegraphics{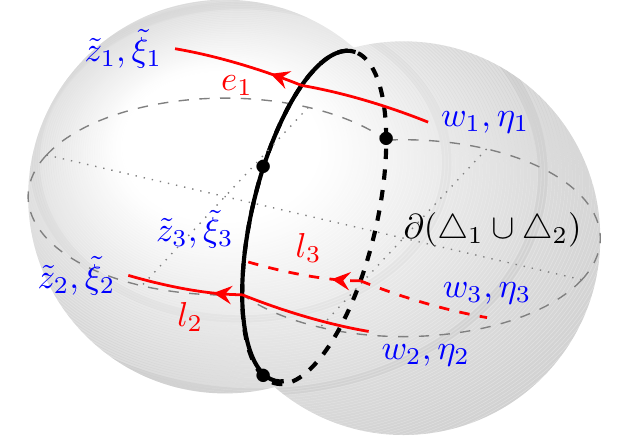}
\subcaption{}
\label{fig:gluing_3_edges_2}
\end{minipage}	
\caption{(a) Before the gluing of two 3-cells $\triangle_1$ and $\triangle_2$ {\it (visualized as 3-balls)} through identifying the left and right triangles {\it (in thick, each visualized as a circle embedding in the corresponding 2-sphere with three vertices on it)}. The links {\it(in red)}, each dual to one edge of a triangle, are assigned spinor information. Fix the orientation of the links $l_{1,2,3}$ on the left triangle to be outgoing, the sources are assigned the spinors $z_{1,2,3}$ and the auxiliary spinors $\xi_{1,2,3}$ (not shown in the figure for clear visualization) and the targets are assigned the spinors $\tz_{1,2,3}$ and the auxiliary spinors $\txi_{1,2,3}$; also fix the orientation of the links $c_{1,2,3}$ on the right triangle to be incoming, the sources are assigned the spinors $w_{1,2,3}$ and the auxiliary spinors $\eta_{1,2,3}$ and the targets are assigned the spinors $\tw_{1,2,3}$ and the auxiliary spinors $\teta_{1,2,3}$ (not shown in the figure for clear visualization). (b) After gluing two 3-cells $\triangle_1$ and $\triangle_2$ {\it (visualized as a double bubble)} . The spinor information in the bulk is integrated out and only the spinors on the union boundary $\partial (\triangle_1 \cup \triangle_2)$ are left.}
\label{fig:gluing_3_edges}
\end{figure}

 Let us write explicitly, using the definition of the scaleless spin network state \eqref{eq:SGF_aux}, the relevant part of the right-hand side of \eqref{eq:gluing_2_cells} in terms of the spinors on the two triangles to be glued (see fig.\ref{fig:gluing_3_edges_1}). 
 Denote the links in the graph $(\partial \triangle_1)^*_1$ on the left 3-cell as $l$'s. For each link $l$, the source $s(l)$ ({\it resp.} target $t(l)$) is assigned a spinor $z_l$ ({\it resp.} $\tz_l$) and an auxiliary spinor $\xi_l$ ({\it resp.} $\txi_l$). Also, denote the links in the graph $(\partial \triangle_2)^*_1$ on the right 3-cell as $c$'s. For each link $c$, the source $s(c)$ ({\it resp.} target $t(c)$) is assigned a spinor $w_c$ ({\it resp.} $\tw_c$) and an auxiliary spinor $\eta_c$ ({\it resp.} $\teta_c$). 
 With no loss of generality, we can fix the orientation of the relevant links.
We fix that the links $l_1,l_2,l_3$ on the left triangle are outgoing from the node dual to the left triangle and $c_1,c_2,c_3$ are incoming towards the node dual to the right triangle. Let $\triangle_1$ be the source of the dual edge dual to the glued triangle and $\triangle_2$ the target, then the edge amplitude for the gluing is
 \be
\cA_{e^*}^{\partial \triangle_1 \cap \partial \triangle_2}
={}_0F_3(;2,2,\f12; \f{\left(\sum_{l=1}^3  \langle z_l|\tw_l]\right)^2}{4}  )
=\sum_{k=0}^{\infty} \f{1}{(k+1)!^2 (2k)!}\left( \sum_{e=1}^3\langle z_l|\tw_l] \right)^{2k} \,.
 \nn\ee
The right hand side of \eqref{eq:gluing_2_cells} reads
\begin{multline}
 \int\left( \prod_{l'} \rd\mu(\xi_{l'})\rd\mu(\txi_{l'}) \prod_{c'} \rd\mu(\eta_{c'})\rd\mu(\teta_{c'}) \right)
e^{\sum_{l'} \langle \xi_{l'}|\txi_{l'}]
+ \sum_{c'} \langle \eta_{c'}|\teta_{c'}]
+\sum_{\alpha'} [z_{s(\alpha')}|z_{t(\alpha')}\rangle [\xi_{s(\alpha')}|\xi_{t(\alpha')}\rangle 
+\sum_{\beta'} [w_{s(\beta')}|w_{t(\beta')}\rangle [\eta_{s(\beta')}|\eta_{t(\beta')}\rangle }
 \\
\int \left(\prod_{l=1}^3\rd \mu(\xi_l)\rd\mu(\teta_l) \right) e^{\sum_{l=1}^3\left(\langle \xi_l|\txi_l]+\langle \eta_l|\teta_l] \right)}
\int\left(\prod_{l=1}^3 \rd\mu(z_l)\rd\mu(\tw_l)\right)
\exp\left[\sum_{\substack{l,l'=1,2,3\\l\prec l'}}[z_l|z_{l'}\rangle[\xi_{l}|\xi_{l'}\rangle 
+[\tw_l|\tw_{l'}\rangle[\teta_l|\teta_{l'}\rangle \right] 
\\
\left(\sum_{q\in\N}\f{1}{(q+1)!^2(2q)!}\left(\sum_{l=1}^3 \langle z_l|\tw_l] \right)^{2q} \right)\,.
\label{eq:T2}
 \end{multline}
 We have denoted the irrelevant part with primes in the first line.  
 The second line is the part of $\cS^{\sl}_{\partial \triangle_1}(\{z_l,\tz_l\})$ and $\cS^{\sl}_{(\partial \triangle_2)^*_1}(\{w_c,\tw_c\})$ relevant to the triangles to be glued, and the third line is the edge amplitude gluing $\triangle_1$ and $\triangle_2$.
We first perform the spinor integration for $z_{1,2,3}$ and $\tw_{1,2,3}$. 

Let us introduce the complex triples, following Bargmann's trick \cite{Bargmann:1962zz},
\be\ba{llll}
a=(z_1^0,z_2^0,z_3^0)\,,\quad &
b=(z_1^1,z_2^1,z_3^1)\,,\quad &
c=(\tw_1^0,\tw_2^0,\tw_3^0)\,,\quad &
d=(\tw_1^1,\tw_2^1,\tw_3^1)\,,\quad \\[0.2cm]
\alpha=(\xi_1^0,\xi_2^0,\xi_3^0)\,,\quad &
\beta=(\xi_1^1,\xi_2^1,\xi_3^1)\,,\quad &
\sigma=(\teta_1^0,\teta_2^0,\teta_3^0)\,,\quad &
\rho=(\teta_1^1,\teta_2^1,\teta_3^1)\,.
\ea
\label{eq:triples}
\ee
$\bar{a} =(\bar{a}_1,\bar{a}_2,\bar{a}_3)^{T}$ is the conjugate of $a$, and  
we denote the measure for the $a$ as $\rd\mu_3(a) = \f{1}{\pi^3}\rd a_1\rd\bar{a}_1\rd a_2\rd\bar{a}_2\rd a_3\rd\bar{a}_3 $. Likewise for $b,c,d,\alpha,\beta,\sigma,\rho$. 
For simplicity, we also denote $\Sigma=\alpha\times\beta = ([\xi_2|\xi_3\rangle, [\xi_3|\xi_1\rangle, [\xi_1|\xi_2\rangle)$ and $\Lambda = \sigma\times\rho=([\teta_2|\teta_3\rangle,[\teta_3|\teta_1\rangle,[\teta_1|\teta_2\rangle $.
We then arrange them in the matrices 
\be
A=\mat{ccc}{\Sigma_1 & \Sigma_2 & \Sigma_3 \\ 
 a_1 & a_2 & a_3 \\ b_1 & b_2 & b_3 }\,,\quad
B=\mat{ccc}{\Lambda_1 & \Lambda_2 & \Lambda_3 \\ 
 c_1 & c_2 & c_3 \\ d_1 & d_2 & d_3}\,,\quad
\Gamma = \mat{ccc}{0 & 0 & 0 \\ 0 & 0 & -1 \\ 0 & 1 & 0}\,.
\ee

One can thus rewrite the integral of $z_{1,2,3}$ and $\tw_{1,2,3}$ in the second and third lines of \eqref{eq:T2} in a compact way:
\be\begin{split}
&\f{1}{(2\pi i)^2}\oint\rd t \oint \rd s \f{e^{s+t}}{(st)^2}\,
\int\left(\prod_{l=1}^3 \rd\mu(z_l)\rd\mu(\tw_l)\right)\,
e^{\det A+\det B + \f{1}{\sqrt{st}}\tr (A^\dagger \Gamma \bar{B}) }\\
=& \f{1}{(2\pi i)^2}\oint\rd t \oint \rd s \f{e^{s+t}}{(st)^2}\,
\int\rd\mu_3 (a)\rd\mu_3 (b)\rd\mu_3 (c) \rd\mu_3 (d)\,
e^{-\bar{a}\cdot a - \bar{b}\cdot b -\bar{c}\cdot c - \bar{d}\cdot d }
e^{(\Sigma\times b)\cdot a + (\Lambda\times d)\cdot c}
e^{\f{1}{\sqrt{st}}(\bar{b}\cdot \bar{c}-\bar{a}\cdot \bar{d} )}\,,
\end{split}
\label{eq:T2_part}
\ee
where we have used the contour integral expression for the inverse Gamma function $\f{1}{(q+1)!}= \f{1}{2\pi i} \oint \rd t \f{e^t}{t^{q+2}}$. One can calculate this Gaussian integral for $a,b,c,d$ one by one. Note that 
given a complex $n$-ple ${\bf v}$ and a complex $n\times n$ matrix $A$ whose Hermitian part is positive definite, one has the Gaussian integral
\be
\int \rd\mu_n({\bf v}) e^{-\bar{{\bf v}}\cdot A {\bf v} + {\bf u} \cdot {\bf v}+ {\bf u}'\cdot \bar{{\bf v}}}
=\det A^{-1} e^{{\bf u} \cdot A^{-1}{\bf u}'}\,,
\label{eq:Gaussian_integral}
\ee
where ${\bf u}$ and ${\bf u}'$ are independent $n$-ples.  

After integrating out $a,b,c$, one can use \eqref{eq:Gaussian_integral} to perform the remaining integral for $d$: 
\be
\int\rd\mu_3 (d)\, e^{-\bar{d}\cdot d}\, e^{\f{1}{st}(\Lambda\times \bar{d})\cdot (\Sigma\times d )}
= \int\rd\mu_3 (d)\, e^{\bar{d}\cdot(\id- \f{{\bf M}}{st}) d}
=\f{1}{\det(\id-\f{{\bf M}}{st})}=\f{1}{(1-\f{\Lambda\cdot \Sigma}{st})^{2}}\,,
\ee
where the matrix ${\bf M}$ has entries
\be
{\bf M}_{ij}=(\Lambda \cdot\Sigma ) \delta_{ij} - \Lambda_i \Sigma_j\,,\quad i,j=1,2,3
\ee
so that $\bar{d}\cdot {\bf M} d = (\Lambda\cdot \Sigma )(\bar{d}\cdot d) - (\Lambda\cdot d)(\Sigma\cdot \bar{d})=(\Lambda\times \bar{d})\cdot (\Sigma\times d )$.
The explicit form of $\Lambda\cdot \Sigma$ is
\be\begin{split}
\Lambda \cdot \Sigma =& (\alpha \times \beta )\cdot (\sigma \times \rho) =
(\alpha\cdot\sigma)(\beta\cdot \rho) - (\alpha\cdot \rho)(\beta\cdot \sigma)\\[0.2cm]
=& [\xi_1|\xi_2\rangle [\teta_1|\teta_2\rangle + [\xi_2|\xi_3\rangle [\teta_2|\teta_3\rangle + [\xi_3|\xi_1\rangle [\teta_3|\teta_1\rangle  \,.
\end{split}
\label{eq:lambda_sigma}
\ee
Now we perform the counter integral to complete the integration in \eqref{eq:T2_part}:
\be\begin{split}
&\f{1}{(2\pi i)^2}\oint\rd t \oint \rd s \f{e^{s+t}}{(st)^2}\,
\f{1}{(1-\f{\Lambda\cdot \Sigma}{st})^2} \\
=&\f{1}{(2\pi i)^2}\oint\rd t \oint \rd s 
\sum_{m,u,k=0}^{\infty}\f{k+1}{m!u!}\f{s^mt^u}{s^2t^2}\left(\f{\Lambda\cdot \Sigma}{st} \right)^k \\
=&\sum_{k=0}^\infty \f{1}{k!(k+1)!}(\Lambda\cdot \Sigma)^k \\
=&C_1(\Lambda\cdot \Sigma )\,,
\end{split}
\label{eq:result_of_trick}
\ee
where $C_1(z)$ is the Bessel-Clifford function of order one. A Bessel-Clifford function of order $m$ expands as $C_m(x):= \sum_{k=0}^{\infty} \frac{x^k}{k!(k+m)!}$. The explicit form \eqref{eq:lambda_sigma} of $\Lambda\cdot \Sigma$ allows us to express $C_1(\Lambda\cdot \Sigma)$ into an $\SU(2)$ integral by the following beautiful identity \cite{Bonzom:2015ova} 
\be
\int_{\SU(2)}\,\rd g\, e^{\sum_{i}[z_i|g|\tz_i\rangle} =
\sum_{k=0}^{\infty}\f{1}{k!(k+1)!} \left( \sum_{i<j}[z_i|z_j\rangle[\zt_i|\zt_j\rangle \right)^k\,,\quad
\forall z_i,\zt_i \in \bC^2
\,.
\label{eq:polynomial_identity}
\ee
Therefore, 
\be
C_1(\Lambda\cdot \Sigma)= \int_{\SU(2)}\rd g\, e^{\sum_{l=1}^3 [\xi_l|g|\teta_l \rangle}\,.
\label{eq:BC_identity}
\ee

We next combine this result with the integral of the auxiliary spinors in the second line of \eqref{eq:T2}. It is straightforward to calculate that
\be\begin{split}
&\int_{\SU(2)}\rd g \int \left(\prod_{l=1}^3\rd \mu(\xi_l)\rd\mu(\teta_l) \right) 
e^{\sum_{l=1}^3\left(\langle \eta_l|\teta_l] +[\teta_l|g|\xi_l\rangle+ \langle \xi_l|\txi_l] \right)}\\
=& \int_{\SU(2)}\rd g\, e^{\sum_{l=1}^3 \langle \eta_l|g|\txi_l]}
\end{split}
\label{eq:T2_result}
\ee
Finally, one performs the $\SU(2)$ transformation with $g$ on all the auxiliary spinors $(\xi_l,\txi_l)\rightarrow (g\xi_l\,, g \txi_l)$ from $\triangle_1$, which preserves the inner products $\langle \xi_{l'}|\txi_{l'}]$ and $[\xi_{s(\alpha')}|\xi_{t(\alpha')}\rangle$ of the irrelevant auxiliary spinors (see the first line of \eqref{eq:T2}). Thanks to the $\SU(2)$-invariant property of the spinor Haar measure, one can rewrite \eqref{eq:T2_result} into $\int_{\SU(2)}\rd g\, e^{\sum_{l=1}^3 \langle \eta_l|\txi_l]}= e^{\sum_{l=1}^3 \langle \eta_l|\txi_l]}$, which is exactly the link term for a general scaleless spin network state (see \eqref{eq:SGF_aux} and fig.\ref{fig:gluing_3_edges_2}). Combining this result with the irrelevant part in the first line of \eqref{eq:T2}, one arrives at a scaleless spin network state $\cS^{\sl}_{(\partial(\triangle_1\cup \triangle_2))^*_1}$ on the union boundary $\partial(\triangle_1\cup \triangle_2)$ after gluing.
 \end{proof}

\begin{prop}
\label{prop:gluing_2_edges}
For gluing of Type \rii, the scaleless spin network state $\cS_{(\partial \triangle)^*_1}^{\sl}$ on the boundary $\partial \triangle$ of a 3-cell $\triangle$ glued with an edge amplitude $\cA_{e^*}^{t \cap t_2=e}$ of the form \eqref{eq:edge_amplitude_new}, which is for two adjacent triangles $t_1,t_2 \in \partial \triangle$ sharing one edge $e$, and a face amplitude $\cA_{f^*}^{e}$, which is for the shared edge $e$, produces a scaleless spin network state $\cS_{(\partial \triangle')^*_1}^{\sl}$ on the resulting 3-cell boundary $\partial \triangle'$. It is symbolically expressed as
\be
\cS_{(\partial \triangle')^*_1}^{\sl} = \int \left(\prod_{e\in t_1\cup t_2} \rd\mu(z_l)\right)
\cS_{(\partial \triangle)^*_1}^{\sl} \, \cA_{e^*}^{t_1 \cap t_2=e}\, \cA_{f^*}^{e}\,.
\label{eq:gluing_2_edges}
\ee
\end{prop}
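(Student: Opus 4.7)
The plan is to mirror the structure of the proof of Proposition \ref{prop:gluing_3_edges}, treating the present situation as a ``self-gluing'' of two adjacent triangles within a single 3-cell $\triangle$. The key new ingredient, compared with Type~\ri, is the face amplitude $\cA_{f^*}^{e}$ which is required to account for the edge $e$ becoming part of the bulk structure after the identification of the remaining pair of edges of $t_1$ and $t_2$, i.e.\ to supply the dimension factor attached to the spin of the new internal edge.

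First I would expand the left-hand side using the Gaussian integral representation \eqref{eq:SGF_aux} for $\cS_{(\partial \triangle)^*_1}^{\sl}$, introducing auxiliary spinors $\xi,\txi$ on each half-link. I would then isolate the terms relevant to the gluing: the link-terms $\langle \xi_l|\txi_l]$ and the angle-terms $[z_{s(\alpha)}|z_{t(\alpha)}\rangle\,[\xi_{s(\alpha)}|\xi_{t(\alpha)}\rangle$ sitting at the two nodes dual to $t_1$ and $t_2$, including the link dual to $e$ itself.

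Next I would use the contour-integral identity of \eqref{eq:new_ldge_w_contour}, run in reverse, to recast the edge amplitude $\cA_{e^*}^{t_1\cap t_2=e}$ into an exponential form amenable to Gaussian integration. The spinor integrations over the spinors on the pair of edges being identified then proceed by the same Bargmann-triples construction as in \eqref{eq:T2_part}--\eqref{eq:lambda_sigma}, producing a Bessel--Clifford factor $C_1(\Lambda\cdot\Sigma)$ of auxiliary-spinor inner products at the two relevant nodes. Applying \eqref{eq:polynomial_identity}/\eqref{eq:BC_identity} recasts this as an $\SU(2)$ integral, and the $\SU(2)$-invariance of the auxiliary-spinor Haar measure then allows me to absorb the group element into the auxiliary spinors, leaving a single link-term $e^{\langle \eta|\txi]}$ for the newly created internal edge, exactly of the form required by \eqref{eq:SGF_aux} to interpret the output as a scaleless spin network state on the reduced boundary $\partial \triangle'$.

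The hard part will be incorporating the face amplitude $\cA_{f^*}^{e}=\langle z^{f^*,T_1}|z^{f^*,T_1}\rangle-1$. Unlike the edge and vertex amplitudes, it is not already in exponential form, and its role is more subtle: it must reproduce the dimension factor $d_{j_e}=2j_e+1$ attached to the spin on $e$ in the spin-sum formulation. I expect the $\langle z|z\rangle$ piece to raise the degree of one of the Bargmann contour integrals by one unit, while the $-1$ piece subtracts the base value, so that after evaluation of the residues the net effect is multiplication by $(2j+1)$ rather than $(2j)$. Verifying this residue-matching, and checking that the resulting factorials combine with the weights $(J_n+1)!/\prod_{l\in n}(2j_l)!$ of the scaleless intertwiners at the two nodes involved in the gluing to regenerate the correct weights of the fused node on $\partial \triangle'$, is the delicate combinatorial step on which the proposition hinges. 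Once that matching is in place, the remaining integrations over all untouched spinors and auxiliary spinors reassemble, by inspection, into the Gaussian representation \eqref{eq:SGF_aux} of $\cS_{(\partial \triangle')^*_1}^{\sl}$.
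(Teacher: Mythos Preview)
Your outline follows Proposition~\ref{prop:gluing_3_edges} too closely at the crucial step, and this is where a genuine gap appears. In Type~\ri\ gluing the three links $l_1,l_2,l_3$ at the glued node all run off into the \emph{rest} of the graph, so after the Bessel--Clifford step produces $\int_{\SU(2)}\rd g\,e^{\sum_l[\xi_l|g|\teta_l\rangle}$ you can absorb $g$ into the auxiliary spinors by the invariance of their Haar measure. In Type~\rii\ this fails: the link $l_1$ dual to the shared edge $e$ already connects the two nodes being identified, so its two auxiliary half-link spinors $\xi_1$ and $\teta_1$ are \emph{both} caught inside the integrand. After integrating $\teta_1$ against the existing link-term $e^{\langle\xi_1|\teta_1]}$ you are left with a factor $e^{\langle\xi_1|g|\xi_1\rangle}$ in which $g$ acts on both sides of the \emph{same} spinor; no change of variables in the auxiliary spinors can remove it. The ``absorb $g$'' step simply does not go through, and this is precisely why the face amplitude is needed here but not in Type~\ri.

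The role of $\cA_{f^*}^{e}=\langle z_1|z_1\rangle-1$ is therefore not the residue-shift mechanism you describe. In the paper's proof it is used to build the delta distribution on $\SU(2)$ via the identity
\[
\int\rd\mu(z)\,(\langle z|z\rangle-1)\,e^{\langle z|g|z\rangle}=\delta(g)\,,
\]
which kills the leftover group element and leaves the two link-terms $e^{\langle\eta_2|\txi_2]+\langle\eta_3|\txi_3]}$ for the collapsed boundary edges (not ``a single link-term for the newly created internal edge'': the internal edge carries no link-term on $\partial\triangle'$). There is a further subtlety you have not anticipated: the face amplitude is written in the spinor $z_1$, whereas the $\delta(g)$ identity needs it in the auxiliary spinor $\xi_1$. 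The paper spends most of its effort establishing, via a homogeneity/symmetry argument that swaps the roles of $z$'s and $\xi$'s, that one may replace $(\langle z_1|z_1\rangle-1)$ by $(\langle\xi_1|\xi_1\rangle-1)$ inside the integral without changing its value. Only then do the Bargmann trick and the Bessel--Clifford identity apply as in Type~\ri, now supplemented by the $\delta(g)$ step.
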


\begin{proof}
\begin{figure}[h!]
\centering
\begin{minipage}{0.45\textwidth}
\centering
\includegraphics{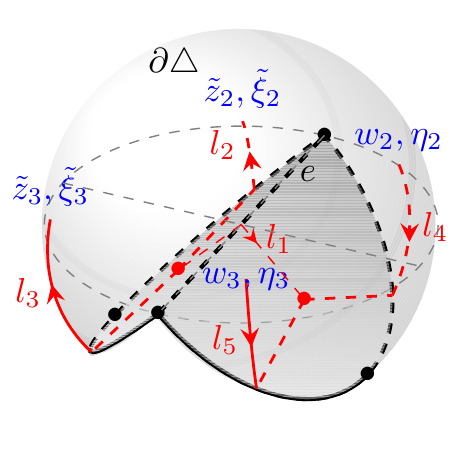}
\subcaption{}
\label{fig:gluing_2_edges_1}
\end{minipage}
\quad
\begin{minipage}{0.45\textwidth}
\includegraphics{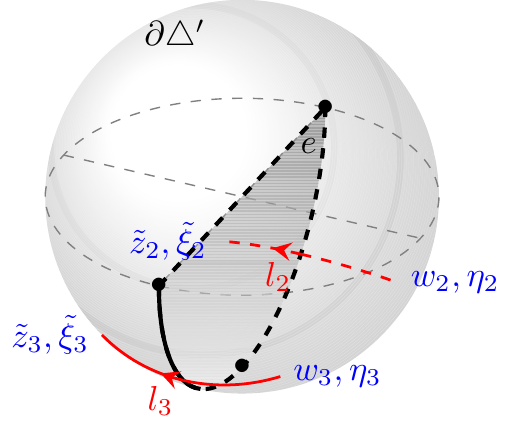}
\subcaption{}
\label{fig:gluing_2_edges_2}
\end{minipage}
\caption{Gluing two adjacent triangles {\it (in thick, each visualized as a semicircle and one edge connecting the two ends)} sharing one edge $e$ on one continuous boundary {\it(visualized as the boundary of a 3-ball removing a lemon slice)}. (a) Before gluing: The left triangle is dual to a node with three links $l_{1,2,3}$ outgoing, while the right triangle is dual to the other node with three links $l_{1,4,5}$ incoming. For $l_1$, the source is assigned $z_1,\xi_1$ and the target is assigned $\tw_1,\teta_1$. For links $l_{2,3}$, the sources are assigned $z_{2,3},\xi_{2,3}$ and the targets are assigned $\tz_{2,3},\txi_{2,3}$. For links $l_{4,5}$, the source are assigned $w_{2,3},\eta_{2,3}$ and the targets are assigned $\tw_{2,3},\teta_{2,3}$. (Only part of the spinors are shown for clear visualization). (b) After gluing: The shared edge $e$ becomes internal and the other two edges from different triangles collapse {\it (in thick)}. Spinor information in the bulk is integrated out and only the spinors on the resulting boundary (as shown) are left.
}
\label{fig:gluing_2_edges}
\end{figure}

Before gluing, the 3-cell boundary $\partial \triangle$ and the dual graph is as shown in fig.\ref{fig:gluing_2_edges_1}. The right-hand side of \eqref{eq:gluing_2_edges} reads (we again denote the irrelevant part with primes.)
\begin{multline}
\int\left( \prod_{l'} \rd\mu(\xi_{l'})\prod_{l'}\rd\mu(\txi_{l'}) \right)
e^{\sum_{l'} \langle \xi_{l'}|\txi_{l'}]
+ \sum_{\alpha'} [z_{s(\alpha')}|z_{t(\alpha')}\rangle [\xi_{s(\alpha')}|\xi_{t(\alpha')}\rangle }\\
\int \left(\prod_{l=1}^3\rd \mu(\xi_l)\rd\mu(\teta_l) \right) 
e^{\langle \xi_1|\teta_1]+\sum_{l=1}^2\left(\langle \xi_l|\txi_l]+ \langle \eta_l|\teta_l] \right)}\\
\int\left(\prod_{l=1}^3 \rd\mu(z_l)\rd\mu(\tw_l)\right)
\exp\left[\sum_{\substack{l,l'=1,2,3\\l\prec l'}}[z_l|z_{l'}\rangle[\xi_{l}|\xi_{l'}\rangle 
+[\tw_l|\tw_{l'}\rangle[\teta_l|\teta_{l'}\rangle \right] \\
\left(\sum_{q\in\N}\f{1}{(q+1)!^2(2q)!}\left(\sum_{l=1}^3\langle z_l|\tw_l]\right)^{2q} \right)
(\langle z_1|z_1\rangle-1)\,.
\label{eq:T3}
 \end{multline}
 
With no loss of generality, we have fixed the half links incident to the left node to be outgoing and denote the spinors and auxiliary spinors on these half links to be $z_{1,2,3}$ and $\xi_{1,2,3}$ respectively, and fixed the half links incident to the right node to be incoming and denote the spinors and auxiliary spinors on these half links to be $\tw_{1,2,3}$ and $\teta_{1,2,3}$ respectively. The difference of the relevant integral in \eqref{eq:T3} (the last three lines) from that in \eqref{eq:T2} are the terms $e^{\langle \xi_1|\teta_1]}$, since the two auxiliary spinors $\xi_1$ and $\teta_1$ are from the same link $l_1$, and $(\langle z_1|z_1\rangle-1)$ which is the face amplitude on edge $e$ given that the left node is chosen to be the base node. 

We now claim that the face amplitude $(\langle z_1|z_1\rangle-1)$ can be replaced by $(\langle \xi_1|\xi_1\rangle-1)$ without changing the amplitude. To prove that, we first rewrite $e^{\langle \xi_1|\teta_1]}$ by introducing two intermediate spinor integrals,
\be
e^{\langle \xi_1|\teta_1]}
=\int\rd\mu(\txi_1)\rd\mu(\eta_1) e^{\langle \xi_1|\txi_1]+ \langle \eta_1|\teta_1]} \,
e^{[\txi_1|\eta_1 \rangle}
\ee
so that the second line in \eqref{eq:T3} can be expressed in a symmetric way:
\be
\int\rd\mu(\txi_1)\rd\mu(\eta_1) e^{[\txi_1|\eta_1 \rangle}
 \int \left(\prod_{l=1}^3\rd \mu(\xi_l)\rd\mu(\teta_l) \right) 
e^{\sum_{l=1}^3\left(\langle \xi_l|\txi_l]+ \langle \eta_l|\teta_l] \right)}\,.
\label{eq:T3_line3}
\ee
We next notice for the edge amplitude that each term of the summation is a homogenous holomorphic polynomial of spinors $z_{1,2,3}$ and $\tw_{1,2,3}$ of order $2q$. Each term of order $2q$ can survive under the spinor integration only by matching with a homogenous anti-holomorphic polynomial of spinors $z_{1,2,3}$ and $\tw_{1,2,3}$ of order $2q$. This means we can safely move the term $\sum_{q\in \N}\f{1}{(q+1)!^2}$ from the last line of \eqref{eq:T3} to the third line, then the last two lines become
\be\begin{split}
&\int\left(\prod_{l=1}^3 \rd\mu(z_l)\rd\mu(\tw_l)\right)
\sum_{k\in \N}\f{1}{k!}
\left(\sum_{\substack{l,l'=1,2,3\\l\prec l'}}[z_l|z_{l'}\rangle[\xi_{l}|\xi_{l'}\rangle 
+[\tw_l|\tw_{l'}\rangle[\teta_l|\teta_{l'}\rangle \right)^{k}\delta_{k,2q} \\
&\left(\sum_{q\in \N}\f{1}{(q+1)!^2(2q)!} \left(\sum_{l=1}^3\langle z_l|\tw_l]\right)^{2q} \right)
(\langle z_1|z_1\rangle-1)\\
=& \int\left(\prod_{l=1}^3 \rd\mu(z_l)\rd\mu(\tw_l)\right)
\sum_{q\in \N}\f{1}{(q+1)!^2(2q)!}
\left(\sum_{\substack{l,l'=1,2,3\\l\prec l'}}[z_l|z_{l'}\rangle[\xi_{l}|\xi_{l'}\rangle 
+[\tw_l|\tw_{l'}\rangle[\teta_l|\teta_{l'}\rangle \right)^{2q} \\
&e^{ \sum_{l=1}^3\langle z_l|\tw_l]} (\langle z_1|z_1\rangle-1)\,.
\end{split}
\label{eq:T3_part}
\ee
Now the third line in \eqref{eq:T3_part} is a summation of homogenous holomorphic polynomial of the auxiliary spinors $\xi_{1,2,3}$ and $\teta_{1,2,3}$, each of order $2q$. For the same reason, one can further move the term $\sum_{q\in \N}\f{1}{(q+1)!^2}$ to the second line of \eqref{eq:T3}. 
We also separate $z_l$ and $\tw_l$ in the last line of \eqref{eq:T3_part} by adding six intermediate spinor integral over $\tz_{1,2,3}$ and $w_{1,2,3}$. 
As a result, the last three lines of \eqref{eq:T3} can be rewritten as
\begin{multline}
\int\rd\mu(\txi_1)\rd\mu(\eta_1) e^{[\txi_1|\eta_1 \rangle}
 \int \left(\prod_{l=1}^3\rd \mu(\xi_l)\rd\mu(\teta_l) \right) 
\sum_{q\in \N}\f{1}{(q+1)!^2(2q)!}\left(\sum_{e=1}^3\left(\langle \xi_l|\txi_l]+ \langle \eta_l|\teta_l] \right)\right)^{2q}\\ 
\int\left(\prod_{l=1}^3 \rd\mu(z_l)\rd\mu(\tw_l)\right)
\exp\left[\sum_{\substack{l,l'=1,2,3\\l\prec l'}}[z_l|z_{l'}\rangle[\xi_{l}|\xi_{l'}\rangle 
+[\tw_l|\tw_{l'}\rangle[\teta_l|\teta_{l'}\rangle \right]\\
\int \left( \prod_{l=1}^3 \rd\mu(\tz_l)\rd\mu(w_l)\right) 
e^{\sum_{l=1}^3 [\tz_l|w_l\rangle }
\,e^{ \sum_{l=1}^3\langle z_l|\tz_l]+\langle w_3|\tw_l]}
(\langle z_1|z_1\rangle-1)\,.
\label{eq:T3_relevant}
\end{multline}
We now realize from comparing \eqref{eq:T3_relevant} and the last three lines of \eqref{eq:T3} that $\xi_{1,2,3}$ and $z_{1,2,3}$ take the exchanged expressions. This means one can simply replace the face amplitude $(\langle z_1|z_1\rangle-1)$ in \eqref{eq:T3} with $(\langle \xi_1|\xi_1\rangle -1)$ without changing the total amplitude.
This allows us to use the same Bargmann's trick as in \eqref{eq:triples}-\eqref{eq:result_of_trick} as well as the identity \eqref{eq:BC_identity} so that the last three lines of \eqref{eq:T3} arrives at
\be\begin{split}
&\int\rd\mu(\txi_1)\rd\mu(\eta_1)\rd\mu(\xi_1)  e^{[\txi_1|\eta_1 \rangle}
\int_{\SU(2)}\rd g\, e^{ \langle \xi_1|\txi_1]+\la \eta_1| g|\xi_1\ra}\,e^{ \langle \eta_2|g|\txi_2]+\langle \eta_3|g|\txi_3]}
(\langle \xi_1|\xi_1\rangle -1)\\
=&\int_{\SU(2)}\rd g \int\rd\mu(\xi_1) (\langle \xi_1|\xi_1\rangle -1) e^{\langle \xi_1|g|\xi_1 \rangle} 
e^{ \langle \eta_2|g|\txi_2]+\langle \eta_3|g|\txi_3]}\\
=&\int_{\SU(2)}\,\rd g\, \delta(g)\, e^{ \langle \eta_2|g|\txi_2]+\langle \eta_3|g|\txi_3]}\\
=& e^{ \langle \eta_2|\txi_2]+\langle \eta_3|\txi_3]}\,,
\end{split}
\label{eq:T3_result}
\ee
which is the link term for a general scaleless spin network state.
To arrive at the third line of \eqref{eq:T3_result}, we have used the identity for delta distribution on $\SU(2)$
\be
\delta(g) =\int\rd\mu(z)(\langle z|z\rangle-1 )e^{\langle z|g|z\rangle}
\,.
\ee
Combining the result of \eqref{eq:T3_result} and the irrelevant part in the first line of \eqref{eq:T3}, one arrives at a scaleless spin network state $\cS_{(\partial \triangle')^*_1}^{\sl}$ on the 3-cell boundary $\partial \triangle'$ after gluing. 
\end{proof}

Define a function of the face amplitude, which is defined with the spinor $z$ sitting on the base node, as
\be
G(\cA_{f^*}(z)):=e^{-(1+\cA_{f^*}(z))}=e^{-\langle z|z\rangle}\,.
\label{eq:G}
\ee
We then have the following proposition.

\begin{prop}
\label{prop:gluing_1_edge}
For gluing of Type \riii, the scaleless spin network state $\cS_{(\partial \triangle)^*_1}^{\sl}$ on the boundary $\partial \triangle$ of a 3-cell $\triangle$ glued with an edge amplitude $\cA_{e^*}^{t_1\cap t_2=\{e_1,e_2\}}$ in the form of \eqref{eq:edge_amplitude_new}, which is for two adjacent triangles $t_1,t_2\in \partial \triangle$ sharing two edges $e_1,e_2$, a face amplitude $\cA_{f^*}^{e_1}$ in the form of \eqref{eq:face_amplitude_new}, which is for one of the shared edge $e_1$, and the function $G(\cA_{f^*}^{e_2})$ of the face amplitude for the other shared edge $e_2$  in the form of \eqref{eq:G}, produces a scaleless spin network state $\cS_{(\partial \triangle')^*_1}^{\sl}$ on the resulting 3-cell boundary $\partial \triangle'$. It is symbolically expressed as
\be
\cS_{(\partial \triangle')^*_1}^{\sl} 
=\int\left( \prod_{l\tilde{\in} t_1\cup t_2}\rd\mu(z_l) \right)
\cS_{(\partial \triangle)^*_1}^{\sl}\, \cA_{e^*}^{t_1\cap t_2=\{e_1,e_2\}}\, \cA_{f^*}^{e_1} \,G(\cA_{f^*}^{e_2})\,,
\label{eq:gluing_1_edge}
\ee
where the $l\tilde{\in} t_1\cup t_2$ denotes that the integral is for all the spinors on (the dual node of) the two triangles $t_1$ and $t_2$ except the one defining the face amplitude $\cA_{f^*}^{e_2}$.
\end{prop}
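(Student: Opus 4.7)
The proof proceeds along the same lines as Propositions \ref{prop:gluing_3_edges} and \ref{prop:gluing_2_edges}: the Gaussian representation \eqref{eq:SGF_aux} of $\cS^{\sl}_{(\partial\triangle)^*_1}$ is plugged into the right-hand side of \eqref{eq:gluing_1_edge}, and the integrand is split between an irrelevant part, involving only spinors on nodes other than those dual to $t_1,t_2$, and a relevant part. With the orientation convention of fig.\ref{fig:gluing_2_edges}, adapted so that $t_1$ and $t_2$ share the two edges $e_1,e_2$, the relevant part contains: two link-terms $e^{\la\xi_i|\teta_i]}$ for $i=1,2$ already present in $\cS^{\sl}_{(\partial\triangle)^*_1}$ because the two half-links at $e_1,e_2$ meet the same pair of nodes; the six angle-terms sitting at the two nodes dual to $t_1,t_2$; the edge amplitude \eqref{eq:edge_amplitude_new} summing over the three links of each node; the face amplitude $\cA_{f^*}^{e_1}=\la z_{e_1}|z_{e_1}\ra-1$; and the Gaussian factor $G(\cA_{f^*}^{e_2})=e^{-\la z_{e_2}|z_{e_2}\ra}$.

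I would then mimic the three-step reduction of Proposition \ref{prop:gluing_2_edges}. First, insert intermediate spinor integrals to symmetrize the roles of the $z$- and $\xi$-spinors, which lets one replace $(\la z_{e_1}|z_{e_1}\ra-1)$ by $(\la\xi_{e_1}|\xi_{e_1}\ra-1)$ inside the integrand. Second, combine Bargmann's complex-triples trick \eqref{eq:triples} with the contour-integral representation of the inverse factorials implicit in \eqref{eq:new_ldge_w_contour} to perform the Gaussian integrations over the six node-spinors $z_l,\tw_l$; by the computation \eqref{eq:result_of_trick} this produces a Bessel-Clifford function $C_1$ of the relevant combination of auxiliary spinors, which is then rewritten via \eqref{eq:polynomial_identity}--\eqref{eq:BC_identity} as an $\SU(2)$ Haar integral over a single group element $g$ coupling the remaining $\xi$ and $\teta$ auxiliary spinors. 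The face amplitude on $e_1$ collapses this $\SU(2)$ integral to $\delta(g)$, exactly as in Proposition \ref{prop:gluing_2_edges}, producing the expected link-term on the first of the two collapsed boundary links.

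The new ingredient specific to Type \riii{} is the factor $G(\cA_{f^*}^{e_2})=e^{-\la z_{e_2}|z_{e_2}\ra}$. Repeating the above collapse for $e_2$ with the naive face amplitude $(\la z_{e_2}|z_{e_2}\ra-1)$ would produce a second $\delta(g')$ constraint redundant with the first, since $e_1$ and $e_2$ share both endpoints, and would generate a spurious $\delta_{\SU(2)}(\id)$ reflecting the translational gauge volume attached to the internal vertex created by this move, in direct analogy with the $1$--$4$ Pachner divergence discussed around \eqref{eq:41_identity_diverge}. The exponential in $G$ is precisely designed to regulate this integral: when inserted in place of the would-be face amplitude, it gauge-fixes the redundant $\delta_{\SU(2)}(\id)$ and leaves the finite link-term on the second collapsed link. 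Recombining with the irrelevant part, the integrand reconstitutes the Gaussian representation \eqref{eq:SGF_aux} of $\cS^{\sl}_{(\partial\triangle')^*_1}$ on the reduced boundary $\partial\triangle'$. I expect the principal difficulty to be bookkeeping rather than any new identity: keeping consistent orientations, cyclic orders and spinor labels through the proliferation of auxiliary integrations, and verifying that the un-integrated spinor defining $G(\cA_{f^*}^{e_2})$ drops out cleanly from the final expression, so that the result genuinely depends only on the data of $(\partial\triangle')^*_1$.
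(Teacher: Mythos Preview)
Your proposal is correct and follows the same route as the paper, which explicitly defers to the analysis of Propositions \ref{prop:gluing_3_edges} and \ref{prop:gluing_2_edges} and then records only the final short computation. One point of bookkeeping should be sharpened: in Type \riii{} only \emph{one} link survives on $\partial\triangle'$ (the two non-shared edges of $t_1,t_2$ collapse to a single link), so there is a single link-term $e^{\la\eta_4|\txi_3]}$ to produce, not two. Concretely, after the Bargmann--Bessel--Clifford reduction you describe, the relevant integral is
\[
\int_{\SU(2)}\rd g\int\rd\mu(z_1)\,(\la z_1|z_1\ra-1)\,e^{\la z_1|g|z_1\ra}\,e^{\la z_2|g|z_2\ra-\la z_2|z_2\ra}\,e^{\la\eta_4|g|\txi_3]}\,.
\]
The $z_1$-integral gives $\delta(g)$, which simultaneously produces the single link-term and forces $e^{\la z_2|g|z_2\ra}\,G(\cA_{f^*}^{e_2})\big|_{g=\id}=1$. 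So $G$ does not generate a ``second collapsed link-term''; it is there precisely so that the un-integrated spinor $z_{e_2}$ cancels against the loop factor once $g$ is pinned, which is the clean drop-out you anticipated in your last sentence.
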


\begin{proof}
\begin{figure}[h!]
\centering
\begin{minipage}{0.45\textwidth}
\centering
\includegraphics{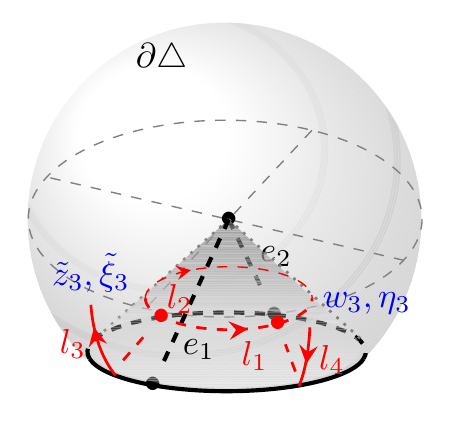}
\subcaption{}
\label{fig:gluing_1_edge_1}
\end{minipage}
\quad
\begin{minipage}{0.45\textwidth}
\centering
\includegraphics{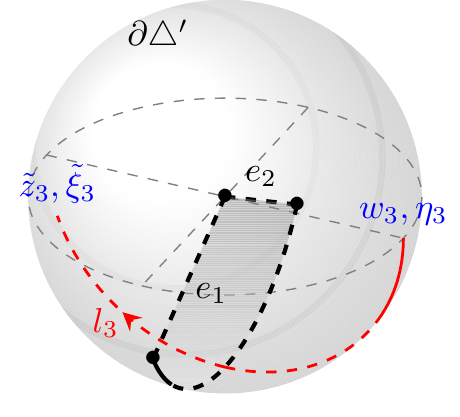}
\subcaption{}
\label{fig:gluing_1_edge_2}
\end{minipage}
\caption{Gluing of the two adjacent triangles {\it (in thick, each visualized as a semicircle and two edges incident to the origin of the 3-ball)} sharing two edges on one continuous boundary {\it(visualized as the boundary of a 3-ball removing a solid cone)}. (a) Before gluing: The left triangle is dual to one node with three links $l_{1,2,3}$ outgoing, while the right triangle is dual to the other node with three links $l_{1,2,4}$ incoming. 
$s(l_1)$ ({\it resp.} $t(l_1)$) is assigned $z_1,\xi_1$ ({\it resp.} $\tw_1,\teta_1$), $s(l_2)$ ({\it resp.} $t(l_2)$) is assigned $z_2,\xi_2$ ({\it resp.} $\tw_2,\teta_2$), $s(l_3)$ ({\it resp.} $t(l_3)$) is assigned $z_3,\xi_3$ ({\it resp.} $\zt_3,\txi_3$), and $s(l_4)$ ({\it resp.} $t(l_4)$) is assigned $\xi_3,\eta_3$ ({\it resp.} $\tw_3,\teta_3$). 
(only part of the spinors are shown for clear visualization). (b) After gluing: The shared edges become internal and one extra internal vertex (the origin of the 3-ball) is created. The remaining edge from different triangles collapses and lives on the resulting boundary. Spinor information in the bulk is integrated out and only the spinors on the resulting boundary, $\ie$ $\tz_3,\txi_3$ and $w_3,\eta_3$, are left.
}

\label{fig:gluing_1_edge}
\end{figure}
 This gluing can be visualized as collapsing a cone into a triangle bounded with two edges connecting the apex and two points on the base circle, as shown in fig.\ref{fig:gluing_1_edge_1}. 
With no loss of generality, we again consider half links associated to one of the nodes (the left one in fig.\ref{fig:gluing_1_edge_1}) are outgoing and are assigned the spinors $z_{1,2,3}$ and half links associated to the other node are incoming and are assigned spinors $\tw_{1,2,3}$. 
With this setting, it is not hard to get a similar expression for the relevant part of the right-hand side of \eqref{eq:gluing_1_edge} as in \eqref{eq:T3_result} after integrating out $z_{1,2,3}$, $\tw_{1,2,3}$ and other auxiliary spinors.
We choose the two (function of) face amplitudes to be
\be
\cA_{f^*}^{e_1}(z_1) =\la z_1|z_1 \ra -1\,,\quad
G(\cA_{f^*}^{e_2}(z_2)) = e^{-\la z_2|z_2 \ra}\,.
\ee
The same analysis as in the proofs of Proposition \ref{prop:gluing_3_edges} and \ref{prop:gluing_2_edges} leads the simple expression of the right-hand side of \eqref{eq:gluing_1_edge}(we use here the spinors $z_{1,2}$ instead of the auxiliary spinors $\xi_{1,2}$ as in \eqref{eq:T3_result} which does not change the final result.):
\be\begin{split}
&\int_{\SU(2)}\rd g \int\rd\mu(z_1)
(\langle z_1|z_1\rangle -1) 
\,e^{\langle z_1|g|z_1 \rangle}\,e^{\langle z_2|g|z_2 \rangle -\langle z_2|z_2\rangle} e^{\langle \eta_4|g|\txi_3]}\\
=&\int_{\SU(2)}\rd g \,\delta(g)e^{\langle z_2|g|z_2 \rangle-\langle z_2|z_2\rangle}  e^{\langle \eta_4|\txi_3]}\\
=&e^{\langle \eta_4|\txi_3]} \,.
\label{eq:T4_result}
\end{split}
\ee
The result is again the link-term for the general scaleless spin network state. Together with the irrelevant part of the right-hand side of \eqref{eq:gluing_1_edge}, we arrive at the scaleless spin network state for the 3-cell boundary after gluing, thus the left-hand side of \eqref{eq:gluing_1_edge}.
\end{proof}

Note that \eqref{eq:gluing_1_edge} is a finite equation because we did not perform the integration over $z_2$. The dependence of the final result on $z_2$ is removed simply by $\delta(g)$. If one replaces $G(\cA_{f^*}^{e_2})$ by $\cA_{f^*}^{e_2}(z_2)$ and further integrate $z_2$ on the right hand side, one gets a delta distribution of $\SU(2)$ group evaluated on the identity on the left hand side, $\ie$
\be
\cS_{(\partial \triangle')^*_1}^{\sl} \,\delta_{\SU(2)}(\id)
=\int\left( \prod_{l\in t_1\cup t_2}\rd\mu(z_l) \right)
\cS_{(\partial \triangle)^*_1}^{\sl}\cdot \cA_{e^*}^{t_1\cap t_2=\{e_1,e_2\}} \cdot 
\cA_{f^*}^{e_1}\cdot \cA_{f^*}^{e_2} \,,
\label{eq:gluing_1_edge_diverge}
\ee
which diverges and the divergence corresponds to the translational symmetry of the extra internal vertex produced after gluing as in the original Ponzano-Regge state-sum model \cite{Freidel:2002dw,Freidel:2004vi}.
In the state-sum model, this divergence can be eliminated by fixing the spin (thus edge length, say $\ell_{l}$,) of one internal edge incident to this internal vertex.
 In a similar spirit,
 \eqref{eq:gluing_1_edge} can be viewed as the gauge-fixing version of \eqref{eq:gluing_1_edge_diverge} which fixes one spinor $z_2$ in the bulk. Another way of gauge fixing is to gauge fix (the norm of) the inner product $|[z_1|z_2\rangle|$ of the two spinors $z_1,z_2$ used to define the face amplitude $\cA_{f^*}^{e_1}$ and $\cA_{f^*}^{e_2}$, which encodes the angle information of the triangle dual to the node that $z_1,z_2$ lives on (see Section \ref{sec:geo_SGF} below). This can be seen by rewriting $\delta_{\SU(2)}(\id)$ in terms of $|[z_1|z_2\rangle|$ as
\be\begin{split}
\delta_{\SU(2)}(\id)=
&\int_{\SU(2)}\rd g\int\rd\mu(z_1)\rd\mu(z_2) \, 
(\langle z_1|z_1 \rangle -1)(\langle z_2|z_2 \rangle -1) e^{\langle z_1|g|z_1 \rangle} e^{\langle z_2|g|z_2\rangle} \\
=&\int\rd\mu(z_1)\rd\mu(z_2) \, (\langle z_1|z_1 \rangle -1)(\langle z_2|z_2 \rangle -1)
e^{|[z_1|z_2\rangle|^2}\,.
\end{split}
\ee
One can straightforwardly conclude from Lemma \ref{lemma:gluing_type} and Proposition \ref{prop:gluing_3_edges}, \ref{prop:gluing_2_edges}, \ref{prop:gluing_1_edge} the following corollary.

\begin{coro}
\label{cor:topological}
The Ponzano-Regge state-integral formula given in Proposition \ref{prop:state_integral_2} is topological invariant. This means the total amplitude is independent of the bulk configuration and is equal to the boundary scaleless spin network state upon gauge fixings, one for each internal vertex.
\end{coro}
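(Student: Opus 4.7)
The plan is to deduce the corollary from Lemma~\ref{lemma:gluing_type} together with Propositions~\ref{prop:gluing_3_edges}, \ref{prop:gluing_2_edges}, and \ref{prop:gluing_1_edge} by iterating the three elementary gluing moves. Concretely, starting from a triangulation $\bf T$ of $\cM$, one views the full amplitude \eqref{eq:state_integral} as the state-integral obtained by gluing a disjoint union of elementary tetrahedra (each carrying a scaleless spin network state on its boundary two-sphere, namely the $\{12 z^{\times 2}\}$-symbol) along their triangles, with the edge and face amplitudes of Proposition~\ref{prop:state_integral_2} inserted at each shared face and edge.

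The first step is to invoke Lemma~\ref{lemma:gluing_type} to write any such bulk gluing as a finite sequence of Type~\ri, Type~\rii, and Type~\riii moves. Each individual move is then absorbed by the corresponding proposition: a Type~\ri move turns two disjoint scaleless spin network states into a single one on the connected-sum boundary via Proposition~\ref{prop:gluing_3_edges}; a Type~\rii move collapses two adjacent triangles of a single boundary into one via Proposition~\ref{prop:gluing_2_edges}, using the face weight \eqref{eq:face_amplitude_new} on the newly internal edge; and a Type~\riii move collapses the remaining edge via Proposition~\ref{prop:gluing_1_edge}, using the modified face insertion $G(\cA_{f^*}^{e_2})$ of \eqref{eq:G} on one of the two shared edges. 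Iterating, one shows by induction on the number of bulk 3-cells that the total amplitude equals the scaleless spin network state $\cS^{\sl}_\Gamma$ on the boundary graph $\Gamma=(\partial\triangle)^*_1$, independently of the interior triangulation. Since any two triangulations of $\cM$ with the same boundary can be connected by such elementary gluing and un-gluing moves (this is Alexander's/Pachner's theorem repackaged in our three move types), topological invariance of the bulk follows.

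The main obstacle, and the reason the identities of Proposition~\ref{prop:gluing_1_edge} are stated in their ``finite'' rather than ``divergent'' form, is the handling of internal vertices generated by Type~\riii moves. As discussed around \eqref{eq:gluing_1_edge_diverge}, completing the integration over the last internal spinor $z_2$ produces a factor $\delta_{\SU(2)}(\id)$, reflecting the translational gauge symmetry attached to each internal vertex (the exact analogue of the divergence of the state-sum model \cite{Freidel:2002dw,Freidel:2004vi}). The resolution is to adopt the partial gauge-fixing prescription already built into Proposition~\ref{prop:gluing_1_edge}: for every internal vertex created in the sequence of moves, one leaves one bulk spinor (equivalently, one inner product $|[z_1|z_2\rangle|$ encoding an internal angle) un-integrated, replacing the face amplitude $\cA_{f^*}$ on the corresponding internal edge by $G(\cA_{f^*})$ of \eqref{eq:G}.

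The remaining step is to check that the result is independent of the choices made along the way, namely the ordering of the elementary moves and the choice of gauge-fixing data. Independence of the ordering is immediate because the propositions show that, after each move, the partial amplitude is itself a scaleless spin network state on a well-defined intermediate boundary graph; the subsequent moves act only on that graph. Independence of the gauge fixing follows because any two choices of fixed spinor (resp.\ fixed angle) at an internal vertex differ by an $\SU(2)$ transformation absorbed by the invariance \eqref{eq:symm_SGF} of the scaleless intertwiners and by the $\SU(2)$-invariance of the spinor measure $\rd\mu(z)$. Combining these observations yields the statement of Corollary~\ref{cor:topological}: modulo one gauge-fixing per internal vertex, the total amplitude depends only on the boundary state $\psi^{\sl}_\Gamma$ and on the topology of $\cM$.
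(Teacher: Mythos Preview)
Your proposal is correct and follows essentially the same route as the paper: the corollary is deduced directly from Lemma~\ref{lemma:gluing_type} together with Propositions~\ref{prop:gluing_3_edges}--\ref{prop:gluing_1_edge}, by decomposing an arbitrary bulk gluing into a sequence of the three elementary moves and observing that each step returns a scaleless spin network state on the intermediate boundary. The paper's own argument is extremely terse (it simply states that the corollary follows ``straightforwardly'' and then records how the $2$--$3$ and $1$--$4$ Pachner moves factor into Types~\ri, \rii, \riii); your version supplies the inductive structure and the gauge-fixing independence discussion that the paper leaves implicit. One small remark: once you have shown that the amplitude equals $\cS^{\sl}_\Gamma$ for \emph{every} triangulation, the appeal to Pachner's theorem is redundant---triangulation independence is already established.
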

In particular, in the $2-3$ Pachner move, the gluing of two tetrahedra is of Type \ri; the gluing of three tetrahedra includes three steps, two of which are of Type \ri\, and the other is of Type \rii. In the $1-4$ Pachner move, the gluing of four tetrahedra includes six steps, three of which are of Type \ri, two of which are of Type \rii\, and the remaining one is of Type \riii.

\section{Geometric interpretation of the state-integral}
\label{sec:conformal}
Spins are geometrically interpreted as lengths thus the geometrical meaning of the original Ponzano-Regge state-sum is rather simple as reviewed above. In contrast, the geometrical interpretation of spinors, or the inner product of spinors which are used in the new Ponzano-Regge state-integral described in Proposition \ref{prop:state_integral_2}, is not as apparent. To understand what the new model describes about the geometry, we analyze in this section separately the geometrical information of the tetrahedron described by the new vertex amplitude, and that of the gluing process described by the new edge amplitude. This will also unravel the geometry described by the boundary state $\psi^{\sl}_\Gamma$ in the dependence of the total amplitude and hint at the correspondence boundary condition in the classical counterpart.

\subsection{Poles of the  $\{12z^{\times 2}\}$ symbol}
\label{sec:geo_SGF}

In this subsection, we come back to the tetrahedron graph and look into the geometrical interpretation of the SGF. To this end, we rewrite the SGF in terms of the new variables, namely the {\it angle couplings} and the {\it link couplings}. They are both invariant under the $\SL(2,\bC)$ gauge transformation of spinors and the latter is further invariant under the anti-scale transformation (see \eqref{eq:symm_SGF}) thus forming a minimum set of variables. 

At first glance, since the SGF encodes the summation of spins, which geometrical represent lengths of the one-skeleton of a tetrahedron, the length information is expected to be washed out and it leaves the rest of the geometrical information, thus angles. Angle information of a tetrahedron includes internal angles of triangles (called internal angles for short) on the boundary of the tetrahedron, and the dihedral angles between each pair of triangles. 
We will find,  by taking the large $j$ limit of the $\{6j\}$-symbols, that these angle information are exactly stored in the norms and phases of the 6 independent link couplings. 

On the other hand, while the angle couplings contain redundant degrees of freedom in the SGF, it will become important when we describe the geometrical interpretation of the edge amplitude \eqref{eq:edge_amplitude_new} in the newly constructed spinfoam amplitude. Therefore, it is necessary to explore what the angle couplings represent geometrically as well. This can be realized from the relation of the angle couplings and the link couplings quantified below. In other words, by solving the angle couplings from the link couplings, one can directly translate the geometrical interpretation of the link couplings to that of the angle couplings. The obstacle is that we have 12 angle couplings at hand but only 6 link couplings, which means we need to choose a ``gauge'' to determine the solution. This is at the same time a benefit since we are free to choose a gauge such that it is the most geometrically reasonable then the angle coupling can be interpreted ``nicely''. 

Before we dig into details, let us first summarize what we will analyze in this subsection. The expression of the SGF \eqref{eq:evaluate_SGF} is purely in terms of the inner products of spinors, which we will call the angle couplings. We first rewrite the SGF into an expression \eqref{eq:SGF_angle_edge_coupling} in terms of the link couplings \eqref{eq:angle_to_edge}, then do a stationary analysis at large spins to find the geometrical interpretation of the link couplings which is illustrated in \eqref{eq:saddle_edge_norm} and \eqref{eq:saddle_edge_phase} \cite{Bonzom:2015ova,Bonzom:2019dpg}. Link couplings encode the conformal geometry of a tetrahedron both in their norms and phases. This geometrical interpretation can be translated back to the angle couplings. However, the map from the angle couplings to link couplings is only surjective but not bijective. We choose a ``geometrical gauge'' \eqref{eq:angle_couple_geometry}, which has the most local sense, to fix the angle coupling definition in terms of the link couplings. The geometrical interpretation of the angle couplings is not relevant to the later analysis in this subsection but will be important for Subsection \ref{sec:glue} where we investigate the geometrical meaning of the edge amplitudes. 
We plug the critical points of the link couplings into the SGF \eqref{eq:SGF_angle_edge_coupling} and write the semi-classical version of the SGF \eqref{eq:laplace_of_SGF} which can be expressed as a Laplace transform of (the exponential of) the Regge action of a tetrahedron. We also give a formal expression of the classical correspondence \eqref{eq:classical_SGF} of the SGF. finally, we conclude this subsection in the diagram \eqref{diag:6j_SGF} with the relations between the $\{6j\}$-symbol and the SGF as well as their classical and semi-classical correspondences. 

\medskip
\noindent {\bf From angle couplings to link couplings. }
\medskip
 
Each pair of links incident to the same node form an angle. We define the angle coupling $X_{ll'}$ by the (holomorphic) inner product of the spinors associated to the pair of links $(l,l')$ with the order $l\prec l'$ as 
\be
X_{ll'} \equiv X_{l'l}:=\left[z_l|z_{l'}\right> = |X_{ll'}|e^{i \Phi_{ll'}}\,,\quad
l\prec l'\,,\quad l,l'\in n\,,\quad \Phi_{ll'}\in [0,\pi) \,.
\ee
We have separated the norm $|X_{ll'}|$ and the phase $\Phi_{ll'}$ of the angle coupling.
These angle couplings can be grouped to form the link couplings $Y_l$'s \cite{Bonzom:2015ova} such that
\be
\prod_{n\in \Gamma}\, \prod_{l,l',l'' \in v} X_{ll'}^{J_n-2j_{l''}}
=\prod_{l\in \Gamma}Y_{l}^{2j_l}\,.
\label{eq:angle_equal_link}
\ee
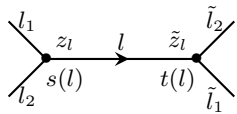
\begin{figure}[h!]
	\begin{tikzpicture}[scale=1]
	\coordinate (X) at (-6,0.7);
	\coordinate (gl) at (-5.5,0.2);
	\coordinate (gr) at (-3.5,0.2);
	\coordinate (Xt) at (-3,0.7);
	\coordinate (Y) at (-6,-0.3);
	\coordinate (Yt) at (-3,-0.3);	
	
	\draw[thick,decoration={markings,mark=at position 0.55 with {\arrow[scale=1.3,>=stealth]{>}}},postaction={decorate}] (gl) --node[midway,above]{$l$} node[very near start,above]{$z_l$} node[very near end,above]{$\tilde{z}_l$} node[very near start,below]{$s(l)$} node[very near end,below]{$t(l)$} (gr);
	\draw[thick](gl)--node[midway,above]{$l_1$}(X);
	\draw[thick](gl)--node[midway,below]{$l_2$}(Y);
	\draw[thick](gr)--node[midway,above]{$\tilde{l}_2$}(Xt);
	\draw[thick](gr)--node[midway,below]{$\tilde{l}_1$}(Yt);	
	
	\draw (gl) node{$\bullet$};
	\draw (gr) node{$\bullet$};
	
	\end{tikzpicture}
\caption{Two three-valent nodes $s(l)$ and $t(l)$ connected by an oriented link $l$. links $l,l_1,l_2$ are incident to $s(l)$, and links $l,\tilde{l}_1,\tilde{l}_2$ are incident to $t(l)$.}
\label{fig:Y_l}
\end{figure}
Consider two three-valent nodes connected with an oriented link $l$, where $l_1, l_2$ are the other two links incident to the source $s(l)$ of the link $l$, and $\tilde{l}_1,\tilde{l}_2$ are the two other links incident to the target $t(l)$, as shown in fig.\ref{fig:Y_l}. 
The link coupling $Y_l$ is expressed in terms of the angle couplings as
\be
Y_l
\equiv |Y_l|e^{i\Psi_l}
=\sqrt{
\frac{[z_l|z_{l_1}\rangle [z_{l_2}|z_l\rangle}{[z_{l_1}|z_{l_2}\rangle}
\frac{[\tz_l|z_{\tilde{l}_1}\rangle [z_{\tilde{l}_2}|\tz_l\rangle}{[z_{\tilde{l}_1}|z_{\tilde{l}_2}\rangle}}
=\sqrt{
\frac{X_{ll_1} X_{ll_2}}{X_{l_1l_2}}
\frac{X_{l\tilde{l}_1} X_{l\tilde{l}_2}}{X_{\tilde{l}_1\tilde{l}_2}}
}
\,,\quad
\Psi_l \in [0,\pi)
\,.
\label{eq:angle_to_edge}
\ee
The norm $|Y_l|$ and the phase $\Psi_l$ of the link coupling read explicitly
\be
|Y_l|=\sqrt{
\frac{|X_{ll_1}||X_{ll_2}|}{|X_{l_1l_2}|}
\frac{|X_{l\tilde{l}_1}||X_{l\tilde{l}_2}|}{|X_{\tilde{l}_1\tilde{l}_2}|}
}\,,\quad
\Psi_l=
\mod\left( \f12 (\Phi_{ll_1} + \Phi_{ll_2} - \Phi_{l_1l_2} 
+ \Phi_{l\tilde{l}_1} + \Phi_{l\tilde{l}_2} - \Phi_{\tilde{l}_1 \tilde{l}_2})
, \pi\right)
\,.
\label{eq:norm_and_phase}
\ee

It will be convenient to introduce the ``shared'' spins for angles on the same node as
\be
k_{ll'}:=J_n-2j_{l''}\,,\quad
k_{e'l''}:=J_n-2j_{l}\,,\quad
k_{l''l}:=J_n-2j_{l'}\,,\quad
\text{where }J_n=j_l+j_{l'}+j_{l''}\,
\text{ and } l,l',l''\in n \,.
\label{eq:j_to_k}
\ee
$k_{ll'}$ can be understood as the number of threads (equal to twice the spin value) shared by the links $l$ and $l'$, as illustrated in fig.\ref{fig:thread}.
\begin{figure}
	\centering
\begin{tikzpicture}
\coordinate (A) at (0,0);
\coordinate (B) at (4*4/4,0);
\coordinate (C) at (2*4/4,3.464*4/4);

\coordinate (c) at ($ (A)!.5!(B) $);
\coordinate (b) at ($ (A)!.5!(C) $);
\coordinate (a) at ($ (B)!.5!(C) $);

\path[name path = Aa] (A) -- (a);
\path[name path = Bb] (B) -- (b);
\path[name path = Cc] (C) -- (c);
\path [name intersections = {of = Aa and Bb,by=O}];

\coordinate (A1) at ([shift=(120:0.2)]A);
\coordinate (A2) at ([shift=(120:0.1)]A);
\coordinate (A3) at ([shift=(-60:0.1)]A);
\coordinate (A4) at ([shift=(-60:0.2)]A);

\coordinate (A11) at ([shift=(30:2)]A1);
\coordinate (A21) at ([shift=(30:2)]A2);
\coordinate (A31) at ([shift=(30:2)]A3);
\coordinate (A41) at ([shift=(30:2)]A4);
\coordinate (A01) at ([shift=(30:2)]A);

\coordinate (B1) at ([shift=(60:0.25)]B);
\coordinate (B2) at ([shift=(60:0.15)]B);
\coordinate (B3) at ([shift=(-120:0.05)]B);
\coordinate (B4) at ([shift=(-120:0.15)]B);
\coordinate (B0) at ([shift=(60:0.05)]B);

\coordinate (B11) at ([shift=(150:2)]B1);
\coordinate (B21) at ([shift=(150:2)]B2);
\coordinate (B31) at ([shift=(150:2)]B3);
\coordinate (B41) at ([shift=(150:2)]B4);
\coordinate (B01) at ([shift=(150:2)]B0);

\coordinate (C1) at ([shift=(0:0.3)]C);
\coordinate (C2) at ([shift=(0:0.2)]C);
\coordinate (C3) at ([shift=(0:0.1)]C);
\coordinate (C4) at ([shift=(-180:0.1)]C);
\coordinate (C5) at ([shift=(-180:0.2)]C);

\coordinate (C11) at ([shift=(-90:2)]C1);
\coordinate (C21) at ([shift=(-90:2)]C2);
\coordinate (C31) at ([shift=(-90:2)]C3);
\coordinate (C41) at ([shift=(-90:2)]C4);
\coordinate (C51) at ([shift=(-90:2)]C5);
\coordinate (C01) at ([shift=(-90:2)]C);

\draw[red,thick](A)-- (A01);
\draw[red,thick](A1)--node[midway,above]{$k_{ll''}$} (A11);
\draw[red,thick](A2)--(A21);
\draw[orange,thick](A3)--(A31);
\draw[orange,thick](A4)--(A41);

\draw[blue,thick](B0)--(B01);
\draw[blue,thick](B1)--(B11);
\draw[blue,thick](B2)--(B21);
\draw[orange,thick](B3)--(B31);
\draw[orange,thick](B4)--node[midway,left]{$k_{ll'}$} (B41);

\draw[red,thick](C)-- (C01);
\draw[blue,thick](C1)-- node[midway,right]{$k_{l'l''}$} (C11);
\draw[blue,thick](C2)--(C21);
\draw[blue,thick](C3)--(C31);
\draw[red,thick](C4)--(C41);
\draw[red,thick](C5)--(C51);

\def\x{0.8};
\def\y{0.5};
\coordinate (rA) at ([shift=(30:0.2)]A);
\coordinate (rAp) at ([shift=(-60:\x/2)]rA);
\coordinate (rA) at ([shift=(120:\x/2)]rA);
\coordinate (rAp) at ([shift=(30:\y)]rAp);

\coordinate (rB) at ([shift=(120:0.2)]B);
\coordinate (rBp) at ([shift=(60:\x/2)]rB);
\coordinate (rB) at ([shift=(-120:\x/2)]rB);
\coordinate (rBp) at ([shift=(150:\y)]rBp);

\coordinate (rC) at ([shift=(-90:0.2)]C);
\coordinate (rCp) at ([shift=(0:\x/2)]rC);
\coordinate (rC) at ([shift=(180:\x/2)]rC);
\coordinate (rCp) at ([shift=(-90:\y)]rCp);

\coordinate (t1) at ([shift=(180:0.6)]C01);
\coordinate (t2) at ([shift=(0:0.6)]C01);
\coordinate (t3) at ([shift=(-90:1)]C01);

\draw[fill=white,rotate around={-60:(rA)}] (rA) rectangle (rAp) node[pos=.5,rotate=-60] {$2j_l$};
\draw[fill=white,rotate around={-30:(rB)}] (rB) rectangle (rBp) node[pos=.5,rotate=60] {$2j_{l'}$};
\draw[fill=white] (rC) rectangle (rCp) node[pos=.5] {$2j_{l''}$};
\draw[fill=white] (t1) -- (t2) -- (t3) --cycle;

\draw (O) node{$2J_n$};

\end{tikzpicture}
\caption{Thicken the links $l,l',l''$ incident to the node $n$ into threads. Each thread carries a spin $\f12$ and the number of threads for each link is twice the spin value it is dressed with. The shared spins $k_{ll'}$ can be viewed as the number of threads shared by the links $l$ and $l'$. Similarly for $k_{ll''}$ and $k_{l'l''}$. }
\label{fig:thread}
\end{figure}
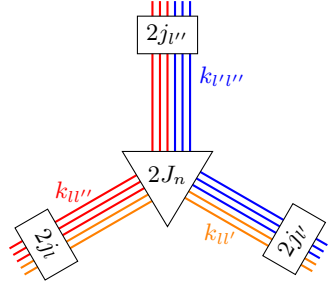
These shared spins are not independent. 
Referring to the relative position of the links $l,l_1,l_2,\tilde{l}_1,\tilde{l}_2$ shown in fig.\ref{fig:Y_l}, 
the constraint for $k_{ll'}$'s is
\be
k_{ll_1}+k_{ll_2}=k_{l\tilde{l}_1}+k_{l\tilde{l}_2}= 2j_l\,,\quad \forall l\,.
\label{eq:k_to_j}
\ee
Therefore, we have six constraints for 12 shared spins $k_{ll'}$'s, which results in 6 independent shared spins as expected. 

Let us recall the SGF \eqref{eq:SGF_1} written with the angle and link couplings,
\be
\cS(\{X_{ll'}\})
=\sum_{j_1\cdots j_6} \left[ \prod_{n=1}^4\sqrt{\frac{(J_n+1)!}{\prod_{l,l'\in n}k_{ll'}!}} \,\right]
\left\{ 
\ba{ccc}
j_1 & j_2 & j_3 \\
 j_4 & j_5 & j_6
\ea\right\}
\prod_{n=1}^4\left( \prod_{l,l'\in n} X_{ll'}^{k_{ll'}} \right)\,,
\ee
or
\be
\cS(\{Y_l\})
=\sum_{j_1\cdots j_6} \left[ \prod_{n=1}^4\sqrt{\frac{(J_n+1)!}{\prod_{l,l'\in n}k_{ll'}!}} \,\right]
\left\{ 
\ba{ccc}
j_1 & j_2 & j_3 \\
 j_4 & j_5 & j_6
\ea\right\}
\prod_{l=1}^6 Y_l^{2j_l}\,.
\label{eq:SGF_y}
\ee
For any loop $\cL$, the following equality holds
\be
\prod_{n\in \cL} X_{ll'}^n = \prod_{l\in \cL} Y_l\,,
\ee
thus \eqref{eq:evaluate_SGF} can also be written in two ways
\be
\cS(\{z_l,\zt_l\})=G(\{z_l,\zt_l\})^{-2}\,,\quad
G(\{z_l,\zt_l\})
=1+\sum_{\cL}
\prod_{l,l' \in n \subset \cL} X_{ll'}
=1+\sum_{\cL}
\prod_{l \subset \cL}Y_l\,.
\label{eq:SGF_angle_edge_coupling}
\ee

\medskip
\noindent {\bf Link couplings at the stationary point.}
\medskip

Now that we have the expression of the SGF in terms of the link couplings, we would like to apply the stationary analysis at the large $j$ limit of the SGF to look into the poles. We first take the Stirling approximation of the factorials:
\be
M!\sim \sqrt{2\pi M}\left( \frac{M}{e} \right)^M = e^{M\ln M +O(M)}\,.
\ee
When $V^2>0$ \footnotemark{}, the $\{6j\}$-symbol represents a tetrahedron embedded in the 3D Euclidean space. It reads \cite{Ponzano:1968se}
\be
\left\{ 
\ba{ccc}
j_1 & j_2 & j_3 \\
 j_4 & j_5 & j_6
\ea\right\}
\sim
\frac{1}{\sqrt{12\pi V}}\cos \left( \sum_{l=1}^6 \ell_l \Theta_l + \frac{\pi}{4} \right) \,,
\ee 
\footnotetext{
We show here that it is safe to disregard the $V^2<0$ contribution to the SGF at large spins. 
When $V^2<0$, the asymptotic expression of the $\{6j\}$-symbol corresponds to a tetrahedron embedded in the 3D Minkowski space. 
This can be seen from the geometrical expression of the volume $V$
\be
V^2=\frac{4 A_{s(e)}^2A_{t(e)}^2}{9 \,\ell_{l}^2}\left( 1-(\vec{\hat{n}}_{s(e)}\cdot \vec{\hat{n}}_{t(e)})^2 \right)\,,
\nn\ee
where $A_{s(e)}$ ({\it resp.} $A_{t(e)}$) is the area of the triangle opposite to the source ({\it resp.} target) vertex of the edge $e$, and $\vec{\hat{n}}_{s(e)}$ ({\it resp.} $\vec{\hat{n}}_{t(e)}$) the normal to this triangle. (We have assumed the edges are oriented, which can be arbitrarily decided.) 
When $V^2<0$, any, hence all, pair of normals satisfy $|\vec{\hat{n}}_{s(e)}\cdot \vec{\hat{n}}_{t(e)}|>1$. This is possible in the Minkowski space, where the angle formed by the normal to two faces, say $a$ and $b$, is calculated by
\be
\Theta_{ab}=\cosh^{-1}(\vec{\hat{n}}_a\cdot \vec{\hat{n}}_b)\,.
\nn\ee
This is called the internal Lorentzian angle. (There is a similar formula for exterior Lorentzian angle, see \cite{Barrett:1993db}). 
The $\{6j\}$-symbol in this case reads \cite{Ponzano:1968se,Barrett:1993db}
\be
\left\{ 
\ba{ccc}
j_1 & j_2 & j_3 \\
 j_4 & j_5 & j_6
\ea\right\}
\sim
\frac{1}{2\sqrt{12\pi |V|}}\cos \phi \exp \left( -\left| \sum_{l=1}^6 \ell_{l} \I \,\Theta_{l} \right| \right)\,,
\label{eq:6J_n_less0}
\nn\ee
where $\phi=\sum_{{l}=1}^6(\ell_{l}-\f12)\Re \,\Theta_{l}$, $\Theta_{l}$ the external dihedral angle about the edge $e$ defined by the Euclidean formula
\be
\cos \Theta_{l} =\frac{\vec{\hat{n}}_{s(e)}\cdot \vec{\hat{n}}_{t(e)}}{\vec{\hat{n}}_{s(e)}^2}=-\vec{\hat{n}}_{s(e)}\cdot \vec{\hat{n}}_{t(e)}\,.
\nn\ee
It follows that $|\cos \Theta_{l}|>1$, which implies that $\Theta_{l}$ is complex.
The smallest spin example of the $\{6j\}$-symbol giving a negative $V^2$ is $\left\{ 
\ba{ccc}
1/2 & 1/2 & 1 \\
1/2 & 1/2 & 1
\ea\right\}=\f16$ with $V^2=-\frac{9}{512}$.  
From the asymptotic expression for $V^2<0$ given above, the $\{6j\}$-symbol is exponentially suppressed even if the triangle inequality is satisfied. Therefore $V^2<0$ will not control the behaviour of the SGF.
}
where $\ell_l= j_l+\f12$ is the edge length of the edge $e$ (with abusive notation), $V$ the volume of the tetrahedron with edge lengths $\{\ell_l\}$ calculated by the Cayley-Menger determinant,
and $\Theta_l$ the external dihedral angle about the edge $e$, $\ie$ the angle between the outward normals to the faces sharing the edge $e$ \footnotemark{}. 
\footnotetext{
Note that the edges of the tetrahedron $T$ are denoted as $e$'s and the links of the tetrahedron graph $\Gamma=(\partial T)^*_1$ dual to the boundary of $T$ are denoted as $l$'s. It is on the links of $\Gamma$ where we associate spin labels $\{j_l\}$, but they represent the lengths of the edges $e$'s on $T$ which are dual to $l$'s. We denote the length $\ell_l$ and dihedral angle $\Theta_l$ with subscript $l$ instead of $e$ to avoid a mixture of notation in the same equation as much as possible in the main text. 
}
It is computed by the edge lengths as
\be
\sin \Theta_l = \frac{3}{2}\frac{V \ell_l}{S_{s(e)}S_{t(e)}}\,,
\ee
where $S_{v}$ is the area of the triangle opposite to the vertex $v$.
The large $j$ limit of the SGF \eqref{eq:SGF_y} is thus \cite{Bonzom:2019dpg}
\be\begin{split}
\cS^{\sl}(\{Y_l\})
&=\sum_{j_1\cdots j_6} \left[ \prod_{n=1}^4\sqrt{\frac{(J_n+1)!}{\prod_{l,l'\in n}k_{ll'}!}} \,\right]
\left\{ 
\ba{ccc}
j_1 & j_2 & j_3 \\
 j_4 & j_5 & j_6
\ea\right\}
\prod_{l=1}^6 Y_l^{2j_l}\\
&\sim \sum_{\{j_l\}}\,
e^{\sum_{n=1}^4 \f12\left(J_n \ln J_n - \sum_{l,l'\in n}k_{ll'}\ln k_{ll'}\right)} e^{ \sum_{l=1}^6  2j_l \left( \ln |Y_l|+i \Psi_{l} \right)}
\frac{1}{2\sqrt{12\pi V}}\sum_{\epsilon=\pm }
e^{i\epsilon \left( \sum_{l=1}^6 \ell_l\Theta_l + \frac{\pi}{4} \right)}
\\
&=\sum_{\epsilon=\pm} \sum_{\{j_l\}} \frac{1}{2\sqrt{12\pi V}} 
e^{\cS_\epsilon(\{Y_l,j_l\})}\,.
\end{split}
\label{eq:large_j_limit_SGF}
\ee
In the second line, we have used $\cos \left( \sum_{l=1}^6 \ell_l \Theta_l + \frac{\pi}{4} \right)= \f12 \sum_{\epsilon=\pm} e^{i\epsilon \left( \sum_{l=1}^6 \ell_l\Theta_l + \frac{\pi}{4} \right)}$. 

As the volume $V$ grows polynomially with the spins, its derivative of spin will contribute to the sub-leading correction of the stationary point. Therefore, to the leading order, one simply needs to consider the stationary point of the exponent term  $\cS_\epsilon(\{Y_l,j_l\})$ of the SGF. 

The real and imaginary part of $\cS_\epsilon$ can be rewritten as 
\begin{align}
\Re[\cS_\epsilon(\{Y_l,j_l\})]
&=\sum_{n=1}^4\f12 \left[ J_n\ln J_n -\sum_{l,l'\in n}k_{ll'}\ln k_{ll'}\right] + \sum_{l=1}^6 j_l \ln |Y_l|^2\,,\label{eq:real_sgf}\\
\I[\cS_\epsilon(\{Y_l,j_l\})]
&=\sum_{l=1}^6 \left[ j_l (2\Psi_l +\epsilon \Theta_l )+\f12\epsilon \Theta_l \right] +\epsilon \frac{\pi}{4}\,.\label{eq:imaginary_sgf}
\end{align}

Thanks to the Schläfli identity, $\sum_{e}^6 j_l \frac{\partial \Theta_l}{\partial j_l}=0$, the phase term has a simple derivative expression $\frac{\partial \sum_{l}j_{l}\Theta_l}{\partial j_l}= \Theta_l$. The saddle point $\frac{\partial \cS_\epsilon}{\partial j_l }$ can be separated into the real part and the imaginary part.
Using some trigonometry relations, the result reads (neglecting sub-leading contributions)\cite{Bonzom:2015ova,Bonzom:2019dpg}
\begin{align}
\frac{\partial \Re [\cS_\epsilon ]}{\partial j_e}=0 \,\,
&\rightarrow\,\,
|Y_l|^2\simeq \sqrt{
\frac{k_{ll_1}k_{ll_2}}{k_{l_1l_2}J_{s(l)}}
\frac{k_{l\tilde{l}_1}k_{l\tilde{l}_2}}{k_{\tilde{l}_1\tilde{l}_2}J_{t(l)}}
}
\equiv \tan \frac{\phi_{s(l)}}{2} \tan \frac{\phi_{t(l)}}{2}\,,
\label{eq:saddle_edge_norm} \\ 
\frac{\partial \I [\cS_\epsilon ]}{\partial j_l}=0 \,\,
&\rightarrow\,\,
\Psi_l = -\frac{\epsilon}{2}\Theta_l\,.	
\label{eq:saddle_edge_phase}
\end{align}
we have identified the length of edge $e$ with the spin values $j_l$ as $j_l\gg \f12$. 
$\phi_{s(l)}$ is the internal angle opposite to the edge $e$ in the triangle dual to the source node $s(l)$ of the link $l$, likewise for $\phi_{t(l)}$, as shown in fig.\ref{fig:edge_for_two_triangle}.
\begin{figure}[h!]
\centering 
\begin{tikzpicture}[scale=2.3, one end extended/.style={shorten >=-#1},one end extended/.default=1cm,]
\coordinate (O1) at (0,0,0);

\coordinate (A1) at (0,0.561,0);
\coordinate (B1) at (0,-1.061,0);
\coordinate (C1) at (-1.566,-0.354,-0.5);
\coordinate (D1) at (0.866,-0.354,-0.5);

\coordinate (aa) at (-1.566/2,-0.35,-0.25);
\coordinate (bb) at (0.866/2,-0.35,-0.25);

\coordinate (AC) at ($(C1)!(aa)!(A1)$);
\coordinate (BC) at ($(C1)!(aa)!(B1)$);
\coordinate (AD) at ($(D1)!(bb)!(A1)$);
\coordinate (BD) at ($(D1)!(bb)!(B1)$);

\coordinate (O) at ($(A1)!0.6!(B1)$);

\draw[thick] (A1) -- (B1) node[pos=0.4,right]{$e$};
\draw[thick] (A1) -- (C1);
\draw[thick] (A1) -- (D1);
\draw[thick] (B1) -- (C1);
\draw[dashed,thick] (C1) -- (D1);
\draw[thick] (D1) -- (B1);

\draw[red,thick,postaction={decorate},decoration={markings,mark={at position 0.9 with {\arrow[scale=1.5,>=stealth]{>}}}}] (aa) -- (O) node[pos=0.9,below]{$l$} node[pos=0.1,below]{$s(l)$};
\draw[red,thick] (O) -- (bb) node[pos=0.9,below]{$t(l)$};
\draw [red,thick,one end extended] (aa) -- (AC);
\draw [red,thick,one end extended] (aa) -- (BC);
\draw [red,thick,one end extended] (bb) -- (AD);
\draw [red,thick,one end extended] (bb) -- (BD);

\draw[red] (aa) node{$\bullet$};
\draw[red] (bb) node{$\bullet$};

    \draw ([shift=(-10:0.37)]C1) node{$\phi_{s(l)}$};

    \draw ([shift=(190:0.34)]D1) node{$\phi_{t(l)}$};
\draw[->] ([shift=(-60:0.2cm)]A1) arc (-60:230:0.2cm and 0.13cm) node[midway,above]{$\Theta_l+\pi$};
\end{tikzpicture}
\caption{Two triangles (non-planar) sharing one edge $e$ and its 2D oriented dual graph ({\it in red}). The node $s(l)$ is dual to the left triangle and $t(l)$ dual to the right one. $\phi_{s(l)}$ and $\phi_{t(l)}$ are the internal angle opposite to the edge $e$ within the left and right triangle respectively. Their relation with the norm of the link coupling at the saddle point is given in \eqref{eq:saddle_edge_norm}. $\Theta_l$ is the external dihedral angle about the edge $e$. Its relation with the phase of the link coupling at the saddle point is given in \eqref{eq:saddle_edge_phase}.}
\label{fig:edge_for_two_triangle}
\end{figure}
The saddle point corresponds to the pole of the SGF, which is also the Fisher zero for the Ising partition function on a tetrahedron graph \cite{Bonzom:2019dpg}. It clearly expresses the conformal geometry of the (classical limit of) the tetrahedron: The norm of the link coupling $|Y_l|$ corresponds to the pair of internal angles $\left(\phi_{s(l)},\phi_{t(l)}\right)$ opposite to the edge $e$ in the two triangles sharing $e$, while the phase $\Psi_l$ corresponds to half of the external dihedral angle about the edge $e$. 

\medskip
\noindent {\bf Back to angle couplings. }
\medskip

Suppose we have a solution to the link couplings $\{Y_l\}$, we aim at solving the angle couplings $\{X_{ll'}\}$ from $\{Y_l\}$. 
As there are 12 angle couplings $X_{ll'}$'s, hence 24 real variables, but only 6 link couplings $Y_l$'s, hence 12 real variables, there is a family of solutions for $\{X_{ll'}\}$ in terms of $\{Y_l\}$, and each solution can be parametrized by 12 real parameters or 6 complex parameters. 

A natural assumption is that each link coupling gains equal contribution from the source and the target of the link. Referring to \eqref{eq:angle_equal_link}, it means
\be
\frac{X_{ll_1} X_{l_2l}}{X_{l_1l_2}}
=
\frac{X_{l\tilde{l}_1} X_{\tilde{l}_2l}}{X_{\tilde{l}_1\tilde{l}_2}}
=Y_l\,.
\label{eq:symm_j_to_k}
\ee
One then gets a symmetric solution to the angle coupling
\be
X_{ll'} = \sqrt{ Y_l Y_{l'} }\,,
\label{eq:sol_zz_symm}
\ee
which indeed gives back \eqref{eq:angle_to_edge}. 
This result can be viewed as obtained by splitting the power $2j_l$ of each link coupling $Y_l$ in \eqref{eq:angle_equal_link}, obeying the constraint \eqref{eq:k_to_j}, into
\be
2j_l=\frac{k_{ll_1}+k_{ll_2}}{2}+\frac{k_{l\tilde{l}_1}+k_{l\tilde{l}_2}}{2}\,,
\ee
and write $Y_l^{2j_l}=\sqrt{Y_l}^{k_{ll_1}}\sqrt{Y_l}^{k_{ll_2}} \sqrt{Y_l}^{k_{l\tilde{l}_1}}\sqrt{Y_l}^{k_{l\tilde{l}_2}}$. 
After splitting all the link couplings in this way, 
the angle coupling $X_{ll'}$ is a collection of the terms with the power $k_{ll'}$. 

With the same logic, one can obtain the most general solution by arbitrarily distributing the contributions of the power to the source and target of each link. That is to separate the power $2j_l$ as
\be
2j_l=\lambda_l^{s(l)} (k_{ll_1}+k_{ll_2})+\lambda_l^{t(l)}(k_{l\tilde{l}_1}+k_{l\tilde{l}_2})\quad\longrightarrow\quad
\left|\ba{l}
\frac{X_{ll_1} X_{l_2l}}{X_{l_1l_2}}
=Y_l^{2\lambda_l^{s(l)}}
\\[0.2cm]
\frac{X_{l\tilde{l}_1} X_{\tilde{l}_2l}}{X_{\tilde{l}_1\tilde{l}_2}}
=Y_l^{2\lambda_l^{t(l)}}
\ea\right.\,,\quad
\lambda_l^{s(l)},\lambda_l^{t(l)}\in[0,1]\quad \text{with } \lambda_l^{s(l)}+\lambda_l^{t(l)}=1\,.
\label{eq:gen_j_to_k}
\ee
We call this set of parameters $\{\lambda_l^n\}$, totally 12, ``scaling factors'' as they scale the contribution of link couplings from the two links forming the angle. 
The solution to the angle coupling $X_{ll'}$ parametrized with $\{\lambda_l^n\}$ is
\be
X_{ll'} =  Y_l^{\lambda_l^n}Y_{l'}^{\lambda_{l'}^n}\,,\quad
l,l'\in n\,.
\label{eq:sol_zz_gen}
\ee
Separating into the norm and the phase, we have
\be
|X_{ll'}|=|Y_l|^{\lambda_l^n}|Y_{l'}|^{\lambda_{l'}^n}\,,\quad
\Phi_{ll'}=\mod \left(\lambda_l^n \Psi_l + \lambda_{l'}^n \Psi_{e'}\,, \pi\right)\,.
\label{eq:sol_zz_gen_norm_phase}
\ee
Apart from writing the scaling factors into the real exponents, one can also write them into the multiplication coefficients, which are complex, as
\be
X_{ll'} = \frac{1}{\alpha_l \alpha_{l'}} Y_l Y_{l'}\,,\quad
\text{with } 
\alpha_l = Y_l^{1-\lambda_l^n}\,,\, \alpha_{l'} = Y_{l'}^{1-\lambda_{l'}^n}\,.
\ee
Note that, as we have fixed the range of the phase $\Phi_l\in [0,\pi)$, there is no branch ambiguity for $\{\alpha_l\}$. This choice of the phase range is related to the geometrical interpretation of the link couplings which we will describe below. 
The most symmetric solution \eqref{eq:sol_zz_symm} corresponds to $\lambda_l^{s(l)}=\lambda_l^{t(l)}=\f12$, or equivalently $\alpha_l = \sqrt{Y_l}, \alpha_{l'}=\sqrt{Y_{l'}}$. 

\medskip
\noindent {\bf Angle couplings with the ``geometric gauge'' at the stationary point.}
\medskip

Now we can transform from the link couplings $\{Y_l\}$ to the angle couplings $\{X_{ll'}\}$. Recall that there is no unique solution for $\{X_{ll'}\}$ from $\{Y_l\}$, but we can choose a family of parameters to fix $\{X_{ll'}\}$ so that it possesses a simple geometrical interpretation. 
Comparing the formula structure of \eqref{eq:norm_and_phase} and \eqref{eq:saddle_edge_norm}, 
we choose a {\it geometric gauge} (using the notations as in fig.\ref{fig:Y_l})
\be
\lambda_l^{s(l)}=\frac{\ln \frac{k_{ll_1}k_{ll_2}}{k_{l_1l_2}J_{s(l)}}}
{ \ln \frac{k_{ll_1}k_{ll_2}}{k_{l_1l_2}J_{s(l)}} 
+ \ln \frac{k_{l\tilde{l}_1}k_{l\tilde{l}_2}}{k_{\tilde{l}_1\tilde{l}_2}J_{t(l)}}}
\simeq \frac{\ln \left(\tan \frac{\phi_{s(l)}}{2}\right)}
{\ln\left( \tan \frac{\phi_{s(l)}}{2} \tan \frac{\phi_{t(l)}}{2}\right)}
\label{eq:lambda_for_closure}
\ee
and define
\be
|X_{ll'}|:=\sqrt{\frac{k_{ll'}}{J_n}}\simeq\sqrt{\tan \frac{\phi_{ll''}}{2} \tan \frac{\phi_{l'l''}}{2}} \,,\quad
l,l',l''\in n\,.
\label{eq:angle_couple_geometry}
\ee
$\phi_{ll''}$ and $\phi_{l'l''}$ are the remaining two internal angles in the triangle other than $\phi_{ll'}$ formed with $e$ and $e'$, as shown in fig.\ref{fig:triangle}.
Reversely, we can solve for these angles from the angle couplings,
\be
\phi_{ll'}\simeq 2 \tan^{-1} \frac{|X_{l'l''}||X_{ll''}|}{|X_{ll'}|}\,,\quad
\phi_{ll'}\in [0,\pi)
\,.
\ee
This solution corresponds to the scale factors
\be
\lambda_l^{s(l)}=\frac{\ln \frac{k_{ll_1}k_{ll_2}}{k_{l_1l_2}J_{s(l)}}}
{ \ln \frac{k_{ll_1}k_{ll_2}}{k_{l_1l_2}J_{s(l)}} 
+ \ln \frac{k_{l\tilde{l}_1}k_{l\tilde{l}_2}}{k_{\tilde{l}_1\tilde{l}_2}J_{t(l)}}}
\simeq \frac{\ln \left(\tan \frac{\phi_{s(l)}}{2}\right)}
{\ln\left( \tan \frac{\phi_{s(l)}}{2} \tan \frac{\phi_{t(l)}}{2}\right)}\,.
\label{eq:lambda_for_closure}
\ee
It indeed satisfies $\lambda_l^{s(l)}+\lambda_l^{t(l)}=1$.
\begin{figure}[h!]
\centering
\begin{tikzpicture}[scale=0.8,one end extended/.style={shorten >=-#1},one end extended/.default=1cm]
\coordinate (A) at (0,0);
\coordinate (B) at (5*4/4,0);
\coordinate (C) at (2*4/4,3.464*4/4);

\coordinate (c) at ($ (A)!.5!(B) $);
\coordinate (b) at ($ (A)!.5!(C) $);
\coordinate (a) at ($ (B)!.5!(C) $);

\path[name path = Aa] (A) -- (a);
\path[name path = Bb] (B) -- (b);
\path[name path = Cc] (C) -- (c);
\path [name intersections = {of = Aa and Bb,by=O}];

\coordinate (c) at ($ (A)!(O)!(B) $);
\coordinate (b) at ($ (A)!(O)!(C) $);
\coordinate (a) at ($ (B)!(O)!(C) $);

\draw[thick] (A) -- node[pos=0.6,below]{$e$}(B) -- node[pos=0.4,above right]{$e'$}(C) -- node[pos=0.6,left]{$e''$} cycle;
\draw[red,thick,one end extended] (O) -- (a) node[pos=1.9,left]{$l'$} ;
\draw[red,thick,one end extended] (O) -- (b) node[pos=1.8,above right]{$l''$} ;
\draw[red,thick,one end extended] (O) -- (c) node[pos=1.9,right]{$l$} ;

    \draw ([shift=(30:0.85)]A) node{$\phi_{ll''}$};

    \draw ([shift=(150:0.85)]B) node{$\phi_{ll'}$};

    \draw ([shift=(-80:0.65)]C) node{$\phi_{l'l''}$};
    
    \draw[red] (O) node{$\bullet$};

\end{tikzpicture}
\caption{Three links ({\it in red}) $l,l',l''$ incident to one node. The triangle ({\it in black}) dual to the node is bounded by edges $e, e', e''$. The length of the edge $e$ is given by the spin $j_l$, likewise for $e'$ and $e''$. The angle formed by the links $l,l'$ is $\phi_{ll'}$, likewise for $\phi_{l'l''}$ and $\phi_{ll''}$. Their relation with the norm of the angle coupling at the saddle point is given in \eqref{eq:angle_couple_geometry}.}
\label{fig:triangle}
\end{figure}
\begin{figure}
\centering
\begin{tikzpicture}[one end extended/.style={shorten >=-#1},one end extended/.default=1cm]
\coordinate (A) at (0,0,0);
\coordinate (B) at (5*4/4,0,0);
\coordinate (C) at (2*4/4,3.464*4/4,0);
\coordinate (D) at (5*4/5,3.464*4/5,-1);

\coordinate (c) at ($ (A)!.5!(B) $);
\coordinate (b) at ($ (A)!.5!(C) $);
\coordinate (a) at ($ (B)!.5!(C) $);

\path[name path = Aa] (A) -- (a);
\path[name path = Bb] (B) -- (b);
\path[name path = Cc] (C) -- (c);
\path [name intersections = {of = Aa and Bb,by=O}];

\coordinate (c) at ($ (A)!(O)!(B) $);
\coordinate (b) at ($ (A)!(O)!(C) $);
\coordinate (a) at ($ (B)!(O)!(C) $);

\coordinate (AB) at ($ (A)!.8!(B) $);
\coordinate (AC) at ($ (A)!.8!(C) $);
\coordinate (BC) at ($ (B)!.8!(C) $);

\draw[very thick] (A) -- node[pos=0.6,below]{$e_1$}(B) -- node[pos=0.4,above right]{$e_2$}(C) -- node[pos=0.6,left]{$e_3$} cycle;
\draw[gray] (C) -- (D);
\draw[gray] (B) -- (D);
\draw[gray,dashed] (A) -- (D);
\draw[red,thick,one end extended,decoration={markings,mark=at position 0.85 with {\arrow[scale=1.3,>=stealth]{>}}},postaction={decorate}] (O) -- (a) node[pos=1.6,below right] {$l_2$} node[pos=0.3,above]{$z_2$};
\draw[red,thick,one end extended,decoration={markings,mark=at position 0.85 with {\arrow[scale=1.3,>=stealth]{>}}},postaction={decorate}] (O) -- (b) node[pos=1.5,above right]{$l_3$} node[pos=0.3,below]{$z_3$};
\draw[red,thick,one end extended,decoration={markings,mark=at position 0.85 with {\arrow[scale=1.3,>=stealth]{>}}},postaction={decorate}] (O) -- (c) node[pos=1.9, right]{$l_1$} node[pos=0.3,right]{$z_1$};

\draw[->] ([shift=(-10:0.2cm)]AB) arc (-10:-250:0.2cm and 0.25cm) node[pos=0.4,below]{$\Theta_{l_1}+\pi$};
\draw[dashed] ([shift=(10:0.2cm)]AB) arc (10:30:0.2cm and 0.25cm);

\draw[->,rotate=20] ([shift=(-190:0.2cm)]BC) arc (-190:60:0.25cm and 0.2cm) node[pos=0.9,right]{$\Theta_{l_2}+\pi$};

\draw[->,rotate=-20] ([shift=(-270:0.2cm)]AC) arc (-270:-50:0.25cm and 0.2cm) node[pos=0.2,left]{$\Theta_{l_3}+\pi$};
\draw[dashed] ([shift=(-310:0.2cm)]AC) arc (-310:-335:0.25cm and 0.2cm);

\draw[red] (O) node{$\bullet$};

	\end{tikzpicture}
\caption{A triangle bounded by edges $e_1,e_2,e_3$ ({\it thick}) on the boundary of a tetrahedron. It is dual to a node with three outgoing links $l_1, l_2, l_3$ incident to it ({\it in red}). $l_1$ is associated with a spinor $|z_1\rangle$, $l_2$ with $|z_2\rangle$ and $l_3$ with $|z_3\rangle$ at their source. They form three angle couplings $[z_{l_1}|z_{l_2}\rangle=|X_{l_1l_2}|e^{i\Phi_{l_1l_2}}, [z_{l_2}|z_{l_3}\rangle=|X_{l_2l_3}|e^{i\Phi_{l_2l_3}}, [z_{l_3}|z_{l_1}\rangle=|X_{l_3l_1}|e^{i\Phi_{l_3l_1}}$. Each edge $e\in\{e_1,e_2,e_3\}$ is shared with another triangle in the tetrahedron, and they form an external dihedral angle $\Theta_l$. The relation between the phases of the angle couplings and the dihedral angles at the saddle point is given in \eqref{eq:angle_coupling_phase_geometry}.}
\label{fig:triangle_in_tetra}
\end{figure}

The norms of the angle couplings defined in \eqref{eq:angle_couple_geometry} satisfy the closure constraint, which corresponds to the fact that the three internal angles of the triangle dual to a node sum to $\pi$.
\be
 |X_{ll'}|^2+|X_{l'l''}|^2+|X_{ll''}|^2=1\,,\quad \Longleftrightarrow \quad 
\phi_{ll'}+\phi_{l'l''}+\phi_{ll''}=\pi\,,\quad
l,l',l''\in n\,.
\label{eq:angle_coupling_constraint}
\ee
Note that this solution is only valid for $|X_{ll'}|\leq 1$, which is indeed satisfied by the definition \eqref{eq:angle_couple_geometry} of $|X_{ll'}|$.

Given the scaling factors \eqref{eq:lambda_for_closure}, the solution to $\{\Phi_{ll'}\}$ are also set according to \eqref{eq:sol_zz_gen_norm_phase}, although they possess a more complicated expression hence more involved geometrical interpretation. With no loss of generality, we choose the orientation of the links $l$ and $l'$ to be outgoing from the node $n$, then the phase is 
\be
\Phi_{ll'}
\simeq \mod\left(  \frac{1}{2}\left(
\lambda_l^n \Theta_l 
 + \lambda_{l'}^n \Theta_{e'} 
 \right),\pi\right) \,,\quad
 \text{ with }
 \lambda_l^n= \frac{\ln \left( \tan \frac{\phi_{s(l)}}{2} \right)}{\ln \left( \tan \frac{\phi_{s(l)}}{2}\tan \frac{\phi_{t(l)}}{2} \right)} \,,\quad
 \lambda_{l'}^n= \frac{\ln \left( \tan \frac{\phi_{s(e')}}{2} \right)}{\ln \left( \tan \frac{\phi_{s(e')}}{2}\tan \frac{\phi_{t(e')}}{2} \right)} \,.
 \label{eq:angle_coupling_phase_geometry}
\ee

In summary, the angle couplings encode the conformal geometry of the tetrahedron, similar to the link couplings but in a more ``mixed'' fashion determined by the scaling $\{\lambda_l^n\}$. The norms $\{|X_{ll'}|\}$ give the information about the internal angles between edges, while the phases $\{\Phi_{ll'}\}$ give the information about the dihedral angles between triangles. 

To determine the shape of a tetrahedron, one merely needs the norm of the couplings or the phase of the couplings. For sake of simplicity, we will make use of the norms only.

\medskip
\noindent {\bf Classical and semi-classical correspondences of the SGF.}
\medskip

Now that we know the relation between the SGF and the $\{6j\}$-symbols both rigorously and asymptotically, we can also look into the classical and semi-classical correspondences of the SGF and their relation with that of the $\{6j\}$-symbol, $\ie$ the GHY boundary action and the Hartle-Sorkin boundary action in Regge calculus \cite{Hartle:1981cf}, respectively.

We first re-arrange the large $j$ limit of the SGF in terms of the $\{6j\}$-symbols. For each link $l$, we again denote the three links incident to the source $s(l)$ as $l,l_1,l_2$ and those incident to the target $t(l)$ as $l,\tilde{l}_1,\tilde{l}_2$, which is illustrated in fig.\ref{fig:Y_l}.  Then \eqref{eq:large_j_limit_SGF} can be written as
\be\begin{split}
\cS(\{Y_l\})
\sim &
\sum_{\epsilon=\pm }\sum_{\{j_l\}}
 \frac{e^{i\epsilon  \left( \frac{\pi}{4} + \f12 \Theta_l\right) }}{2\sqrt{12\pi V}}\\
&\exp\left[
\sum_{l=1}^6 j_l \left( \ln \sqrt{\frac{J_{s(l)} \left( J_{s(l)}-2j_{l} \right)}
{\left( J_{s(l)}-2j_{l_1} \right)\left( J_{s(l)}-2j_{l_2} \right)}
\frac{J_{t(l)}\left( J_{t(l)}-2j_{l} \right)}
{\left( J_{t(l)}-2j_{\tilde{l}_1} \right)\left( J_{t(l)}-2j_{\tilde{l}_2} \right)}}
 + \ln|Y_l|^2 + i\left(\epsilon \Theta_l + 2\Psi_l \right)
 \right)\right]\\
\simeq & \sum_{\epsilon=\pm }\sum_{\{j_l\}} 
\frac{e^{i\epsilon \left( \frac{\pi}{4} + \f12 \Theta_l\right) }}{2\sqrt{12\pi V}}\,
\prod_{l=1}^6 \exp\left(i \epsilon j_l \Theta_l \right)
\exp\left(-j_l \ln \frac{\tan \frac{\phi_{s(l)}}{2} \tan \frac{\phi_{t(l)}}{2}}{Y_l^2}\right)\,,
\end{split}
\ee
where we have used \eqref{eq:saddle_edge_norm} and the notation in fig.\ref{fig:edge_for_two_triangle} for the internal angles in the third line. 
Now we can approximate the summation of spins by integration from the minimal admissible spin to infinity. The volume term $\frac{1}{\sqrt{V}}$ scales as $\frac{\lambda^{-\f32}}{\sqrt{V_{\min}}}$ with the scale factor $\lambda\in [1,\infty)$, which is sub-dominant compared to the exponential terms thus can be moved out from the integrand and approximated by $\frac{1}{\sqrt{V_{\min}}}$. Therefore, the SGF can be approximately written as
\be\begin{split}
\cS(\{Y_l\})
\sim & \sum_{\epsilon=\pm }
\frac{e^{i\epsilon \left( \frac{\pi}{4} + \f12 \Theta_l\right)}}{2\sqrt{12\pi V_{\min}}}\,
\int_{\min j_l\rightarrow 0}^\infty \left(\prod_{l=1}^6  \rd j_l\right) \,
\exp\left(i \epsilon \sum_{l=1}^6 j_l \Theta_l \right)
\exp\left(-\sum_{l=1}^6 j_l \ln \frac{\tan \frac{\phi_{s(l)}}{2} \tan \frac{\phi_{t(l)}}{2}}{Y_l^2}\right) \\
=& \sum_{\epsilon=\pm }
\frac{e^{i\epsilon \left( \frac{\pi}{4} + \f12 \Theta_l\right) }}{2\sqrt{12\pi V_{\min}}}\,
\prod_{l=1}^6 \frac{1}{\ln\left( \tan \frac{\phi_{s(l)}}{2} \tan \frac{\phi_{t(l)}}{2} \right)- \ln |Y_l|^2 
+ i\left( \epsilon \Theta_l + 2 \Psi_l\right)}\,.
\end{split}
\label{eq:laplace_of_SGF}
\ee  
In doing the integration, we have viewed $j_l$'s independent of the dihedral angles $\Theta_l$'s and the internal angles $\phi_n$'s by neglecting the sub-leading contribution. In other words, the integration is over the scale factor. To obtain the second line, we have also taken the low spin limit $j_l=0$ for the lower bound and assumed that the domain of the link couplings are constrained by $|Y_l|^2 \leq \tan \frac{\phi_{s(l)}}{2} \tan \frac{\phi_{t(l)}}{2}$ in order to obtain a finite result. Therefore, the large $j$ limit, or the semi-classical correspondence, of the SGF can be seen as a Laplace transform of (the exponential of) the Regge action of a tetrahedron. It is also clear from \eqref{eq:laplace_of_SGF} that the critical couplings \eqref{eq:saddle_edge_norm} and \eqref{eq:saddle_edge_phase} do sit on the pole of the the SGF. 

The same logic follows in the continuum theory of the boundary - the classical correspondence of the SGF is related to a Laplace transform of (the exponential of) the GHY boundary action. The GHY boundary term is changed by a scale factor $\lambda = \exp(\Omega_0/2)$ under rescaling. Then one can single out the scaling factor and rewrite the GHY term as
\be
-2 \int_{\partial \cM} \rd^2 x \,\sqrt{h} K 
= -2\lambda \int_{\partial \cM}\rd^2 x \,\hat{K}
=-2\lambda \int_{\partial \cM}\rd^2 x \,\hat{h}^{\mu\nu}\nabla_{\mu} \hat{n}_{\nu} \,,
\ee
where $\hat{h}_{\mu\nu}$ is the normalized metric with determinant one. (The differential $\rd^2 x$ remains the same since we have been assigning no dimension on $x$'s. See Subsection \ref{sec:3d_gravity}.)
Formally denote a complex function $f(y,c_{\partial})$ of $y$ symboling the coupling and $c_{\partial}$ symboling the conformal geometry of $\partial \cM$, which has a positive real part $\Re\,f>0$. Then the continuum, or the classical, correspondence of \eqref{eq:laplace_of_SGF} is formally expressed as
\be
\cS^{\sl}(y, c_{\partial})=\int_0^\infty \rd \lambda \,\exp(-i \lambda \int_{\partial \cM} \rd^2x \, 2\hat{K}) \, \exp(-\lambda f(y,c_{\partial}))
= \frac{1}{\Re\,f + i (\int_{\partial\cM}\rd^2 x \,2\hat{K} + \I \,f)}\,.
\label{eq:classical_SGF}
\ee

We conclude the relation of the usual GHY boundary term with the scale-invariant boundary action and their discrete version and quantum correspondence as follows ($C$ denotes $\frac{e^{i\left( \frac{\pi}{4} + \f12 \Theta_l\right) }}{2\sqrt{12\pi V_{\min}}}$ and we only consider $\epsilon=1 $ for simplicity). 
\be\ba{ccccc}
-2 \int_{\partial \cM}\rd^2 x \,\sqrt{h}K  &
\xrightleftharpoons[\text{continuum}]{\text{discrete}} &
\sum_{l=1}^6 \ell_l \Theta_l &
\xrightleftharpoons[\text{large }j\text{ limit}]{\text{quantum}} &
6j\text{-symbol} \\
\rotatebox[origin=c]{270}{$\xrightarrow{\text{scale invariant part}}$} &&
\rotatebox[origin=c]{270}{$\xrightarrow{\text{Laplace transform}}$} &&
\rotatebox[origin=c]{270}{$\xrightarrow{\text{generating function}}$} \\
-2 \int_{\partial \cM}\rd^2x\,\hat{K} &
\xrightleftharpoons[\text{continuum}]{\text{discrete, Laplace transform}} &
C \,\prod_{l=1}^6 \ln \frac{Y_l^2}{\tan \frac{\phi_{s(l)}}{2} \tan \frac{\phi_{t(l)}}{2} e^{i \Theta_l}} &
\xrightleftharpoons[\text{large }j\text{ limit}]{\text{quantum}} &
\text{SGF}
\ea
\label{diag:6j_SGF}
\ee

In this subsection, we have analyzed the geometrical interpretation of the new vertex amplitudes given by the SGF using the stationary point analysis. Only the angle information of a tetrahedron is stored in the SGF either expressed in terms of the link couplings or the angle couplings.

\subsection{Propagator and geometric gluing}
\label{sec:glue}

We have shown that the SGF, as a scale-invariant counterpart of the $\{6j\}$-symbol, describes the local conformal geometry of each elementary 3D block in the Ponzano-Regge state-integral. Since the scales of the building blocks are not fixed, the gluing process becomes non-trivial compared to the original Ponzano-Regge state-sum model, where the gluing is performed by matching the full boundary geometry - both the shape and size of the glued triangles - of the blocks. This new way of gluing is described by this new edge amplitude
\be
\cA_{e^*}
=\,{}_0F_3(;2,2,\f12; \f{\left(\sum_{l\in n}  \langle z_l^{s(e^*)}|w_l^{t(e^*)}]\right)^2}{4}  ) 
\nn\ee
 To gain a global picture of the state-integral model, we here analyze the geometrical meaning of the edge amplitude.

The edge amplitude is also called the propagator and it describes how two neighbouring tetrahedra are glued. Note that the measure of the spinors $\rd \mu(z)\equiv \rd^4 z\, e^{-\langle z|z \rangle}$ includes a Gaussian weight, it is necessary to take it into account so that the norm $|P(\{z_l\},\{w_l\})|$ of the propagator remains finite (we have assumed that $\{z_l\}$ are from the tetrahedron dual to $s(e^*)$ and $\{w_l\}$ are from that dual to $t(e^*)$ and omit the superscript for simplicity). We rewrite the propagator as
\be 
P(\{z_l\},\{w_l\})=\,{}_0F_3(;2,2,\f12;\f{\z^2}{4}) e^{-\r^2}\,,\quad
 \z= \sum_{l\in n}\langle z_l|w_l]\,,\quad
 \r^2:=\sum_{l\in n}
\left(\langle z_l|z_l\rangle 
+ \langle w_l|w_l\rangle\right)\,. 
\label{eq:propagator}
\ee 
Similar to the vertex amplitude, we are interested in the stationary points of the propagator \eqref{eq:propagator}. 
We consider $w_{l,l',l''}$ with $ l,l',l''\in n$ to be the configurations of the propagator and obtain the stationary points by the vanishing derivative of $\bar{w}_{l,l',l''}$. For instance, the derivatives of $\bar{w}^{0,1}_l$ give
\begin{align}
\frac{\partial P(\{z_l\},\{w_l\})}{\partial \bar{w}_l^0}
&= \zb_l^1\,\sum_{k=0}^{\infty}
\frac{2k}{(k+1)!^2(2k)!}\z^{2k-1} e^{-\r^2} - w_l^0\, {}_0F_3(;2,2,\f12;\f{\z^2}{4}) e^{-\r^2} \nn \\
&=\zb_l^1\,  {}_0F_3(;3,3,\f32;\f{\z^2}{4}) \,\z \,e^{-\r^2} 
- w_l^0\,{}_0F_3(;2,2,\f12;\f{\z^2}{4})  e^{-\r^2} 
=0\,,\nn\\
\frac{\partial P(\{z_l\},\{w_l\})}{\partial \bar{w}_l^1}
&= -\zb_l^0\,\sum_{k=0}^{\infty}
\frac{2k}{(k+1)!^2(2k)!}\z^{2k-1} e^{-\r^2} - w_l^1\, {}_0F_3(;2,2,\f12;\f{\z^2}{4})  e^{-\r^2} \nn \\
&=-\zb_l^0\, {}_0F_3(;3,3,\f32;\f{\z^2}{4})\,\z\, e^{-\r^2} - w_l^1\,{}_0F_3(;2,2,\f12;\f{\z^2}{4})  e^{-\r^2}
=0\,.\nn
\end{align}
These saddle point formulas are also valid for spinors $w_{l'},w_{l''}$ thus there are totally six such formulas, which can be summarized as follows. Denote $A:= {}_0F_3(;3,3,\f32;\f{\z^2}{4})\,\z\, e^{-\r^2}$ and $B:={}_0F_3(;2,2,\f12;\f{\z^2}{4})  e^{-\r^2}$, then
\be\left|\ba{llll}
A |z_l] &+& B |w_l\rangle & = 0\\[0.12cm]
A |z_{l'}]&+& B |w_{l'}\rangle & = 0\\[0.12cm]
A |z_{l''}]& +&B |w_{l''}\rangle & = 0
\ea\right.
\quad \longrightarrow\quad
\left|\ba{lll}
B^2 [w_l|w_{l'}\rangle &=& A^2 \langle z_l|z_{l'}]\\[0.12cm]
B^2 [w_{l'}|w_{l''}\rangle &=& A^2 \langle z_{l'}|z_{l''}]\\[0.12cm]
B^2 [w_{l''}|w_{l}\rangle &=& A^2 \langle z_{l''}|z_{l}]
\ea\right..
\label{eq:Az_Bw}
\ee
The corresponding angle couplings from different tetrahedra have the same ratio $A^2/B^2$. 
Recall the solution \eqref{eq:angle_couple_geometry} of the angle couplings in the geometric gauge 
\be
|X_{ll'}|=\sqrt{\frac{k_{ll'}}{J_n}}=\sqrt{\tan \frac{\phi_{ll''}}{2} \tan \frac{\phi_{l'l''}}{2}} \,,\quad
l,l',l''\in n\,.
\nn\ee
The norms of the angle couplings encode the internal angles of the triangle they sit at, and they satisfy the closure constraint $|X_{ll'}|^2+|X_{l'l''}|^2+|X_{ll''}|^2=1$, thus
\be
|B|^4=|B|^4 \left(|[w_l|w_{l'}\rangle|^2 +|[w_{l'}|w_{l''}\rangle|^2 + |[w_{l''}|w_{l}\rangle|^2\right)
=
| A|^4 \left(|\langle z_l|z_{l'}]|^2 +|\langle z_{l'}|z_{l''}]|^2 + |\langle z_{l''}|z_{l}]|^2\right)=| A|^4\,.
\ee
Therefore, $|{}_0F_3(;3,3,\f32;\f{\z^2}{4})\,\z|$ and $|{}_0F_3(;2,2,\f12;\f{\z^2}{4})|$ must be the same at the stationary points. The solution is not unique, as can be immediately seen from the plot in fig.\ref{fig:hypergeo}, where the norm square of the two generalized hypergeometric functions is shown. 
\begin{figure}[h!]
	\centering
	\includegraphics[width=0.8\textwidth]{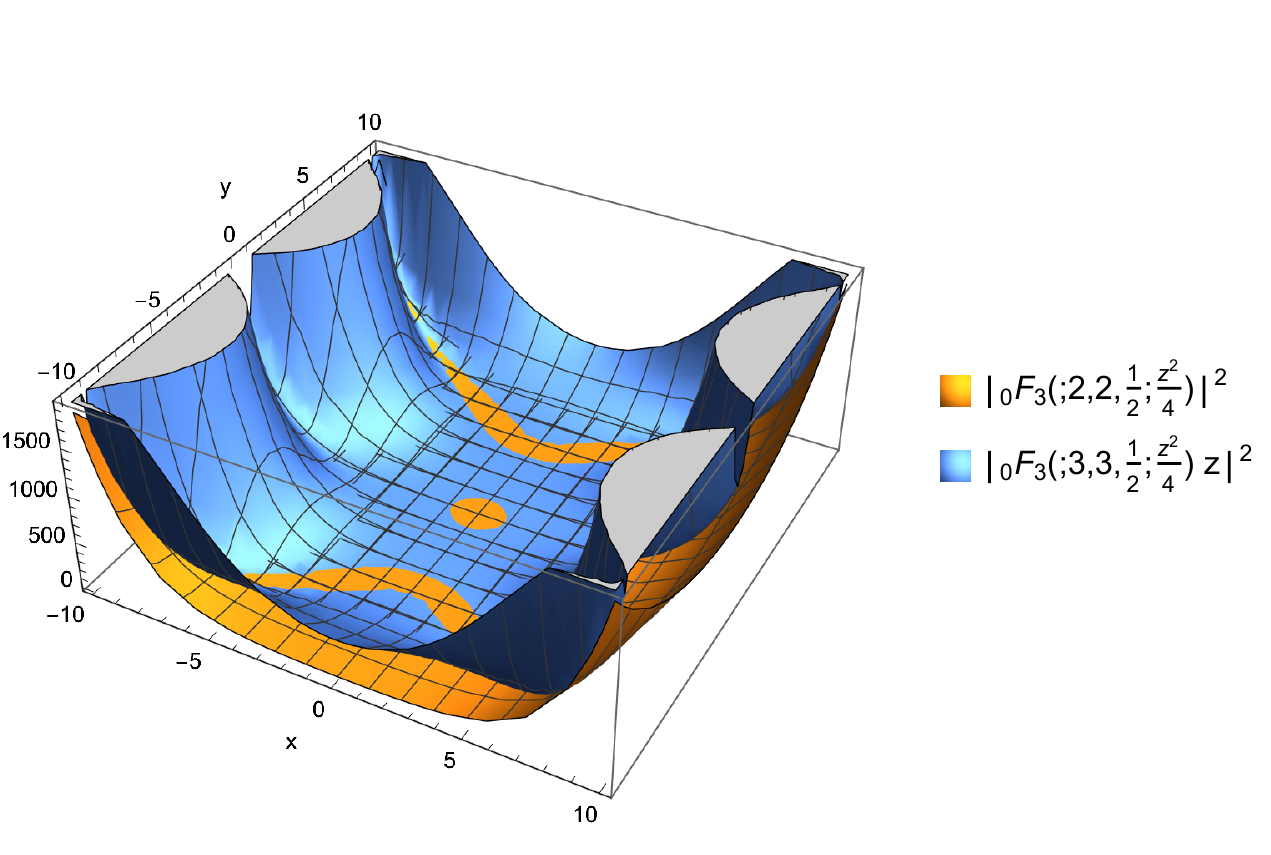}
	\caption{Norm square of the generalized hypergeometric function $|{}_0F_3(;3,3,\f32;\f{\z^2}{4})\,\z|^2$ and $|{}_0F_3(;2,2,\f12;\f{\z^2}{4})|^2$ as the function of $\z=x+iy$.}
	\label{fig:hypergeo}
\end{figure}
Plug this condition back into \eqref{eq:Az_Bw}, it is easy to conclude that the angle couplings $[w_l|w_{l'}\rangle$ and $[ z_l|z_{l'}\rangle$ are different only by a phase, similarly for $[w_{l'}|w_{l''}\rangle$ and $[z_{l'}|z_{l''}\rangle$, $[w_{l''}|w_{l}\rangle$ and $[z_{l''}|z_{l}\rangle$, and the phase differences are the same. 
More strictly, $|z_l]$ and $|w_l\rangle$ are different by a phase, the same valid for links $l'$ and $l''$, thus
\be
\langle z_l | z_l \rangle = \langle w_l | w_l \rangle\,,\quad
\langle z_{l'} | z_{l'} \rangle = \langle w_{l'} | w_{l'} \rangle\,,\quad
\langle z_{l''} | z_{l''} \rangle = \langle w_{l''} | w_{l''} \rangle\,.
\label{eq:same_norm}
\ee
Geometrically, the stationary points of the propagator enforce that the two triangles to be glued have the same shape. 
Let us emphasize that this interpretation solely results from our choice of the geometric gauge solution \eqref{eq:angle_couple_geometry} of the angle couplings in terms of the link couplings. A general solution \eqref{eq:sol_zz_gen_norm_phase} does not lead to $|A|=|B|$, thus it would not guarantee that the shape of the triangles to be glued share the same shape. 
\begin{figure}[h!]
\centering
	\begin{tikzpicture}
\coordinate (A) at (0,0);
\coordinate (B) at (0,-4);
\coordinate (C) at ([shift=(-150:2)]A);
\coordinate (D) at (-1.5,-0.1);
\def\s{0.6};

\coordinate (a) at (4,-1);
\coordinate (b) at (4,-1-\s*4);
\coordinate (c) at ([shift=(-150:2*\s)]a);
\coordinate (d) at (5,-2.2);

\draw[very thick] (A) -- node[midway,left]{$\lambda j_3$} (B);
\draw[very thick] (B) -- node[midway,left]{$\lambda j_2$} (C);
\draw[very thick] (C) -- node[midway,above]{$\lambda j_1$} (A);
\draw[gray] (A) -- (D);
\draw[gray] (C) -- (D);
\draw[gray, dashed] (B) -- (D);

\draw[dashed, very thick] (a) -- node[midway,left]{$j_3$} (b);
\draw[very thick] (b) -- node[midway,left]{$j_2$} (c);
\draw[very thick] (c) -- node[midway,above]{$j_1$} (a);
\draw[gray] (a) -- (d);
\draw[gray] (c) -- (d);
\draw[gray] (b) -- (d);

\coordinate (l) at ([shift=(180:2.5)]c);
\coordinate (r) at ([shift=(180:0.5)]c);

\draw[<->,thick]  (l) to [out=45,in=135]  (r);
	\end{tikzpicture}
\caption{Stationary point of the propagator geometrical means the two triangles from adjacent tetrahedra to be glued (in thick) only need to share the same shape but can have different sizes. }
\end{figure}
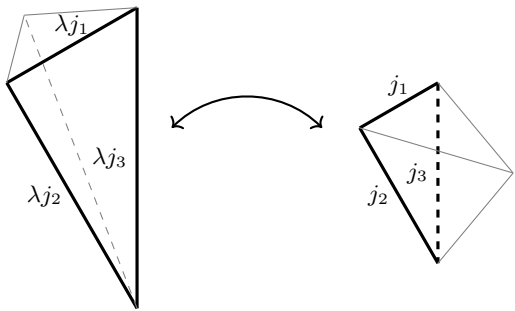

In summary, the Ponzano-Regge state-integral model as constructed above encodes only the conformal geometry of the triangulation blocks either in the vertex amplitudes or the edge amplitudes. A vertex amplitude $\cA_{v^*}$ describes the shape of the tetrahedron dual to $v^*$; an edge amplitude $\cA_{e^*}$ describes that two adjacent tetrahedra can be glued together by identifying the shape of two triangles, each from one of the tetrahedra. When there exists a boundary $\partial \cM$ of the manifold, the total amplitude built in this way encodes also the conformal geometry on the triangulation of $\partial \cM$ since the boundary structure is described by the scaleless spin network state $\psi^{\sl}_\Gamma$.

\subsection{Ponzano-Regge state-integral versus state-sum models}

Let us recall the original spinfoam state-sum and the new spinfoam state-integral expression, 
\be\begin{split}
\cA[{\bf T},\psi_\Gamma ]
&= \sum_{\{j_{f^*}\}}\, \prod_{f^*}d_{j_{f^*}}\, 
\prod_{e^*}(-1)^{\sum_{i=1}^3 j_i}\,
\prod_{v^*} 
\Mat{ccc}{j_1 & j_2 & j_3 \\ j_4 & j_5 & j_6}_{v^*}\\
&=\int\left[\rd \mu(z)\right] \, 
\prod_{f^*}\left(\langle z^{f^*}|z^{f^*}\rangle-1 \right)\,
\prod_{e^*} \,{}_0F_3(;2,2,\f12; \f{\left(\sum_{l\in n}  \langle z_l^{s(e^*)}|w_l^{t(e^*)}]\right)^2}{4}  )\,
\prod_{v^*}\cS^{\sl}_{v^*}(\{z_l^{v^*},\zt_l^{v^*}\})\,.
\end{split}
\label{eq:two_amplitudes}
\ee
In the state-sum expression, when no boundary is present, the edge amplitude can be absorbed into the vertex amplitude, while the edge amplitude is always explicit in the state-integral expression. 
This leads to two different ways to understand spinfoam. 
In the former model, after choosing a triangulation ${\bf T}$ of $\cM$, we first associate the representation data (spins) on the one-skeleton of ${\bf T}$ then construct the vertex and face amplitudes with these representation data. In the latter model, in contrast, we first isolate all the elementary blocks, $\ie$ tetrahedra, after the triangulation then associate the representation data (spinors) to the boundary of each isolated block, followed by constructing vertex amplitude for each isolated block, edge amplitude through gluing these elementary blocks and finally face amplitude for each edge after gluing.

One may also absorb the edge amplitude into the vertex amplitude for the state-integral by integrating out, with no loss of generality, the spinors $\{w_l^{t(e^*)}\}$ from the target tetrahedron in each gluing. Unfortunately, we have not found a close form for this expression. Furthermore, since the vertex amplitude is no longer trivial, it would be changed after this absorption, which potentially changes the geometrical interpretation. 

On the other hand, leaving the edge amplitude un-absorbed allows us to separate the data from different blocks so that the saddle point analysis can be done for each vertex amplitude independently. Moreover, the saddle point analysis on the vertex amplitudes is compatible with that on the edge amplitudes.
The saddle point of each vertex amplitude \eqref{eq:saddle_edge_norm} relates the spinor configuration to spins within a single block, while the saddle point of the edge amplitudes \eqref{eq:same_norm} relates the spinors from different blocks. 
These saddle points can be obtained simultaneously and the result effectively relates the (ratios of) spins from different blocks (with spinors as the mediums). 

Geometrically, the gluing condition in the state-integral model is looser compared to that in the state-sum model, since the former only requires that the triangles to be glued have the same shape while the latter restricts that the triangles should be of the same shape and size. At first glance, it seems the state-integral allows more configurations and should produce a different total amplitude. However, the size un-matched configurations can not survive under the spinor integration, thus the total amplitude comes only from the size-wise and shape-wise matched configurations, same as the case of the state-sum model. We have used the same property of the spinor integration in constructing the state-integral to move the contour integral from the vertex amplitude \eqref{eq:vertex_amplitude_to_SGF} to the edge amplitude \eqref{eq:new_ldge_w_contour}. 

Another difference between the state-sum and state-integral models is the source of the divergence in the expressions. In the state-sum model, the vertex amplitude damps as $j^{-3/2}$. The divergence comes from the infinite sum of the spin labels and the total amplitude diverges as $\sqrt{j}$. In the state-integral model, in contrast, the integration does not lead to divergence thanks to the Gaussian weight while the vertex amplitudes give divergence since there are poles in the vertex amplitudes. This is because the vertex amplitude, as a generating function of the $\{6j\}$-symbols, contains the summation of spins, thus the divergence can be viewed as from the large spin contribution. To see that it is the case, we Taylor expand the vertex amplitude and look at the pole,
\be
\cA_{v^*}=\frac{1}{(1+x)^2}= \sum_{j\in \N/2} (2j+1) (-x)^{2j} 
\,\xrightarrow{x=-1}\, 
\sum_{j\in\N/2} (2j+1)\,,
\nn\ee 
where $x$ denotes the cycle-sums as given in \eqref{eq:evaluate_SGF}. This illustrates that the divergence of the vertex amplitude in the state-integral model is also given by large spins, which is consistent with the state-sum model.

\section*{Outlook}

We have introduced a new framework of the spinfoam model for 3D quantum gravity based on the spinor representation of $\SU(2)$. The continuum nature of spinor variables allows us to represent the spinfoam as a state-integral, rather than a state-sum in the original Ponzano-Regge model where the spin representation of $\SU(2)$ was used. The integral expression would probably make the computation of $\eg$ correlations and transition amplitudes more controllable. 
More importantly, the state-integral framework inherits the scale-invariant nature of pure gravity in 3D.  It describes a quantum gravity model with scale-invariant boundary quantum geometry, which can be seen as an integration over the conformal classes of boundary geometry. 
We expect that this framework would serve as a better starting point to study the coarse-graining or renormalization behaviour of 3D quantum gravity, and would be useful to investigate the quasi-local CFT/gravity duality. 

Apart from these possibly exciting applications, we list the following directions to better understand this new formalism and possible generalization:
\begin{itemize}
	\item {\it The geometrical interpretation of spinors.}
In order to understand the geometrical interpretation of the newly constructed spinfoam amplitude, it is enough to unravel the geometrical meaning of the holomorphic inner product of spinors, $\ie$ the angle couplings. It would be interesting to further understand the geometrical interpretation of the spinor itself. Bearing in mind the relation of the spinors and flux vectors, most importantly the equivalence of the spinor norm and the norm of the flux vector it describes (see \eqref{eq:flux_and_spinor}), we expect that the spinors encode the length/scale information, which is washed out by constructing the holomorphic inner product. 
	\item {\it The Wheeler de-Witt equations of the SGF.}
Recall that in the original Ponzano-Regge state-sum model, the invariance of the amplitude under Pachner moves is guaranteed by the Biedenharn-Elliott identity of the $\{6j\}$-symbols, which is closely related to the recursion relations of the $\{6j\}$-symbols. These recursion relations can be viewed as generated by the Wheeler de-Witt equations of the spin network states. It would be interesting to explore whether the invariance of the newly constructed amplitude formalism under Pachner moves can be directly proven by (possibly the deformation of) the Wheeler-de Witt equations of the SGF. If possible, it would provide a new way that is similar to the original state-sum model to see the topological invariance of the new spinfoam formalism, directly from the scale-invariant factorization of the amplitude, $\ie$ the second line of \eqref{eq:two_amplitudes}. As an intermediate step, it may be useful to rewrite the Wheeler de-Witt equation in terms of the link couplings so that variables are more decoupled. Similar to the differential equations found in \cite{Bonzom:2011nv}, one is able to find four such equations, each associated to one triangle on the boundary of the tetrahedron and only three are independent. An example is (referring to the notation in fig.\ref{fig:spinnetwork})
\be\begin{split}
&3 \left[
  Y_1\left(\frac{\partial \cS}{\partial Y_2 }\frac{\partial \cS}{\partial Y_6 }
- \frac{\partial \cS}{\partial Y_3 }\frac{\partial \cS}{\partial Y_5 }\right)
+ Y_2\left(\frac{\partial \cS}{\partial Y_3 }\frac{\partial \cS}{\partial Y_4 }
- \frac{\partial \cS}{\partial Y_1 }\frac{\partial \cS}{\partial Y_6 }\right)
+ Y_3\left(\frac{\partial \cS}{\partial Y_1 }\frac{\partial \cS}{\partial Y_5 }
- \frac{\partial \cS}{\partial Y_2 }\frac{\partial \cS}{\partial Y_4 }\right)
\right]\\
-&2 \cS \left[
  Y_1\left(\frac{\partial^2 \cS}{\partial Y_2 \partial Y_6 }
- \frac{\partial^2 \cS}{\partial Y_3 \partial Y_5 }\right)
+   Y_2\left(\frac{\partial^2 \cS}{\partial Y_3 \partial Y_4 }
- \frac{\partial^2 \cS}{\partial Y_1 \partial Y_6 }\right)
+  Y_3\left(\frac{\partial^2 \cS}{\partial Y_1 \partial Y_5 }
- \frac{\partial^2 \cS}{\partial Y_2 \partial Y_4 }\right)
\right]
=0\,,
\end{split}
\label{eq:diff_link_SGF}
\ee
where $\cS$ is the SGF and $Y_i$ the link coupling of link $l_i$.
On the other hand, as the SGF not only describes a flat tetrahedron but a scale-invariant one, it should be possible to determine a different differential equation that generates the scale-invariance of the SGF. The equation to be found describes the dilatation behaviour of the SGF, thus playing a dynamic role and can also be viewed as the Wheeler de-Witt equation. The flow of the differential equation should generate the symmetry of the link couplings \eqref{eq:angle_to_edge} in terms of spinors, namely the $\SL(2,\bC)$ symmetry and the ``anti-rescaling'' symmetry \eqref{eq:symm_SGF}. 
	\item {\it Spinfoam model for conformal quantum gravity. }
The building blocks of the new spinfoam are scale-invariant, thus almost but not yet conformal invariant. It would be interesting to push forward to construct a spinfoam with conformal blocks, whose boundary states are conformal invariant. Unfortunately, we have no clue whether it can be constructed with the spinor representation or what structures of the vertex amplitude we should expect. It may be more reliable to start the journey from the discretization of conformal gravity action and construct the partition function applying the action principle \cite{Freidel:1998pt}. Such a spinfoam model, if exists, would be a more suitable framework to study the CFT/gravity duality at the discrete level.

	\item {\it Group field theory for 3D quantum gravity with spinors. }
It has been well-known that the Ponzano-Regge models can be generated by a group field theory (GFT) with $\SU(2)$ group and the amplitude is interpreted as some Feynman graph evaluation \cite{Boulatov:1992vp}. GFT reformulation with spinors of the Ponzano-Regge model can capture the full $\SU(2)$ structure rather than $\SO(3)$ when using the flux and holonomy variables \cite{Freidel:2005bb,Freidel:2005me}. An attempt was made in \cite{Dupuis:2011fx}
where the amplitude is not yet separated into local blocks. In order to build a similar connection of the vertex amplitude with the interaction of the GFT and of the edge amplitude with the propagator of the GFT, a better starting point is a spinfoam formulation with local amplitudes, thus the result of Proposition \ref{prop:state_integral_2}. We expect that this GFT, if found, also has a scale-invariant nature, thus it may be better suited to study the renormalization properties in quantum gravity. 

	\item {\it Generalization to a 4D spinfoam. }
The spinfoam model built with spinors has been applied to 4D BF theory \cite{Dupuis:2011dh,Banburski:2014cwa,Chen:2016aag}. However there remains spin dependence or $\SU(2)$ holonomy dependence of the local amplitude in these existing models. It would be interesting to discover a 4D spinfoam with local amplitudes given only by the spinor variables and explore what these local amplitudes represent geometrically. A natural guess is that the vertex amplitude could be constructed by the generating function of the $15j$-symbols \cite{Huang:1974gs}, which can also be written into a closed form with a loop structure as the SGF. 

	\item {\it Including the cosmological constant. }
The spinor variables can be deformed to describe the loop gravity with a negative cosmological constant \cite{Dupuis:2014fya,Bonzom:2022bpv}, which recovers the loop gravity framework described by deformed holonomy and flux variables \cite{Bonzom:2014wva} (see also \cite{Bonzom:2014bua} for its quantization). It would be interesting to see if the deformed spinors can be used to construct the spinfoam that recovers the Turaev-Viro model \cite{Turaev:1992hq}, whose building blocks are the $q$-deformed $\{6j\}$-symbols. One of the first things to do is to construct the ``$q$-deformed scaleless spin network statl'' on the hyperbolic geometry. We expect that the spinorial framework of LQG and the spinfoam model can be generalized to a $q$-deformed version with the use of quantum groups and describe the quantum gravity with a non-vanishing cosmological constant. 
\end{itemize}

\section*{Acknowledgement}

This work is part of the project initiated in the Perimeter Quantum Gravity workshop that happened in the winter of 2018. The authors thank the organizers of the workshop. 
Research at Perimeter Institute is supported by the Government of Canada through the Department of Innovation, Science and Economic Development Canada and by the Province of Ontario through the Ministry of Research, Innovation and Science.
QP was supported by an NSERC Discovery grant awarded by Maïté Dupuis during this work.
%

\appendix

\renewcommand\thesection{\Alph{section}}

\section{Spinorial phase space for loop gravity}
\label{app:spinorial}
In this appendix, we give a quick review of the spinorial phase space of loop gravity and its quantization to construct the Hilbert space of LQG. 
Let us introduce the spinor variables
\be
\kz:=\left(
\ba{c}
z^0\\z^1
\ea
 \right)\,,\quad
\bz:=\left(
\ba{cc}
\zb^0, & \zb^1
\ea
 \right)\,,\quad
z^0,z^1\in\bC \,.
\label{eq:spinor}
\ee
We also introduce the dual spinor by a dual map $\varsigma$
\be
\kzd\equiv |\varsigma z \rangle := \mat{cc}{0&-1\\1&0}\kzb
= \left(
\ba{c}
-\zb^1\\ \zb^0
\ea
 \right)\,,\quad
\bzd\equiv \langle \varsigma z|:= 
\bzb \mat{cc}{0&1\\-1&0} =
\left(
\ba{cc}
-z^1, & z^0
\ea
 \right)\,
 \label{eq:dual_spinor}.
\ee
A spinor transforms covariantly under the $\SU(2)$ action $\kz\rightarrow g\kz$ in the fundamental representation of $\SU(2)$. Thus the $\SU(2)$, hence $\SL(2,\bC)$ by complexification, invariant objects can be naturally formed by the inner product of two spinors 
\be
\left<w|z\right>=\left[z|w\right]=\sum_A \bar{w}^Az^A\,,\quad
\left[w|z\right>=-w^1 z^0 + w^0 z^1\,, \quad
\left<w|z\right] = -\bar{w}^0\zb^1+\bar{w}^1\zb^0\,.
\ee
$|z]$ is dual to $|z\rangle$ in the sense that they are orthogonal via the inner product, $\ie$ $[z|z\rangle=0$. 

Consider a pair of spinors $\{|z\rangle,|\zt\rangle\}$, whose components are provided with the Poisson brackets
\be
\{z^A,\zb^B\}=i\delta^{AB}\,,\quad
\{\zt^A,\ztb^B\}=-i\delta^{AB}\,,\quad
\{z^A,\zt^B\}=\{z^A,\ztb^B\}=\{\zb^A,\zt^B\}=\{\zb^A,\ztb^B\}=0\,,\quad
\,,A,B=0,1\,.
\label{eq:poisson_spinor}
\ee
We introduce the Hermitian matrices ${\bf X}:=|z\rangle\langle z|$ and ${\bf \tX}:=|\zt\rangle\langle\zt|$, then project them onto the identity and the Pauli matrices
\be\ba{ccc}
|z\rangle \langle z| = |\vec{X}| \,\id - \vec{X}\cdot \vec{\sigma} \,\quad
\text{with }\, &|\vec{X}| \equiv \f12 \langle z|z\rangle\,, \quad
\vec{X} &\equiv -\f12\langle z|\vec{\sigma}|z\rangle \in \R^3\,,\\[0.15cm]
|\zt\rangle\langle \zt| = |\vec{\tX}| \,\id - \vec{\tX}\cdot \vec{\sigma} \,\quad
\text{with }\, &|\vec{\tX}| \equiv \f12\langle\zt|\zt\rangle\,, \quad
\vec{\tX} &\equiv -\f12\langle\zt|\vec{\sigma}|\zt\rangle\in \R^3 \,,
\ea
\label{eq:flux_and_spinor}
\ee
where $|\vec{X}|$ ({\it resp. }$|\vec{\tX}|$) is the norm of the vector $\vec{X}$ ({\it resp. }$\vec{\tX}$). 
We relate the non-tilde spinor variables and the tilde spinor variables by an $\SU(2)$ action
\be
g|z\rangle=|\zt\rangle\,,\quad
g|z]=|\zt] \,,\quad
g\in\SU(2)\,.
\label{eq:SU2_action}
\ee
Thus ${\bf X}$ and ${\bf \tX}$ are related by an $\SU(2)$ adjoint action: ${\bf \tX}=g{\bf X}g^{-1}$.
\eqref{eq:SU2_action} determines the $\SU(2)$ group element $g$ uniquely to be
\be
g=\frac{|\zt\rangle\langle z| + |\zt][z|}{\sqrt{\langle z|z\rangle\langle \zt|\zt\rangle}}
\,.
\ee 
It also implies that the spinors $|z\rangle$ and $|\zt\rangle$ satisfy a norm-matching constraint
\be
\cN:=\langle z|z\rangle  - \langle \zt|\zt\rangle\,,
\label{eq:norm_matching}
\ee
which generates a $\bU(1)$ transformation on spinors $\{\cN,\kz\}=-i\kz$, $\{\cN,|\zt\rangle\}=-i|\zt\rangle$. The finite gauge transformation reads 
\be
\kz \xrightarrow{\bU(1)} e^{i\theta}\kz\,,\quad
|\zt\rangle \xrightarrow{\bU(1)} e^{i\theta}\, |\zt\rangle\,,\quad
\bz \xrightarrow{\bU(1)} e^{-i\theta}\bz\,,\quad
\langle\zt| \xrightarrow{\bU(1)} e^{-i\theta}\langle\zt|\,.
\label{eq:U1_transf}
\ee

The Poisson brackets \eqref{eq:poisson_spinor} can thus be rewritten with the new variables $(\vec{X},g)$ (or equivalently $(\vec{\tX},g)$) as
\be\ba{lll}
\{X^i,g\}=\frac{i}{2}g \sigma^i \,,&
\{X^i,X^j\}=\epsilon^{ijk}X^k\,,&
\{g,g\}\stackrel{\cN=0}{\sim}0\,,\\[0.15cm]
\{\tX^i,g\}=\frac{i}{2}\sigma^i g\,,&
\{\tX^i,\tX^j\}=-\epsilon^{ijk}\tX^k\,,&
\{X^i,\tX^j\}=0\,,
\ea\ee
which are the Poisson structure of the loop gravity phase space with vanishing cosmological constant, $\ie$ the $T^*\SU(2)$ phase space. $\vec{X}$ (or $\vec{\tX}$) is called the flux and $g$ the holonomy. Note that $({\bf X},g)$ are invariant under the $\bU(1)$ transformation \eqref{eq:U1_transf}, thus the loop gravity phase space can be reconstructed completely as a symplectic reduction of the spinor space $\bC^2\times \bC^2$:
\be
T^*\SU(2)\backslash \{|\vec{X}|=0\}= \bC^2\times \bC^2 \backslash \{\left<z|z\right>=0,\left<\zt|\zt\right>=0\}//\bU(1)\,.
\ee

Consider an oriented graph $\Gamma$ with $|L|$ links and $|N|$ nodes. 
For each link $l$, we assign a non-tilde spinors $|z_l\rangle$ to its source node $s(l)$ and a tilde spinors $|\zt_l\rangle$ to its target node $t(l)$. They are related by an $\SU(2)$ action thus satisfying the norm matching constraint
\be
g_l |z_l\rangle=|\zt_l\rangle\,,\quad
\langle z_l|z_l\rangle=\langle \zt_l|\zt_l\rangle\,.
\ee
$g_l$'s can be viewed as an assignment to the links. 
The kinematical phase space of $\Gamma$ is defined as the collection of $(z_l,\zt_l)\in \bC^2\times\bC^2$ for each link imposing the closure constraints $\vec{\cC}_v$ for each node:
\be
\vec{\cC}_n=\sum_{l|n=s(l)}\left<z_l|\vec{\sigma}|z_l\right>
-
\sum_{l|n=t(l)}\langle\zt_l|\vec{\sigma}|\zt_l\rangle\,.
\ee

The phase space defined with spinors allow us to have a $\bU(N)$ reformulation of LQG after quantization \cite{Freidel:2009ck,Freidel:2010tt,Borja:2010rc}. The essential idea is to change the building blocks from degrees of freedom on links (which are holonomies or fluxes) to those on nodes (which are spinors). For each ($N$-valent) node $n$, we define an $N\times N$ asymmetric matrix ${\bf F}$ and an $N\times N$ symmetric matrix ${\bf E}$ with complex entries representing the correlation of spinors associated to the different half-links incident to the same node.  (we will use the Latin indices $a,b,c,d$ in the subscript to denote the legs of the node):
\be\ba{ll}
E_{ab}= \langle z_a|z_b\rangle\,,\quad 
& E_{ba}=\overline{E}_{ab} \,,\\[0.15cm]
F_{ab}= [z_a|z_b\rangle\,,\quad
& F_{ba}=-F_{ab}\,,\\[0.15cm]
\overline{F}_{ab}=\langle z_b|z_a] \,,\quad 
& \overline{F}_{ba}=-\overline{F}_{ab}\,,
\ea
\label{eq:EF}
\ee
where the bar denotes the complex conjugate. 
These $\SU(2)$ observables form a closed algebra. With no loss of generality, consider that all the links incident to the node $n$ are outgoing (hence only non-tilde spinors are attached to $v$), then the Poisson brackets of the components read 
\be\begin{split}
\{E_{ab},E_{cd}\}&=i(\delta_{ad}E_{cb} - \delta_{bc}E_{ad})\,,\\
\{E_{ab},F_{cd}\}&=i(\delta_{ad}F_{bc}-\delta_{ac}F_{bd} )\,,\\
\{E_{ab},\overline{F}_{cd}\}&=i(\delta_{bd}\overline{F}_{ac} -\delta_{bc}\overline{F}_{ad})\,,\\
\{F_{ab},F_{cd}\}&=0\,,\quad
\{\overline{F}_{ab},\overline{F}_{cd}\}=0\,,\\
\{F_{ab},\overline{F}_{cd}\}&= i ( \delta_{ac}E_{db} - \delta_{ad}E_{cb} + \delta_{bd}E_{ca} - \delta_{bc}E_{da} )\,.
\end{split}
\label{eq:Poisson_EF}
\ee
It can be seen from the first Poisson bracket that the components of the matrix ${\bf E}$ form a $\u(N)$ algebra. The full algebra \eqref{eq:Poisson_EF} is called the $\so^*(2N)$ \cite{Girelli:2017dbk}.

Upon quantization, the spinors become the annihilation operators $z^A\rightarrow a^A$ and the creation operator $\zb^A \rightarrow a^{A\dagger}$ such that satisfy the commutator 
\be
[a^A,a^{B\dagger}]=\delta^{AB}\id \,,\quad
[a^A,a^B]=[a^{A\dagger},a^{B\dagger}]=0\,.
\ee
This being said the phase space for an $N$-valent node is quantized to be a set of $2N$ harmonic oscillators. 
We also quantize the observable matrix ${\bf E}$, ${\bf F}$ and ${\bf \overline{F}}$ in the following way.
\be\ba{lll}
E_{bc}=\langle z_b|z_c \rangle  & \longrightarrow & \hat{E}_{bc}=a_b^{A\dagger}a_{c}^A\,,\\[0.15cm]
F_{bc}=[ z_b|z_c\rangle & \longrightarrow & \hat{F}_{bc}=\epsilon^{AB}a_b^Aa_c^B\,,\\[0.15cm]
\overline{F}_{bc}=\langle z_c|z_b ] & \longrightarrow & \hat{F}_{bc}^\dagger = \epsilon^{AB} a_b^{A\dagger}a_c^{B\dagger}\,.
\ea\ee
We have used the normal ordering when necessary. The commutators naturally inherit from the Poisson brackets \eqref{eq:Poisson_EF}. 
It is natural to obtain the Fock states $|n^0,n^1\rangle_{\HO}$ which diagonalize the occupation number operator $N^A:=a^{A\dagger}a^{A}$
\be
N^A|n^0,n^1\rangle_{\HO} = n^A |n^0,n^1\rangle_\HO\,.
\ee
This basis is equivalent to the magnetic number basis $|j,m\rangle\in \cV^j$ with the relation between the eigenvalue as
\be
j=\f12 (n^0 + n^1)\,,\quad
m=\f12 (n^0 - n^1)\,.
\ee
The action of $(a^A, a^{B\dagger})$ on the magnetic number basis allows the jumping between different spin representations. Explicitly,
\be
a^A|j,m\rangle = \sqrt{j+(-1)^{A}m}\, |j-\f12, m-\f12+A\rangle\,,\quad
a^{A\dagger}|j,m\rangle = \sqrt{j+(-1)^{A}m+1}\,|j+\f12,m+\f12-A\rangle\,.
\label{eq:a_on_jm}
\ee
By the definition of the coherent state basis through the magnetic number basis \eqref{eq:jz_basis}, it is easy to find that $a^A$ acts on $|j,z\rangle$ as a multiplication operator while $a^{A\dagger}$ acts as a derivative operator on $|j,z\rangle$ as given in \eqref{eq:a_on_jz}.

The LQG Hilbert space $L_2(\SU(2), \rd g)^{|L|}//\SU(2)^{|N|}$ is standardly understood as spanned by the spin network states labelled by spins. The quantization of the spinorial phase space allows us to span the same Hilbert space by the coherent or scaleless spin network states labelled by spinors. To this end, we also need to introduce a Haar measure. It is given by the Haar measure $\rd \mu(z)$ of the Bargmann space $\cF_2=L_2^{\rm hol}(\bC^2,\rd\mu)$, the space of holomorphic squared integrable functions, over spinor \cite{Livine:2011gp}:
\be
\rd\mu(z):=\frac{1}{\pi^2}e^{-\left<z|z\right>}\rd z^0 \rd z^1\,.
\ee
It is a measure invariant under the $\SU(2)$ transformation $\rd\mu(g z)=\rd\mu(z),\,\forall\, g\in \SU(2)$. 
The space $\cF_2$ can be decomposed into the direct sum of $(2j+1)$-dimensional subspace: $\cF_2=\oplus_{j\in \N/2}\cV^j$, with the orthonormal basis of each spin $j$ subspace given by 
\be
e_m^j(z):=\frac{(z^0)^{j+m}(z^1)^{j-m}}{\sqrt{(j+m)!(j-m)!}}\,.
\ee
The Hilbert space of one link is thus equivalently given by $\cH_l=\cF_2\times\cF_2//\bU(1)$. Readers can find more details for the Bargmann space reconstruction of the Hilbert space in \cite{Livine:2011gp}. By taking the closure constraint for nodes into consideration, we conclude that the Hilbert space spanned by the coherent or scaleless spin network states can be represented as
$L^{\rm hol}_2(\bC^2,\rd \mu)^{2|L|}//(\SU(2)^{|N|}\times\bU(1)^{|L|})$. 

\section{Wheeler-de Witt equation of the SGF}
\label{app:WdW_SGF}
In this appendix, we review the Wheeler-de Witt equations of the physical state of the tetrahedron graph, which is the $\{6j\}$-symbol in the spin network basis and the SGF in the scaleless spin network basis. Detailed analysis can be found in \cite{Bonzom:2011hm,Bonzom:2011nv}.

Throughout this appendix, we refer to the tetrahedron graph in fig.\ref{fig:spinnetwork}. Consider the Hamiltonian
\be
\hat{H}_{126}\, \psi_\phys(g_l) = 0 \,,\quad
\hat{H}_{126}:= g_1g_6g_2^{-1} -\id\,.
\label{eq:WdW_g126}
\ee
By the discrete nature of the spin labels defining a spin network basis, the Wheeler-de Witt equation can be represented by a recursion relation of the $\{6j\}$-symbols. 
 It was proposed in \cite{Bonzom:2011hm} a spin 1 Hamiltonian (denoted with a superscript $(1)$) by projecting the rotation matrix $R(g_6 g_1 g_2^{-1})$ onto the fluxes $\vec{X}_2$ and $\vec{X}_6$ associated to the node where links $l_2$ and $l_6$ meet,
\be
H_{126}^{(1)}=\vec{X}_6 \cdot \left( \id - \Ad(g_6 g_1 g_2^{-1}) \right)\, \vec{X}_2 
= \vec{X}_6 \cdot \vec{X}_2  - \vec{X}_6 \cdot \Ad(g_1) \,\vec{X}\,.
\label{eq:H_126_spin1}
\ee 
It corresponds to the recursion relation of the $\{6j\}$-symbol involving a shift of a spin label by one \cite{Bonzom:2011hm}:
\be
A_{+1}(j_1)\Mat{ccc}{j_1+1 & j_2 & j_3 \\ j_4 & j_5 & j_6}
+A_0^{(1)}(j_1)\Mat{ccc}{j_1 & j_2 & j_3 \\ j_4 & j_5 & j_6}
+A_{-1}(j_1)\Mat{ccc}{j_1+1 & j_2 & j_3 \\ j_4 & j_5 & j_6}
=0\,,
\ee
with coefficients
\begin{align}
A_0^{(1)}(j_1)
&=(-1)^{j_2+j_4+j_6}\Mat{ccc}{j_2 & j_2 & 1 \\ j_6 & j_6 & j_4}
+(-1)^{2j_1+j_2+j_3+j_5+j_6}(2j_1+1)\Mat{ccc}{j_1 & j_1 & 1 \\ j_2 & j_2 & j_3}
\Mat{ccc}{j_1 & j_1 & 1 \\ j_6 & j_6 & j_5}\,, \\ 
A_{\pm 1}(j_1)
&= (-1)^{2j_1+j_2+j_3+j_5+j_6+1}
(2(j_1\pm 1)+1)
\Mat{ccc}{j_1\pm 1 & j_1 & 1 \\ j_2 & j_2 & j_3}
\Mat{ccc}{j_1\pm 1 & j_1 & 1 \\ j_6 & j_6 & j_5}\,.
\end{align}
The reason for obtaining a recursion relation as such with an arguments changed by 1 is that the fluxes transform in the adjoin representation of $\SU(2)$, thus the spin 1 representation. Fluxes can be reproduced with spinor variables \eqref{eq:flux_and_spinor}. The inner product of fluxes $\vec{X}_2$ and $\vec{X}_6$ can be written as the inner product of spinors $\zt_2$ and $\zt_6$ associated to the same node,
\be
\vec{X}_6\cdot \vec{X}_2 
\equiv \langle \zt_2|\zt_6\rangle \langle \zt_6|\zt_2\rangle 
- \langle \zt_2|\zt_6][\zt_6|\zt_2\rangle\,.
\label{eq:flux_from_spinors_26}
\ee
The use of spinors makes it possible to formulate a spin $1/2$ Hamiltonian since spinors transform in the fundamental representation of $\SU(2)$. Such a Hamiltonian is expected (and was shown in \cite{Bonzom:2011nv}) to generate a recursion relation of the $\{6j\}$-symbols with arguments changed by $1/2$. As proposed in \cite{Bonzom:2011nv}, the spin $1/2$ Hamiltonian (denoted with a superscript $(1/2)$) represents the unchanged inner product of the spinors $\zt_2$ and $\zt_6$ after transported around the plaquette surrounded by links $e_1,e_2,e_6$,
\be
H^{(1/2)}_{126}:=\langle\tz_6|\tz_2] [\tz_6 | g_6g_1g_2^{-1} -\id |\tz_2\rangle \,.
\label{eq:H_126_spin1o2}
\ee
The quantization of spinors leads to creation and annihilation operators. Since there are two separate scalar terms in the formula, two Hamiltonian operators are available upon quantization. The left ordering Hamiltonian is \cite{Bonzom:2011nv}
\be
\hat{H}^{(1/2)}_{126}|_L = \langle \tilde{a}_6|\tilde{a}_2] \left([a_6|g_1|a_2\rangle -[\tilde{a}_6|\tilde{a}_2\rangle \right) 
=  \hat{F}^{\dagger}_{26} \left(\frac{1}{d_{j_1}}\left( \hat{F}_{61}\hat{E}_{12} + \hat{E}_{16}\hat{F}_{12} \right) - \hat{F}_{62}\right)\,,
\ee
and the right ordering Hamiltonian is
\be
\hat{H}^{(1/2)}_{126}|_R = \left([a_6|g_1|a_2\rangle -[\tilde{a}_6|\tilde{a}_2\rangle \right) 
\langle \tilde{a}_6|\tilde{a}_2] 
= \left(\frac{1}{d_{j_1}}\left( \hat{F}_{61}\hat{E}_{12} + \hat{E}_{16}\hat{F}_{12} \right) - \hat{F}_{62}\right) \hat{F}^{\dagger}_{26}\,.
\ee
It turns out that only the right ordering Hamiltonian can generate the desired recursion relation. $\hat{H}^{(1/2)}_{126}|_R $ annihilates the tetrahedral spin network state evaluated on the identity
\be
\hat{H}^{(1/2)}_{126}|_R \, s_{\tet}^{\{j_l\}}(\id)=0\,.
\label{eq:Hamiltonian_1/2}
\ee
By expanding the terms and applying the annihilation and creation operators on the magnetic number basis \eqref{eq:a_on_jm}, \eqref{eq:Hamiltonian_1/2} reproduces a spin $1/2$ recursion relation on the $\{6j\}$-symbols \cite{Bonzom:2011nv}
\be
A_{+\f12}(j_1)
\left\{\ba{ccc}j_1+\f12 & j_2-\f12 & j_3 \\ j_4 & j_5 & j_6-\f12 \ea\right\}
+A_0^{(1/2)}(j_1)
\left\{\ba{ccc} j_1 & j_2 & j_3 \\ j_4 & j_5 & j_6 \ea\right\}
+A_{-\f12}(j_1)
\left\{\ba{ccc} j_1-\f12 & j_2-\f12 & j_3 \\ j_4 & j_5 & j_6-\f12 \ea\right\}
=0\,,
\ee
with coefficients
\begin{align}
A_0^{(1/2)}(j_1)
&= (-1)^{j_2+j_4+j_6+1}\Mat{ccc}{j_1 & j_2 -\f12 & \f12 \\ j_6-\f12 & j_6 & j_4} \\
A_{\pm \f12}(j_1)
&=
(-1)^{2j_1+j_2+j_3+j_5+j_6-\f12 \pm \f12} (2(j_1\pm\f12)+1)
\Mat{ccc}{j_1\pm\f12 & j_1 & \f12 \\ j_2 & j_2-\f12 & j_3}
\Mat{ccc}{j_1\pm\f12 & j_1 & \f12 \\ j_6 & j_6-\f12 & j_5}\,.
\end{align}
The same Hamiltonian can, at the same time, annihilate the tetrahedral scaleless spin network evaluated on the identity, thus the SGF,
\be
\hat{H}^{(1/2)}_{126}|_R \, s^{\sl}_{\tet}(\id)=0\,.
\ee
Similarly, expanding terms and applying the annihilation and creation operators on the coherent state basis \eqref{eq:a_on_jz} and \eqref{eq:a_on_jz_dual}, this reproduces a differential equation of the SGF \cite{Bonzom:2011nv},
\be\begin{split}
&\left( \sum_{A,B=0,1}
\left(\frac{\partial}{\partial \tz_1^A}\otimes z_6^A \right)
\left( \frac{\partial}{\partial z_1^B}\otimes z_2^B \right)
\right)
\left(2+\f12 \sum_{l\in n_{246}}\sum_A z^A \frac{\partial}{\partial z^A} \right) 
-\left[\frac{\partial}{\partial \tz_6}\middle\vert\frac{\partial}{\partial \tz_2}\right>
\left( 1+\sum_A z_1^A \frac{\partial}{\partial z_1^A} \right)\\
&+[\tz_1|z_6\rangle [z_1|z_2\rangle 
\left( 2+\f12 \sum_{l\in n_{246}}\sum_A z^A \frac{\partial}{\partial z^A} \right)
\left( 2+\f12 \sum_{l\in n_{123}}\sum_A z^A \frac{\partial}{\partial z^A} \right)
\left( 2+\f12 \sum_{l\in n_{156}}\sum_A z^A \frac{\partial}{\partial z^A} \right)
\cS^{\sl}(\{z_l,\tz_l\})=0\,,
\end{split}\ee
which is noted as the Wheeler-de Witt equation of the SGF. There are totally four such equations, each associated to one plaquette of the tetrahedral graph, while only three are independent. As expected, these expressions look more cumbersome than \eqref{eq:diff_link_SGF} since the SGF is in a more coupled fashion when written in terms of the angle coupling rather than the link couplings. 

\section{The Ponzano-Regge model in terms of coherent blocks}
\label{app:coherent_PR}

In the section, we prove that the Ponzano-Regge amplitude \eqref{eq:group_form} can be written in a way that the vertex amplitudes are given by the coherent states \eqref{eq:coherent_SN} evaluated on identity and edge amplitudes in a simple Gaussian form. This is the starting point of the Proposition \ref{prop:state_integral_2} and 3D version of \cite{Dupuis:2011dh}.

Let us recall the notations we will use to write the amplitudes, which are the same as those in Proposition \ref{prop:state_integral_2}.

\begin{prop}
The spinfoam model can be expressed as an integral in terms of coherent blocks
\be
\cA_{\bf T}[\cM,\psi_\Gamma^{\text{cohe}} ]=\int \left[\rd \mu(z)\right] \prod_{f^*} \cA_{f^*}[z_{f^*}] \prod_{e^*}\cA_{e^*}[z_{e^*}]\prod_{v^*}\cA_{v^*}[z_{v^*}]
\nn\ee
with the vertex, edge and face amplitude written as
\begin{align}
\cA_{v^*}
&=
s_{\tet}^{\cohe}(\id)
=\int_{\SU(2)^4}\prod_{n=1}^4 \rd h_n\, e^{\sum_{e=1}^6 [\zt_l|
h_{t(l)}^{-1}h_{s(l)}
|z_l\rangle}\,,
\label{eq:vertex_amplitude}
\\
\cA_{e^*}
&=
e^{\sum_{l\in n} \langle z_l^{s(e^*)}|z_l^{t(e^*)}]}\,,
\label{eq:edge_amplitude_Schwinger}
\\
\cA_{f^*}
&=\langle z^{f^*,T_1}|z^{f^*,T_1}\rangle-1\,.
\label{eq:face_amplitude}
\end{align}
\label{prop:state_integral_1}
\end{prop}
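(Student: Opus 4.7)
The strategy is to verify the identity by Gaussian-integrating the spinor variables on the right-hand side and recovering the group-integral form \eqref{eq:group_form} of the Ponzano-Regge amplitude. Three ingredients will drive the computation:
\begin{enumerate}
\item[(i)] the coherent-state representation of the Dirac delta on $\SU(2)$,
\[ \delta(g) = \int d\mu(z)\,(\langle z|z\rangle-1)\, e^{\langle z|g|z\rangle}; \]
\item[(ii)] the dual Bargmann reproducing identity
\[ \int d\mu(z)\, e^{[\alpha|z\rangle + \langle z|\beta]} = e^{[\alpha|\beta]}, \]
obtained by a direct Gaussian evaluation of the source-pairing in spinor components;
\item[(iii)] the trivial enlargement $\int dg_{e^*}\, F(g_{e^*}) = \int dh^{T_1}_n\, dh^{T_2}_n\, F((h^{T_2}_n)^{-1} h^{T_1}_n)$, which splits each bulk holonomy into two local rotations at no cost since $\int dh = 1$.
\end{enumerate}

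First, I would integrate out each generic spinor $z^T_l$ (not designated as any face base-spinor) using (ii). Such a $z^T_l$ appears holomorphically in the vertex-amplitude factor $e^{[\tilde{z}^T_l|h^{T,-1}_{t(l)}h^T_{s(l)}|z^T_l\rangle}$ of $\cA_{v^*}$ and antiholomorphically in the edge-amplitude factor $e^{\langle z^T_l|z^{T'}_l]}$ of $\cA_{e^*}$, where $T'$ is the tetrahedron sharing with $T$ the triangle at the source node of $l$. Identity (ii) then produces $e^{[\tilde{z}^T_l|h^{T,-1}_{t(l)}h^T_{s(l)}|z^{T'}_l]}$, effectively transporting the local holonomy structure across the shared triangle and replacing the entry $|z^T_l\rangle$ by the dual spinor $|z^{T'}_l]$ of the neighbouring tetrahedron.

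Second, I would iterate this reduction around each dual-face loop $f^*$ (dual to a bulk edge $e$ of $\bf T$), eliminating in sequence all spinors along the loop except the chosen base spinor $z^{f^*,T_1}$. The successive contractions concatenate the local rotations $h^{T_i,-1}_{t(l)}h^{T_i}_{s(l)}$ along the face boundary and leave, for each $f^*$, a residual factor $e^{\langle z^{f^*,T_1}|G_{f^*}|z^{f^*,T_1}\rangle}$, where $G_{f^*}$ is the ordered product of the $h^T_n$'s collected along the loop. The final integration over the base spinor, combined with the face amplitude $\cA_{f^*}= \langle z^{f^*,T_1}|z^{f^*,T_1}\rangle -1$, is exactly (i), producing $\delta(G_{f^*})$. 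Using (iii) in reverse, pairs $(h^{T_1}_n, h^{T_2}_n)$ at each dual edge $e^*$ then collapse to a single integral over $g_{e^*}:=(h^{T_2}_n)^{-1}h^{T_1}_n$, under which $G_{f^*}$ becomes the ordered product $\prod_i g_{e^*_i}=g_{f^*}$. The right-hand side is thus transformed into $\int\prod dg_{e^*}\prod_{f^*}\delta(g_{f^*})$, matching \eqref{eq:group_form}. The residual boundary $h^T_n$'s (at nodes dual to boundary triangles, which do not pair up) remain as the integrations defining the coherent boundary spin network state $\psi_\Gamma^{\cohe}$.

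The main obstacle lies in the orientation bookkeeping: link orientations in each tetrahedral boundary graph, traversal directions of the face loops, and the Kasteleyn signs all have to be tracked so that every Gaussian integration step presents itself in the precise bracket combination $[\alpha|z\rangle+\langle z|\beta]$ required by (ii), and so that the concatenation $G_{f^*}$ reassembles into $g_{f^*}$ after (iii). Once a consistent orientation convention has been fixed, the remaining manipulations reduce to routine Gaussian algebra on spinors.
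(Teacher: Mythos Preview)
Your proposal is correct and follows essentially the same route as the paper's proof: integrate the intermediate spinors to concatenate the local holonomies along each dual-face loop, then use the coherent-state delta identity to produce $\delta(G_{f^*})$ and match the group-integral form \eqref{eq:group_form}. The only cosmetic difference is that the paper justifies your identity (ii) by expanding the exponentials in spins and invoking the resolution of the identity $\id_{\cV^j\otimes\cV^{*j}}=\frac{1}{(2j)!}\int d\mu(z)\,|j,z\rangle\langle j,z|$, whereas you evaluate the same Bargmann reproducing kernel directly as a Gaussian integral; the two are equivalent and lead to the same concatenation step.
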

\begin{proof}
Recall that the coherent spin network state for a tetrahedron graph is 
\be
\sum_{\{j_l\}}
s_{\tet}^{\cohe}
=\int_{\SU(2)^4}\prod_{n=1}^4 \rd h_n\, e^{\sum_{l=1}^6 [\zt_l|
h_{t(l)}^{-1}g_lh_{s(l)}
|z_l\rangle}\,.
\ee
The evaluation on identity gives the vertex amplitude \eqref{eq:vertex_amplitude}.
To glue the vertex amplitudes associated to adjacent tetrahedra, we make use of the identity in the representation space $\cV^j\otimes \cV^{*j}$ spanned by the coherent states \cite{Livine:2011gp}
\be
\id_{\cV^j\otimes\cV^{*j}}=\frac{1}{(2j)!}\int \rd \mu(z) |j,z\rangle \langle j,z|
=\frac{1}{(2j)!}\int \rd \mu(z) |j,z][ j,z|
\,.
\label{eq:id_Vj}
\ee
To do the gluing, we also use the following identity,
\be\begin{split}
&\int\rd\mu(w_1)\int\rd\mu(w_2)\, e^{[ z|g|w_1\rangle+\langle w_1|w_2]+ [ w_2|h|z'\rangle}\\
=&\sum_{j,k,q\in\N/2}\frac{1}{(2j)!(2k)!(2q)!}\int\rd\mu(w_1)\int\rd\mu(w_2)\,
[ j,z|g|j,w_1\rangle \langle k,w_1|k,w_2] [ q,w_2|h|q,z'\rangle \\
=&
\sum_{j\in\N/2}\frac{1}{(2j)!}[ j,z|gh|j,z'\rangle
=e^{[ z|gh|z'\rangle}\,.
\end{split}\nn\ee
We have used \eqref{eq:id_Vj} to obtain the third line.
This product rule can be applied to contract terms between adjacent tetrahedra, say $T_1$ and $T_2$, connected with the triangle whose 2D dual is a node $n$ and 3D dual is an oriented dual edge $e^*$. Say the source dual vertex $s(e^*)$ is dual to $T_1$ and the target $t(e^*)$ is dual to $T_2$. Consider the graphs $(\partial T_1)^*_1$ and $(\partial T_2)^*_1$ both including the node $n$. To identify the triangles from $T_1$ and $T_2$ is to identify the three links $l,l',l''\in n$ from the two graphs. However, the spinors associated to links from different graphs are different. For instance, consider a link $l\in n$, the spinor $z_l^{T_1}$ (or $z_l^{s(e^*)}$ with the dual language) on $(\partial T_1)^*_1$ is not the same as the spinor $z_l^{T_2}$ (or $z_l^{t(e^*)}$) on $(\partial T_2)^*_1$. Integrating over the relevant terms from vertex amplitudes of $s(e^*)$ and $t(e^*)$, and the edge amplitude of $e^*$, one gets (we write only integration for one link $l$ for short)
\be\begin{split}
&\int \rd\mu(z_l^{s(e^*)})\int \rd\mu(z_l^{t(e^*)})\, 
e^{[z^{n_1} |\left(h_{n_1}^{T_1}\right)^{-1}h^{T_1}_{n}|z_l^{s(e^*)}\rangle }\,
e^{\langle z_l^{s(e^*)}|z_l^{t(e^*)}]} \,
e^{[z^{t(e^*)}|\left(h_n^{T_2}\right)^{-1}h_{n_2}^{T_2}|z^{n_2}\rangle}\\
=&e^{[ z^{n_1}|\left(h_{n_1}^{T_1}\right)^{-1}h_{n}^{T_1}\left(h_n^{T_2}\right)^{-1}h_{n_2}^{T_2}|z^{n_2}\rangle}
=e^{[z^{n_1}|\left(h_{n_1}^{T_1}\right)^{-1}h_nh_{n_2}^{T_2}|z^{n_2}\rangle}\,,
\label{eq:cal_1}
\end{split}\ee
with $h_n\equiv h^{T_1}_{n}\left(h_n^{T_2}\right)^{-1}$. 

The contraction \eqref{eq:cal_1} can be repeatedly performed along a closed chain $(e_1^*e_2^*...e_M^*e_1^*)\in T$* (or equivalently $(t_1t_2\cdots t_Mt_1\in T$) surrounding an edge $e$ shared by $M$ tetrahedra, as illustrated in fig.\ref{fig:order_tetra}. The edge amplitude can be absorbed into the vertex amplitudes. As a result, one simply flips of the spinors $[z_l^{t(e^*)}|$ associated to the target $t(e^*)$ of $e^*$ to their dual $\langle z_l^{t(e^*)}|$ and identify the spinors for the same link from different tetrahedra. In this way, then the gluing of a close chain of tetrahedra reads
\be
\int\prod_{i=1}^M\rd\mu(z^{v_i})
e^{\langle z^{n_1}|
\left(h_{n_1}^{T_1}\right)^{-1}h_{n_2}^{T_1}
|z^{n_2}\rangle
+\langle z^{n_2}|
\left(h_{n_2}^{T_2}\right)^{-1}h_{v_3}^{T_2}
|z^{v_3}\rangle
+\cdots
+\langle z^{n_M}|
\left(h_{n_M}^{T_n}\right)^{-1}h_{n_1}^{T_n}
|z^{n_1}\rangle
}
=\int\rd\mu(z^{n_1})
e^{\langle z^{n_1}|G_{e}|z^{n_1}\rangle}
\label{eq:gluing_flat}
\ee
with $G_{e}\equiv \left(h_{n_1}^{T_1}\right)^{-1}h_{n_2}h_{v_3}\cdots h_{n_M}h_{n_1}^{T_n}$.
Different from the magnetic number basis, each dual face is weighted by a factor $\langle z^{n_1}|G_{e}|z^{n_1}\rangle$ depending on the the spinor attached to the (randomly chosen) base node $n_1$ and the Wilson loop around the dual face, or a factor $(\langle z^{n_1}|z^{n_1}\rangle -1)$ (or ($\langle z^{f^*,T_1}|z^{f^*,T_1} \rangle -1$) depending only on the spinor attached to the $n_1$ \cite{Dupuis:2011fz}.  
At the end of the day, one gets the partition function expressed as
\be\begin{split}
Z_{\bf T}[\cM,\partial\cM]
&=\int_{\SU(2)}
\left(\prod_{n\in \Gamma}\rd h_{n}\right)
\prod_{f^*\in {\bf T}^*}
\int\rd\mu(z^{n_1})
(1+\langle z^{n_1}|G_{e}|z^{n_1}\rangle)
e^{\langle z^{n_1}|G_{e}|z^{n_1}\rangle}\\
&
=\int_{\SU(2)}
\left(\prod_{n\in \Gamma}\rd h_{n}\right)
\prod_{f^*\in {\bf T}^*}
\int\rd\mu(z^{n_1})
(\langle z^{n_1}|z^{n_1}\rangle -1\rangle)
e^{\langle z^{n_1}|G_{e}|z^{n_1}\rangle}\\
&=
\left(\int_{\SU(2)}\prod_{t\in T}\rd h_{t}\right)
\prod_{e \in {\bf T}} \delta(G_{e}) 
=\left(\int_{\SU(2)} \prod_{e^*}\rd g_{e^*}\right)\prod_{f^*}\delta(\overrightarrow{\prod}_{e^*\in \partial f^*}g_{e^*})
\,,
\end{split}
\label{eq:amplitude_spinor}
\ee
where an integration by part is used to get the second line.
This matches the results by using the spin network basis as shown in Section \ref{sec:local_holography_SN}. 
\end{proof}




\bibliographystyle{bib-style}
\bibliography{TQFT}

\providecommand{\href}[2]{#2}\begingroup\raggedright\begin{thebibliography}{10}

\bibitem{Baez:1999sr}
J.~C. Baez, ``{An Introduction to spin foam models of quantum gravity and BF
  theory},'' Lect. Notes Phys. {\bf 543} (2000) 25--94,
  \href{http://arXiv.org/abs/gr-qc/9905087}{{\texttt{arXiv:gr-qc/9905087}}}.

\bibitem{Livine:2010zx}
E.~R. Livine, ``{The Spinfoam Framework for Quantum Gravity},'' other thesis,
  10, 2010.

\bibitem{Perez:2012wv}
A.~Perez, ``{The Spin Foam Approach to Quantum Gravity},'' Living Rev. Rel.
  {\bf 16} (2013) 3,
  \href{http://arXiv.org/abs/1205.2019}{{\texttt{arXiv:1205.2019}}}.

\bibitem{Regge:2000wu}
T.~Regge and R.~M. Williams, ``{Discrete structures in gravity},'' J. Math.
  Phys. {\bf 41} (2000) 3964--3984,
  \href{http://arXiv.org/abs/gr-qc/0012035}{{\texttt{arXiv:gr-qc/0012035}}}.

\bibitem{Reisenberger:1996pu}
M.~P. Reisenberger and C.~Rovelli, ``{'Sum over surfaces' form of loop quantum
  gravity},'' Phys. Rev. D {\bf 56} (1997) 3490--3508,
  \href{http://arXiv.org/abs/gr-qc/9612035}{{\texttt{arXiv:gr-qc/9612035}}}.

\bibitem{Engle:2007wy}
J.~Engle, E.~Livine, R.~Pereira, and C.~Rovelli, ``{LQG vertex with finite
  Immirzi parameter},'' Nucl. Phys. B {\bf 799} (2008) 136--149,
  \href{http://arXiv.org/abs/0711.0146}{{\texttt{arXiv:0711.0146}}}.

\bibitem{Reisenberger:2000fy}
M.~Reisenberger and C.~Rovelli, ``{Spin foams as Feynman diagrams},'' in {\em
  {25th Johns Hopkins Workshop on Current Problems in Particle Theory: 2001: A
  Relativistic Spacetime Odyssey. Experiments and Theoretical Viewpoints on
  General Relativity and Quantum Gravity}}, pp.~431--448.
\newblock 2, 2000.
\newblock
  \href{http://arXiv.org/abs/gr-qc/0002083}{{\texttt{arXiv:gr-qc/0002083}}}.

\bibitem{Reisenberger:2000zc}
M.~P. Reisenberger and C.~Rovelli, ``{Space-time as a Feynman diagram: The
  Connection formulation},'' Class. Quant. Grav. {\bf 18} (2001) 121--140,
  \href{http://arXiv.org/abs/gr-qc/0002095}{{\texttt{arXiv:gr-qc/0002095}}}.

\bibitem{Barrett:2008wh}
J.~W. Barrett and I.~Naish-Guzman, ``{The Ponzano-Regge model},'' Class.\
  Quant.\ Grav. {\bf 26} (2009) 155014,
  \href{http://arXiv.org/abs/0803.3319}{{\texttt{arXiv:0803.3319}}}.

\bibitem{Freidel:2005me}
L.~Freidel and E.~R. Livine, ``{3D Quantum Gravity and Effective Noncommutative
  Quantum Field Theory},'' Phys. Rev. Lett. {\bf 96} (2006) 221301,
  \href{http://arXiv.org/abs/hep-th/0512113}{{\texttt{arXiv:hep-th/0512113}}}.

\bibitem{Bonzom:2014wva}
V.~Bonzom, M.~Dupuis, F.~Girelli, and E.~R. Livine, ``{Deformed phase space for
  3d loop gravity and hyperbolic discrete geometries},''
  \href{http://arXiv.org/abs/1402.2323}{{\texttt{arXiv:1402.2323}}}.

\bibitem{Dupuis:2020ndx}
M.~Dupuis, L.~Freidel, F.~Girelli, A.~Osumanu, and J.~Rennert, ``{On the origin
  of the quantum group symmetry in 3d quantum gravity},''
  \href{http://arXiv.org/abs/2006.10105}{{\texttt{arXiv:2006.10105}}}.

\bibitem{Girelli:2007tt}
F.~Girelli, H.~Pfeiffer, and E.~M. Popescu, ``{Topological Higher Gauge Theory
  - from BF to BFCG theory},'' J. Math. Phys. {\bf 49} (2008) 032503,
  \href{http://arXiv.org/abs/0708.3051}{{\texttt{arXiv:0708.3051}}}.

\bibitem{Baratin:2014era}
A.~Baratin and L.~Freidel, ``{A 2-categorical state sum model},'' J. Math.
  Phys. {\bf 56} (2015), no.~1, 011705,
  \href{http://arXiv.org/abs/1409.3526}{{\texttt{arXiv:1409.3526}}}.

\bibitem{Asante:2019lki}
S.~K. Asante, B.~Dittrich, F.~Girelli, A.~Riello, and P.~Tsimiklis, ``{Quantum
  geometry from higher gauge theory},'' Class. Quant. Grav. {\bf 37} (2020),
  no.~20, 205001,
  \href{http://arXiv.org/abs/1908.05970}{{\texttt{arXiv:1908.05970}}}.

\bibitem{Girelli:2021zmt}
F.~Girelli and P.~Tsimiklis, ``{Discretization of 4D Poincar\'e BF theory: From
  groups to 2-groups},'' Phys. Rev. D {\bf 106} (2022), no.~4, 046003,
  \href{http://arXiv.org/abs/2105.01817}{{\texttt{arXiv:2105.01817}}}.

\bibitem{Ponzano:1968se}
G.~Ponzano and T.~E. Regge, ``Semiclassical limit of Racah coefficients,''.

\bibitem{Ooguri:1991ni}
H.~Ooguri, ``{Partition functions and topology changing amplitudes in the 3-D
  lattice gravity of Ponzano and Regge},'' Nucl. Phys. B {\bf 382} (1992)
  276--304,
  \href{http://arXiv.org/abs/hep-th/9112072}{{\texttt{arXiv:hep-th/9112072}}}.

\bibitem{Freidel:2004vi}
L.~Freidel and D.~Louapre, ``{Ponzano-Regge model revisited I: Gauge fixing,
  observables and interacting spinning particles},'' Class.\ Quant.\ Grav. {\bf
  21} (2004) 5685--5726,
  \href{http://arXiv.org/abs/hep-th/0401076}{{\texttt{arXiv:hep-th/0401076}}}.

\bibitem{Freidel:2000uq}
L.~Freidel, ``{A Ponzano-Regge model of Lorentzian 3-dimensional gravity},''
  Nucl. Phys. B Proc. Suppl. {\bf 88} (2000) 237--240,
  \href{http://arXiv.org/abs/gr-qc/0102098}{{\texttt{arXiv:gr-qc/0102098}}}.

\bibitem{Davids:2000kz}
S.~Davids, ``{A State sum model for (2+1) Lorentzian quantum gravity},'' other
  thesis, 10, 2000.

\bibitem{Freidel:2005bb}
L.~Freidel and E.~R. Livine, ``{Ponzano-Regge model revisited III: Feynman
  diagrams and effective field theory},'' Class.\ Quant.\ Grav. {\bf 23} (2006)
  2021--2062,
  \href{http://arXiv.org/abs/hep-th/0502106}{{\texttt{arXiv:hep-th/0502106}}}.

\bibitem{Girelli:2015ija}
F.~Girelli and G.~Sellaroli, ``{3d Lorentzian loop quantum gravity and the
  spinor approach},'' Phys. Rev. D {\bf 92} (2015), no.~12, 124035,
  \href{http://arXiv.org/abs/1506.07759}{{\texttt{arXiv:1506.07759}}}.

\bibitem{Bonzom:2014bua}
V.~Bonzom, M.~Dupuis, and F.~Girelli, ``{Towards the Turaev-Viro amplitudes
  from a Hamiltonian constraint},'' Phys. Rev. D {\bf 90} (2014), no.~10,
  104038, \href{http://arXiv.org/abs/1403.7121}{{\texttt{arXiv:1403.7121}}}.

\bibitem{Turaev:1992hq}
V.~Turaev and O.~Viro, ``{State sum invariants of 3 manifolds and quantum 6j
  symbols},'' Topology {\bf 31} (1992) 865--902.

\bibitem{Freidel:2004nb}
L.~Freidel and D.~Louapre, ``{Ponzano-Regge model revisited II: Equivalence
  with Chern-Simons},''
  \href{http://arXiv.org/abs/gr-qc/0410141}{{\texttt{arXiv:gr-qc/0410141}}}.

\bibitem{Witten:1988hc}
E.~Witten, ``{(2+1)-Dimensional Gravity as an Exactly Soluble System},'' Nucl.
  Phys. B {\bf 311} (1988) 46.

\bibitem{Crane:2001kf}
L.~Crane, ``{A New approach to the geometrization of matter},''
  \href{http://arXiv.org/abs/gr-qc/0110060}{{\texttt{arXiv:gr-qc/0110060}}}.

\bibitem{Gurau:2010nd}
R.~Gurau, ``{Lost in Translation: Topological Singularities in Group Field
  Theory},'' Class. Quant. Grav. {\bf 27} (2010) 235023,
  \href{http://arXiv.org/abs/1006.0714}{{\texttt{arXiv:1006.0714}}}.

\bibitem{Costantino:2011gen}
F.~Costantino and J.~Marche, ``Generating series and asymptotics of classical
  spin networks,'' arXiv preprint arXiv:1103.5644 (2011).

\bibitem{Ooguri:1991ib}
H.~Ooguri and N.~Sasakura, ``{Discrete and continuum approaches to
  three-dimensional quantum gravity},'' Mod. Phys. Lett. A {\bf 6} (1991)
  3591--3600,
  \href{http://arXiv.org/abs/hep-th/9108006}{{\texttt{arXiv:hep-th/9108006}}}.

\bibitem{Boulatov:1992vp}
D.~Boulatov, ``{A Model of three-dimensional lattice gravity},'' Mod. Phys.
  Lett. A {\bf 7} (1992) 1629--1646,
  \href{http://arXiv.org/abs/hep-th/9202074}{{\texttt{arXiv:hep-th/9202074}}}.

\bibitem{Carrozza:2013oiy}
S.~Carrozza, {\em {Tensorial methods and renormalization in Group Field
  Theories}}.
\newblock PhD thesis, Orsay, LPT, 2013.
\newblock \href{http://arXiv.org/abs/1310.3736}{{\texttt{arXiv:1310.3736}}}.

\bibitem{Dowdall:2009eg}
R.~Dowdall, H.~Gomes, and F.~Hellmann, ``{Asymptotic analysis of the
  Ponzano-Regge model for handlebodies},'' J.\ Phys.\ A {\bf 43} (2010) 115203,
  \href{http://arXiv.org/abs/0909.2027}{{\texttt{arXiv:0909.2027}}}.

\bibitem{Goeller:2019zpz}
C.~Goeller, E.~R. Livine, and A.~Riello, ``{Non-Perturbative 3D Quantum
  Gravity: Quantum Boundary States and Exact Partition Function},'' Gen.\ Rel.\
  Grav. {\bf 52} (2020), no.~3, 24,
  \href{http://arXiv.org/abs/1912.01968}{{\texttt{arXiv:1912.01968}}}.

\bibitem{Bonzom:2010ar}
V.~Bonzom and M.~Smerlak, ``{Bubble divergences from cellular cohomology},''
  Lett. Math. Phys. {\bf 93} (2010) 295--305,
  \href{http://arXiv.org/abs/1004.5196}{{\texttt{arXiv:1004.5196}}}.

\bibitem{Bonzom:2010zh}
V.~Bonzom and M.~Smerlak, ``{Bubble divergences from twisted cohomology},''
  Commun. Math. Phys. {\bf 312} (2012) 399--426,
  \href{http://arXiv.org/abs/1008.1476}{{\texttt{arXiv:1008.1476}}}.

\bibitem{Bonzom:2012bn}
V.~Bonzom and E.~R. Livine, ``{Generating Functions for Coherent
  Intertwiners},'' Class. Quant. Grav. {\bf 30} (2013) 055018,
  \href{http://arXiv.org/abs/1205.5677}{{\texttt{arXiv:1205.5677}}}.

\bibitem{Bonzom:2015ova}
V.~Bonzom, F.~Costantino, and E.~R. Livine, ``{Duality between Spin networks
  and the 2D Ising model},'' Commun.\ Math.\ Phys. {\bf 344} (2016), no.~2,
  531--579,
  \href{http://arXiv.org/abs/1504.02822}{{\texttt{arXiv:1504.02822}}}.

\bibitem{Schwinger:1965an}
J.~Schwinger, ``On Angular Momentum, USAEC Report NYO-3071 (1952); reprinted in
  LC Biedenharn and H. van Dam,(editors), Quantum Theory of Angular Momentum,''
  1965.

\bibitem{Bargmann:1962zz}
V.~Bargmann, ``{On the Representations of the Rotation Group},'' Rev. Mod.
  Phys. {\bf 34} (1962)
829--845.

\bibitem{Freidel:2010tt}
L.~Freidel and E.~R. Livine, ``{U(N) Coherent States for Loop Quantum
  Gravity},'' J. Math. Phys. {\bf 52} (2011) 052502,
  \href{http://arXiv.org/abs/1005.2090}{{\texttt{arXiv:1005.2090}}}.

\bibitem{Dupuis:2010iq}
M.~Dupuis and E.~R. Livine, ``{Revisiting the Simplicity Constraints and
  Coherent Intertwiners},'' Class. Quant. Grav. {\bf 28} (2011) 085001,
  \href{http://arXiv.org/abs/1006.5666}{{\texttt{arXiv:1006.5666}}}.

\bibitem{Dupuis:2011dh}
M.~Dupuis and E.~R. Livine, ``{Holomorphic Simplicity Constraints for 4d
  Riemannian Spinfoam Models},'' J. Phys. Conf. Ser. {\bf 360} (2012) 012046,
  \href{http://arXiv.org/abs/1111.1125}{{\texttt{arXiv:1111.1125}}}.

\bibitem{Varshalovich:1988qu}
D.~A. Varshalovich, A.~N. Moskalev, and V.~K. Khersonskii, {\em Quantum theory
  of angular momentum}.
\newblock World Scientific, 1988.

\bibitem{Freidel:2012ji}
L.~Freidel and J.~Hnybida, ``{On the exact evaluation of spin networks},'' J.
  Math. Phys. {\bf 54} (2013) 112301,
  \href{http://arXiv.org/abs/1201.3613}{{\texttt{arXiv:1201.3613}}}.

\bibitem{Bonzom:2011nv}
V.~Bonzom and E.~R. Livine, ``{A New Hamiltonian for the Topological BF phase
  with spinor networks},'' J. Math. Phys. {\bf 53} (2012) 072201,
  \href{http://arXiv.org/abs/1110.3272}{{\texttt{arXiv:1110.3272}}}.

\bibitem{Witten:1998qj}
E.~Witten, ``{Anti-de Sitter space and holography},'' Adv. Theor. Math. Phys.
  {\bf 2} (1998) 253--291,
  \href{http://arXiv.org/abs/hep-th/9802150}{{\texttt{arXiv:hep-th/9802150}}}.

\bibitem{Witten:1998wy}
E.~Witten, ``{AdS / CFT correspondence and topological field theory},'' JHEP
  {\bf 12} (1998) 012,
  \href{http://arXiv.org/abs/hep-th/9812012}{{\texttt{arXiv:hep-th/9812012}}}.

\bibitem{Dittrich:2017hnl}
B.~Dittrich, C.~Goeller, E.~Livine, and A.~Riello, ``{Quasi-local holographic
  dualities in non-perturbative 3d quantum gravity I -- Convergence of multiple
  approaches and examples of Ponzano--Regge statistical duals},'' Nucl.\ Phys.\
  B {\bf 938} (2019) 807--877,
  \href{http://arXiv.org/abs/1710.04202}{{\texttt{arXiv:1710.04202}}}.

\bibitem{Dittrich:2017rvb}
B.~Dittrich, C.~Goeller, E.~R. Livine, and A.~Riello, ``{Quasi-local
  holographic dualities in non-perturbative 3d quantum gravity II -- From
  coherent quantum boundaries to BMS$_3$ characters},'' Nucl.\ Phys.\ B {\bf
  938} (2019) 878--934,
  \href{http://arXiv.org/abs/1710.04237}{{\texttt{arXiv:1710.04237}}}.

\bibitem{Dittrich:2018xuk}
B.~Dittrich, C.~Goeller, E.~R. Livine, and A.~Riello, ``{Quasi-local
  holographic dualities in non-perturbative 3d quantum gravity},'' Class.\
  Quant.\ Grav. {\bf 35} (2018), no.~13, 13LT01,
  \href{http://arXiv.org/abs/1803.02759}{{\texttt{arXiv:1803.02759}}}.

\bibitem{OLoughlin:2000yww}
M.~O'Loughlin, ``{Boundary actions in Ponzano-Regge discretization, quantum
  groups and AdS(3)},'' Adv. Theor. Math. Phys. {\bf 6} (2003) 795--826,
  \href{http://arXiv.org/abs/gr-qc/0002092}{{\texttt{arXiv:gr-qc/0002092}}}.

\bibitem{Achucarro:1989ch}
A.~Achucarro and P.~K. Townsend, ``A Chern--Simons action for three-dimensional
  anti-de Sitter supergravity theories,'' in {\em Supergravities in Diverse
  Dimensions: Commentary and Reprints (In 2 Volumes)}, pp.~732--736.
\newblock World Scientific, 1989.

\bibitem{Cimasoni2007di}
D.~Cimasoni and N.~Reshetikhin, ``Dimers on surface graphs and spin structures.
  I,'' Communications in Mathematical Physics {\bf 275} (2007), no.~1,
  187--208.

\bibitem{Freidel:2002dw}
L.~Freidel and D.~Louapre, ``{Diffeomorphisms and spin foam models},'' Nucl.
  Phys. B {\bf 662} (2003) 279--298,
  \href{http://arXiv.org/abs/gr-qc/0212001}{{\texttt{arXiv:gr-qc/0212001}}}.

\bibitem{Freidel:2002xb}
L.~Freidel and E.~R. Livine, ``{Spin networks for noncompact groups},'' J.
  Math. Phys. {\bf 44} (2003) 1322--1356,
  \href{http://arXiv.org/abs/hep-th/0205268}{{\texttt{arXiv:hep-th/0205268}}}.

\bibitem{Bonzom:2009zd}
V.~Bonzom, E.~R. Livine, and S.~Speziale, ``{Recurrence relations for spin foam
  vertices},'' Class. Quant. Grav. {\bf 27} (2010) 125002,
  \href{http://arXiv.org/abs/0911.2204}{{\texttt{arXiv:0911.2204}}}.

\bibitem{Livine:2002ak}
E.~R. Livine, ``{Projected spin networks for Lorentz connection: Linking spin
  foams and loop gravity},'' Class. Quant. Grav. {\bf 19} (2002) 5525--5542,
  \href{http://arXiv.org/abs/gr-qc/0207084}{{\texttt{arXiv:gr-qc/0207084}}}.

\bibitem{Alexandrov:2002br}
S.~Alexandrov and E.~R. Livine, ``{SU(2) loop quantum gravity seen from
  covariant theory},'' Phys. Rev. D {\bf 67} (2003) 044009,
  \href{http://arXiv.org/abs/gr-qc/0209105}{{\texttt{arXiv:gr-qc/0209105}}}.

\bibitem{Livine:2006ix}
E.~R. Livine, ``{Towards a Covariant Loop Quantum Gravity},''
  \href{http://arXiv.org/abs/gr-qc/0608135}{{\texttt{arXiv:gr-qc/0608135}}}.

\bibitem{Hartle:1981cf}
J.~Hartle and R.~Sorkin, ``{Boundary Terms in the Action for the Regge
  Calculus},'' Gen. Rel. Grav. {\bf 13} (1981) 541--549.

\bibitem{Roberts:1998zka}
J.~Roberts, ``{Classical 6j-symbols and the tetrahedron},'' Geom. Topol. {\bf
  3} (1999), no.~1, 21--66,
  \href{http://arXiv.org/abs/math-ph/9812013}{{\texttt{arXiv:math-ph/9812013}}}.

\bibitem{Schulten:1975sem}
K.~Schulten and R.~G. Gordon, ``Semiclassical approximations to 3 j-and 6
  j-coefficients for quantum-mechanical coupling of angular momenta,'' Journal
  of Mathematical Physics {\bf 16} (1975), no.~10, 1971--1988.

\bibitem{Freidel:2002mj}
L.~Freidel and D.~Louapre, ``{Asymptotics of 6j and 10j symbols},'' Class.\
  Quant.\ Grav. {\bf 20} (2003) 1267--1294,
  \href{http://arXiv.org/abs/hep-th/0209134}{{\texttt{arXiv:hep-th/0209134}}}.

\bibitem{Barrett:1993db}
J.~W. Barrett and T.~Foxon, ``{Semiclassical limits of simplicial quantum
  gravity},'' Class. Quant. Grav. {\bf 11} (1994) 543--556,
  \href{http://arXiv.org/abs/gr-qc/9310016}{{\texttt{arXiv:gr-qc/9310016}}}.

\bibitem{Livine:2011gp}
E.~R. Livine and J.~Tambornino, ``{Spinor Representation for Loop Quantum
  Gravity},'' J. Math. Phys. {\bf 53} (2012) 012503,
  \href{http://arXiv.org/abs/1105.3385}{{\texttt{arXiv:1105.3385}}}.

\bibitem{Borja:2010rc}
E.~F. Borja, L.~Freidel, I.~Garay, and E.~R. Livine, ``{U(N) tools for Loop
  Quantum Gravity: The Return of the Spinor},'' Class. Quant. Grav. {\bf 28}
  (2011) 055005,
  \href{http://arXiv.org/abs/1010.5451}{{\texttt{arXiv:1010.5451}}}.

\bibitem{Livine:2007vk}
E.~R. Livine and S.~Speziale, ``{A New spinfoam vertex for quantum gravity},''
  Phys. Rev. D {\bf 76} (2007) 084028,
  \href{http://arXiv.org/abs/0705.0674}{{\texttt{arXiv:0705.0674}}}.

\bibitem{Livine:2007ya}
E.~R. Livine and S.~Speziale, ``{Consistently Solving the Simplicity
  Constraints for Spinfoam Quantum Gravity},'' EPL {\bf 81} (2008), no.~5,
  50004, \href{http://arXiv.org/abs/0708.1915}{{\texttt{arXiv:0708.1915}}}.

\bibitem{Westbury:1998ge}
B.~W. Westbury, ``A generating function for spin network evaluations,'' Banach
  Center Publications {\bf 42} (1998) 447--456.

\bibitem{Dittrich:2013jxa}
B.~Dittrich and J.~Hnybida, ``{Ising model from intertwiners},'' Ann. Inst. H.
  Poincare D Comb. Phys. Interact. {\bf 3} (2016), no.~4, 363--380,
  \href{http://arXiv.org/abs/1312.5646}{{\texttt{arXiv:1312.5646}}}.

\bibitem{Bonzom:2019dpg}
V.~Bonzom and E.~R. Livine, ``{Self-duality of the 6j-symbol and Fisher zeros
  for the Tetrahedron},''
  \href{http://arXiv.org/abs/1905.00348}{{\texttt{arXiv:1905.00348}}}.

\bibitem{Huang:1974gs}
C.-S. Huang and A.~T. Wu, ``{Structure of the 12j and 15j coefficients in the
  bargmann approach},'' J. Math. Phys. {\bf 15} (1974) 1490--1493.

\bibitem{Labarthe:1975yf}
J.~Labarthe, ``{Generating Functions for the Coupling Recoupling Coefficients
  of SU(2)},'' J. Phys. A {\bf 8} (1975) 1543--1561.

\bibitem{Bonzom:2011hm}
V.~Bonzom and L.~Freidel, ``{The Hamiltonian constraint in 3d Riemannian loop
  quantum gravity},'' Class. Quant. Grav. {\bf 28} (2011) 195006,
  \href{http://arXiv.org/abs/1101.3524}{{\texttt{arXiv:1101.3524}}}.

\bibitem{Dupuis:2011fz}
M.~Dupuis and E.~R. Livine, ``{Holomorphic Simplicity Constraints for 4d
  Spinfoam Models},'' Class. Quant. Grav. {\bf 28} (2011) 215022,
  \href{http://arXiv.org/abs/1104.3683}{{\texttt{arXiv:1104.3683}}}.

\bibitem{Freidel:1998pt}
L.~Freidel and K.~Krasnov, ``{Spin foam models and the classical action
  principle},'' Adv. Theor. Math. Phys. {\bf 2} (1999) 1183--1247,
  \href{http://arXiv.org/abs/hep-th/9807092}{{\texttt{arXiv:hep-th/9807092}}}.

\bibitem{Dupuis:2011fx}
M.~Dupuis, F.~Girelli, and E.~R. Livine, ``{Spinors and Voros star-product for
  Group Field Theory: First Contact},'' Phys. Rev. D {\bf 86} (2012) 105034,
  \href{http://arXiv.org/abs/1107.5693}{{\texttt{arXiv:1107.5693}}}.

\bibitem{Banburski:2014cwa}
A.~Banburski, L.-Q. Chen, L.~Freidel, and J.~Hnybida, ``{Pachner moves in a 4d
  Riemannian holomorphic Spin Foam model},'' Phys. Rev. D {\bf 92} (2015),
  no.~12, 124014,
  \href{http://arXiv.org/abs/1412.8247}{{\texttt{arXiv:1412.8247}}}.

\bibitem{Chen:2016aag}
L.-Q. Chen, ``{Bulk amplitude and degree of divergence in 4d spin foams},''
  Phys. Rev. D {\bf 94} (2016), no.~10, 104025,
  \href{http://arXiv.org/abs/1602.01825}{{\texttt{arXiv:1602.01825}}}.

\bibitem{Dupuis:2014fya}
M.~Dupuis, F.~Girelli, and E.~R. Livine, ``{Deformed Spinor Networks for Loop
  Gravity: Towards Hyperbolic Twisted Geometries},'' Gen. Rel. Grav. {\bf 46}
  (2014), no.~11, 1802,
  \href{http://arXiv.org/abs/1403.7482}{{\texttt{arXiv:1403.7482}}}.

\bibitem{Bonzom:2022bpv}
V.~Bonzom, M.~Dupuis, F.~Girelli, and Q.~Pan, ``{Local Observables in
  $\operatorname{SU}_q(2)$ Lattice Gauge Theory},''
  \href{http://arXiv.org/abs/2205.13352}{{\texttt{arXiv:2205.13352}}}.

\bibitem{Freidel:2009ck}
L.~Freidel and E.~R. Livine, ``{The Fine Structure of SU(2) Intertwiners from
  U(N) Representations},'' J. Math. Phys. {\bf 51} (2010) 082502,
  \href{http://arXiv.org/abs/0911.3553}{{\texttt{arXiv:0911.3553}}}.

\bibitem{Girelli:2017dbk}
F.~Girelli and G.~Sellaroli, ``{SO*(2N) coherent states for loop quantum
  gravity},'' J. Math. Phys. {\bf 58} (2017), no.~7, 071708,
  \href{http://arXiv.org/abs/1701.07519}{{\texttt{arXiv:1701.07519}}}.

\end{thebibliography}\endgroup

\end{document}